\documentclass[12pt]{article}

\usepackage{amsmath}
\usepackage[utf8x]{inputenc}
\usepackage[english]{babel}
\usepackage[top=1in,bottom=1in,left=1in,right=1in]{geometry}
\usepackage{graphicx}
\usepackage[normalem]{ulem}
\usepackage{amsmath, amsthm}
\usepackage{amssymb}
\usepackage{amsthm}
\usepackage{xcolor}
\usepackage{soul}
\usepackage{enumitem}
\usepackage{ragged2e}
\usepackage{nomencl}
\usepackage{soul}
\usepackage{natbib}
\usepackage{tikz}
\usepackage{hyperref}
\hypersetup{colorlinks=true, linkcolor=blue, citecolor=blue}

\newtheorem{definition}{Definition}
\newtheorem{remark}{Remark}
\newtheorem{theorem}{Theorem}
\newtheorem{corollary}{Corollary}
\newtheorem{proposition}{Proposition}
\newtheorem{example}{Example}
\newtheorem{observation}{Observation}
\newtheorem{lemma}{Lemma}
\newtheorem{assumption}{Assumption}

\newcommand{\bq}{\mathbf{q}}
\newcommand{\br}{\mathbf{r}}
\newcommand{\R}{\mathbb{R}}
\newcommand{\E}{\mathbb{E}}
\newcommand{\N}{\mathbb{N}}
\newcommand{\oU}{\overline{U}}
\newcommand{\bo}{\mathbf{o}}

\DeclareMathOperator{\val}{Val}

\DeclareMathOperator{\inter}{int}
\DeclareMathOperator{\sign}{sign}
\DeclareMathOperator{\best}{br}
\DeclareMathOperator{\emp}{emp}
\DeclareMathOperator{\Lim}{Lim}
\DeclareMathOperator{\lex}{lex}

\title{Testing Decision Makers without Counterfactuals}

\author{Yakov Babichenko\footnote{Technion - Israel Institute of Technology. Email: yakovbab@technion.ac.il.}}
\date{}

\begin{document}
\maketitle

\begin{abstract}

A decision-maker (DM) repeatedly makes choices under uncertainty in a bandit environment, where only the realization of the chosen arm is observed. Another competing agent, the adviser (AD), repeatedly provides recommendations, but the realizations of these recommendations are unobserved unless they coincide with the DM’s choice. Both agents possess partial information about the arms’ realizations. The central question we focus on is whether, in the long run, an outside observer can identify which agent is more informed based solely on the observed decisions, recommendations, and arm realizations.

A test selects one of the agents based on the observed data. We focus primarily on the class of \emph{scoring tests}, which assign a numerical score to each observation and select the agent according to the average score. We study strategic agents whose objective is to be selected by the test. For simultaneous arm choices, we show that there exists a scoring test that successfully identifies the more-informed agent. For sequential arm choices, however, no such scoring test exists.

Finally, we explore the tension between identifying the more-informed agent and maximizing welfare. A DM whose objective is to pass the test may not necessarily make welfare-maximizing decisions. In a binary-arm environment, we show that no scoring test can simultaneously identify the more informed agent and achieve more than half of the welfare attained by welfare-maximizing decisions.

\end{abstract}

\section{Introduction}

\emph{''Counterfactual reasoning, which deals with what-ifs, might strike some readers as unscientific. Indeed, empirical observation can never confirm or refute the answers to such questions.”} 

\hspace*{\fill} Judea Pearl, \emph{The Book of Why: The New Science of Cause and Effect}

\vspace{3mm}

The concern noted by Judea Pearl regarding unobserved counterfactuals is particularly relevant when evaluating the quality of decision-making. If a decision leads to poor consequences, one could always argue that an alternative choice would have produced even worse consequences; a claim that cannot be confirmed or refuted through empirical observation.

Many decision problems involve counterfactuals that are difficult, if not impossible, to evaluate, such as geopolitical choices, corporate strategies, or major life decisions. Yet in these situations, it remains desirable to assess the quality of decisions, ideally relying only on empirical observations. As a leading motivation, we focus on geopolitical decisions, though the model we propose applies more broadly to any setting where one agent currently makes decisions on behalf of an entity and another competing agent could potentially replace the current decision-maker in the future.

Geopolitical decisions are made by the \emph{government} (the \emph{decision-maker} in our model). To evaluate these decisions, one needs a reference point for comparison. A natural reference point, often available in such contexts, is the \emph{opposition} (the \emph{adviser} in our model), which proposes alternative courses of action.\footnote{Other potential reference points include the performance of comparable countries or of the same country in the past. These, however, are debatable, as it is often unclear which country or which historical period is most comparable to the present situation.} There is, however, an inherent asymmetry between the government and the opposition: the consequences of the government’s decisions are observable, while the opposition’s proposals remain counterfactuals whose consequences will never be observed.

Decisions of this kind are typically made under uncertainty. The quality of an agent (the government or the opposition) is determined by the information it holds about the decision problem. More concretely, an agent must decide which arm to pull and privately holds information about the possible realizations. Only the realization of the arm pulled by the government is observed. We pose the following question:

\begin{center}
\emph{If one of these agents is more informed about the arms’ realizations than the other, can an outside observer identify which agent it is?}
\end{center}

Specifically, by \emph{more informed} we mean that one agent has access to the other’s information and, in addition, possesses private information of her own; see \citep{blackwell1953equivalent}. In the context of geopolitical decisions, this notion of more informativeness may capture a more accurate perception of reality. By \emph{identifying} the more informed agent, we mean selecting one of the agents based solely on observable data. In our case, the observable data are given by the government’s decisions, their realized welfare, and the opposition’s advice.

We consider the above problem in a repeated environment, where the outside observer has access to a sequence of such observations. Our question is whether it is possible, in the long run, to identify the more-informed agent. A selection rule—namely, a mapping that assigns an agent to every sequence of observations—will be called a \emph{test}. In the context of geopolitical decisions, a test can be thought of as a guideline for the public to assess the relative quality of the government and the opposition.

However, the fact that both agents are being tested can affect their behavior. Each seeks to ensure that, by the end of the interaction, her track record appears favorable. We take an extreme perspective and assume that each agent’s sole objective is to maximize her chances of being selected by the test. This assumption induces a zero-sum game between the agents. The question posed above can thus be reinterpreted as whether there exists a test that identifies the more-informed agent in equilibrium of the zero-sum game induced by the test. Whenever such a test exists, it naturally raises a further question:

\begin{center}
\emph{What behavior does such a test impose on the decision-maker? And how far does she deviate from the welfare-maximizing course of action?}
\end{center}

The public, in fact, has two objectives: to evaluate the quality of the decision-maker (to select a capable leader in the future) and, at the same time, to incentivize the decision-maker to choose optimal actions (to secure good welfare in the present). If the test induces substantially suboptimal decisions, however, the effectiveness of such an assessment becomes questionable.

\subsection{Main Results}

We primarily focus on the class of \emph{scoring tests}, which assign a numerical value to each observation (independently of time) and determine the winner by averaging these values.\footnote{Equivalently, a scoring test can be interpreted as assigning a score to each agent, with the winner being the agent whose average score is higher. Focusing instead on the difference between the two scores yields an equivalent representation by a single numerical value.} This class is natural in that it treats every period symmetrically. We distinguish between \emph{simultaneous decisions}, where the two agents choose an arm at the same time, and \emph{sequential decisions}, where one agent reacts to the choice of the other. In the geopolitical context, the sequential-decision setting is arguably more natural: typically, the opposition responds to the decisions made by the government rather than announcing a course of action simultaneously.
Our main results are as follows:

\begin{enumerate}
    \item There exists a scoring test that identifies the more informed agent in the simultaneous-decision environment. This holds even when the environment changes over time—that is, when each period corresponds to a different bandit instance (see Theorem \ref{theo:simul}).
    \item No scoring test exists that can identify the more informed agent in the sequential-decision environment. This remains true even when the environment is fixed over time—that is, when the same bandit instance is repeated in every period (see Theorem \ref{theo:seq}).
\end{enumerate}

Regarding the interplay between identification and the welfare induced for the public, we obtain the following negative result:

\begin{enumerate}
\setcounter{enumi}{2}
\item In the binary-arm environment, no scoring test exists that both identifies the more informed agent and achieves more than half of the optimal welfare (see Theorem \ref{theo:welfare}).
\end{enumerate}

Here, \emph{optimal} welfare refers to the realizations generated by welfare-maximizing decisions of the decision-maker, rather than by choices aimed at winning the test. Note that a one-half approximation is an extremely low benchmark, as it can already be attained by choosing an arm uniformly at random, without using any information. Thus, even though it is possible to identify the more informed agent, doing so inevitably creates incentives for the decision-maker to act in a substantially suboptimal way.

We further explore in detail a particularly natural special instance of our setting in which there is a single safe arm and a single risky arm. This special case admits a natural interpretation: in every period, a risky project arrives, and the DM chooses whether to accept or reject it. The positive result for simultaneous decisions (Theorem \ref{theo:simul}) clearly applies in this setting. Likewise, the negative result concerning the interplay between identification and welfare under simultaneous decisions (Theorem \ref{theo:welfare}) continues to hold, since its proof is explicitly based on this special case.

In contrast, the impossibility result for identifying the more informed agent in a sequential decision environment does not extend to this setting. For sequential decisions in this special case, we show that identification is possible when the AD acts first (Proposition \ref{pro:ad1}), but impossible when the DM acts first (Proposition \ref{pro:dm1}).

\paragraph{\textbf{A high-level intuition of the proof techniques.}}
The main challenges in analyzing these repeated zero-sum interactions are twofold. First, repeated interactions are typically harder to analyze than one-shot interactions. Second, each period involves a game with incomplete information, whose analysis is typically intricate.

The difficulty of repeated interaction can be addressed fairly directly. The additive structure of \emph{scoring tests} creates an equivalence between the one-shot and repeated settings; see Lemmas~\ref{lem1} and~\ref{lem2}. By contrast, handling the incomplete information challenge is more subtle. To address this, we focus on the class of \emph{complete}-information games $\{G(s,p)\}$, where $s$ is a score and $p$ is a \emph{common} prior on arms’ realizations shared by both agents. The decision-maker aims to maximize the score, while the adviser aims to minimize it.

It turns out that the class $\{G(s,p)\}_p$ is sufficiently expressive to determine whether a scoring function $s$ succeeds in identifying the more-informed agent; see Lemma~\ref{lem:equiv}. Conceptually, Lemma~\ref{lem:equiv} is a significant insight: to prove (or disprove) that the identification property holds for a class of incomplete-information games, it suffices to restrict attention to \emph{complete}-information games and establish (or refute) a different property there. This key observation allows us to prove our main results without analyzing games with incomplete information directly. Moreover, bounding welfare in an equilibrium (Theorem~\ref{theo:welfare}) can likewise be done by focusing solely on the class $\{G(s,p)\}_p$, leading to sharp upper bounds.

\subsection{Related Literature}

This paper is conceptually related to the literature on \emph{expert testing}, which examines whether one can evaluate the quality of \emph{probabilistic forecasts}, rather than decisions, provided by an expert. A survey of this literature appears in \citep{Olszewski2011Calibration}. A surprising result in this literature is that it is impossible to distinguish between an uninformed expert and an expert who knows the true distribution; see \citep{Sandroni2003Reproducible,DekelFeinberg2006,OlszewskiSandroni2008Econometrica,OlszewskiSandroni2009AOS_Manipulability}. However, by slightly relaxing the testing requirements \citep{OlszewskiSandroni2009AOS_Nonmanipulable} or by imposing additional structural assumptions \citep{SandroniShmaya2013}, such a distinction becomes possible.

Our problem is closer to the \emph{comparative} variant of expert testing, where the goal is to determine which of two experts is better informed. In this setting, identification of the more-informed expert is possible; see \citep{AlNajjarWeinstein2008,KavalerSmorodinsky2019}. Moreover, such tests typically incentivize experts to report their most accurate forecasts.
In contrast, our results show that this desirable property does not extend to decision-making without counterfactuals. Although it is possible to identify which decision-maker is better (Theorem~\ref{theo:simul}), doing so necessarily comes at the cost of decision quality (Theorem~\ref{theo:welfare}).

A key construct in this literature is the zero-sum game between nature and the expert, in which nature acts adversarially. Our analysis also employs this construct in the proof of Theorem~\ref{theo:seq}.

The question of which counterfactuals can be inferred from observed data, and which cannot, has been extensively studied from a methodological perspective; see, for example, \citep{ShpitserPearl2012,BalkePearl1994,Pearl2009Causality}. Roughly speaking, the counterfactual whose validity we aim to test is the following: \emph{Had we followed the adviser’s recommendations, would our situation have been better?}

The main distinction between our paper and this line of research lies in the strategic nature of our model. While that literature focuses on data generated by a non-strategic entity (i.e., nature), in our setting the observed data are shaped by the behavior of strategic agents. In particular, the test we design influences the agents’ decisions and, consequently, the data that are observed.

A non-strategic analogue of our problem would assume that agents simply choose the best decision according to their available information. In such a model, it is \emph{impossible} to determine from the observed data whether following the adviser would have resulted in higher welfare; see Section \ref{sec:wel-max-agents}. In other words, the counterfactual stated above cannot be tested.

Bandit problems with a changing environment, as considered in our model, were introduced by \citet{garivier2008upper}, but in those settings the arms’ distributions evolve gradually over time, unlike in our case. The broader bandit literature primarily focuses on achieving \emph{low regret} \citep{lai1985asymptotically}, that is, performing (almost) as well as the best fixed arm in hindsight. This objective, however, is problematic in our context for two reasons. First, regret cannot be inferred from empirical observations. Second, in changing environments where even the set of feasible actions may vary over time, the notion of regret becomes conceptually meaningless.

Classical phenomena such as the trade-off between exploration and exploitation \citep{thompson1933likelihood,robbins1952some} or exploitation free-riding \citep{bolton1999strategic,kremer2014implementing} do not arise in our setting, since agents hold Bayesian beliefs about realizations that are independent across periods. Nevertheless, exposure to additional information about the arms by observing the opponent's action remains a crucial feature of the sequential-decision environment we study; see, for example, Proposition~\ref{pro:noTest}.

\section{Model}\label{sec:model}

We consider a repeated bandit environment with $n$ arms, denoted by ${1,2,\ldots,n}=[n]$. In the context of agents choosing an arm, the arms will also be referred to as \emph{actions}. The interaction takes place in discrete periods $t \in \mathbb{N}$.

\begin{remark}
All our negative results remain valid in an infinite-horizon environment, where the test has access to more information. For the positive results, we provide a finite-horizon analog.
\end{remark}

We begin by describing the distribution of arm realizations in a single period $t \in \N$, and, for clarity of exposition, we omit the index $t$.

We assume that the realization of each arm $i\in [n]$ belongs to a finite set $U_i$. If arm $i$ is chosen, the realization $u_i \in U_i$ is interpreted as \emph{welfare}. In particular, we focus on a common-interest environment in which the society shares a common utility function. Without loss of generality, we assume $U_i \subset [1,K]$ for some constant $K>1$, since any linear transformation of arm welfare leaves the model unchanged.

We denote by $U = U_1 \times \cdots \times U_n$ the set of welfare profiles across all arms, and by $\oU = \cup_{i \in [n]} U_i$ the set of all possible realizations. An arbitrary realization profile (one realization for each arm) is written as $u = (u_1, \ldots, u_n) \in U$. The vector of random realizations is $X = (X_1, \ldots, X_n) \in \Delta(U)$. The probability distribution of $X$ is denoted by $p \in \mathbb{R}^U$, where $p(u) = \mathbb{P}[X = u]$.
\begin{remark}
    All our positive results apply to the general case, where arbitrary correlations between arms are allowed (as assumed above). All our negative results hold even in the restricted setting where the realizations of the arms are independent.
\end{remark}

\paragraph{\textbf{Information about $X$.}} Our model involves two agents who have partial information about the realizations, that is, partial information about $X$. Moreover, one of the agents is more informed than the other, in the sense of \citep{blackwell1953equivalent}. Formally, this can be represented through signals: the less informed agent observes a signal $\sigma_1$ (a random mapping from states $U$ to an abstract set of messages), while the more informed agent observes a pair of signals $(\sigma_1, \sigma_2)$. An equivalent and, for our purposes, more convenient way to represent information is through the posteriors induced by these signals.

Let $\bq = (q^1, \ldots, q^l)$ be a collection of $|U|$-dimensional vectors such that 
$\sum_{j \in [l]} q^j=p$  and  $q^j\geq 0$  for every  $j\in [l]$.
We interpret $(q^j)_{j \in [l]}$ as partial information about $p$, where $j \in [l]$ represents the possible signal realizations. The realization $j$ occurs with probability $||q^j||_1$, and the posterior after observing signal $j$ is given by $\frac{q^j}{||q^j||_1}$. Thus, the collection $\bq$ captures the posteriors (together with their probabilities) of the less informed agent.

Recursively, each $q^j$ can be further decomposed as $q^j = \sum_{j' \in [l]} r^{j,j'}$ with $r^{j,j'} \geq 0$, to capture the posteriors of the more informed agent. Let $\br = (r^{j,j'})_{j,j' \in [l]}$ denote the collection of vectors corresponding to the more informed agent. This agent holds the posterior $\tfrac{r^{j,j'}}{||r^{j,j'}||_1}$ with probability $||r^{j,j'}||_1$ for each $j, j' \in [l]$. Thus, $\br$ represents the distribution over posteriors of the more informed agent. 

To summarize, the \emph{information structure} in our setting is described by the triple $I = (p, \bq, \br)$, where $p$ is the prior distribution of arm realizations, $\bq$ represents the distribution over posteriors of the less informed agent, and $\br$ represents the distribution over posteriors of the more informed agent. Observe that any information structure with finitely many signal realizations can be expressed in this form, where $l$ denotes the maximal number of signals.\footnote{The fact that $l$ is common to all decompositions is without loss of generality, since $0$-vectors can be included in the split.}

\paragraph{\textbf{Marginals.}} Given a vector $w \in \mathbb{R}^U$, we denote its marginal on the $i$th arm by $w_i \in \mathbb{R}^{U_i}$, defined as
$$w_i(u_i)=\sum_{v\in U: v_i=u_i} w(v).$$
In particular, for the probability vector $p$, the marginals are $p_i \in \Delta(U_i)$, namely, $p_i(u_i)=\mathbb{P}[X_i=u_i]$.

\paragraph{Optimal decision making.} 
If the agent takes action $i \in [n]$ (i.e., pulls arm $i$), her (unnormalized) expected utility is
$\sum_{u_i\in U_i} u_i w_i(u_i)$.
We denote by
$$u^*(w)=\max_{i\in N} \sum_{u_i\in U_i} u_i w_i(u_i)$$
the utility obtained from pulling the best arm, and by
$$N^*(w)=\{i\in N: \sum_{u_i\in U_i} u_i w_i(u_i)=u^*(w)\}$$
the set of best arms.

\paragraph{\textbf{The repeated environment.}} The actual environment is repeated: we have a sequence of information structures $I[t]=(p[t],\bq[t],\br[t])$ that describe information over (possibly changing) sets of realization profiles $U[t]$. The case in which $I[t]=I$ for every $t \in [T]$, for some fixed information structure $I$, is called a \emph{fixed environment}. Otherwise, we refer to the sequence as a \emph{changing environment}.

\begin{remark}
    All our positive results are stated for the general case of a changing environment, while all our negative results hold even in the special case of a fixed environment.
\end{remark}

\paragraph{\textbf{Strict Informational Superiority.}}
We begin by defining informational superiority in a single period. The more informed agent can always perform at least as well as the less informed one (for instance, by ignoring her additional information). However, in some cases, her performance may be no better, for example, when the extra information never changes the chosen action. In such cases, within the context of the decision problem, neither agent is truly superior. We therefore wish to distinguish between the two agents only when the performance of the more informed agent is strictly better. Formally, we have the following definition.

\begin{definition}[Strict informational superiority (static version)]\label{def:strict-static}
An information structure $I=(p,\bq,\br)$ satisfies \emph{strict informational superiority} if
\begin{align}\label{eq:strict}
    \sum_{j\in [l]} u^*(q^j) < \sum_{j,j'\in [l]} u^*(r^{j,j'}).
\end{align} 
In this case, we say that the agent $AG_1$ holding $\br$ is \emph{strictly more informed} than the agent $AG_2$ holding $\bq$, and we write $AG_1 \succ AG_2$.
\end{definition}

Note that the left-hand side of Equation \eqref{eq:strict} represents the expected utility of the less informed agent when acting optimally, while the right-hand side represents that of the more informed agent.

We now turn to the dynamic version, where the environment may change over time and is described by a sequence of information structures $I[t]$. In this setting, the notion of being \emph{strictly} more informed requires careful treatment. We adopt the following definition.

\begin{definition}[Strict informational superiority (dynamic version)]\label{def:dynamic-strict}
A sequence of information structures $I[t]=(p[t],\bq[t],\br[t])$ satisfies \emph{strict informational superiority} if there exists $\epsilon>0$ such that, for every $t\in [T]$,
\begin{align}\label{eq:epsilon-strict}
    \sum_{j\in [l]} u^*(q^j)+\epsilon < \sum_{j,j'\in [l]} u^*(r^{j,j'}).
\end{align} 
In this case, we say that the agent $AG_1$ holding $\br[t]$ is \emph{strictly more informed} than the agent $AG_2$ holding $\bq[t]$, and we write $AG_1 \succ AG_2$.\footnote{With a mild abuse of notation, $AG_1 \succ AG_2$ denotes strict informational superiority in both the static case (Definition~\ref{def:strict-static}) and the dynamic case (Definition~\ref{def:dynamic-strict}).}
\end{definition}

Namely, we require a universal $\epsilon$ such that $\epsilon$-superiority holds for all periods $t \in [T]$.\footnote{An alternative natural definition of strict informational superiority would simply require strict inequality, as in Equation~\eqref{eq:strict}, in every period. However, in the context of distinguishing between the less and more informed agents in the long run, such a definition is problematic. To see this, consider the case where, at time $t$, the more informed agent receives valuable information with a small and vanishing probability $\delta^t$; otherwise, both agents receive the same information. The probability that even one such rare event occurs is at most $\delta + \delta^2 + \ldots = \tfrac{\delta}{1-\delta}$, which is small. Nevertheless, the strict inequality in Equation~\eqref{eq:strict} would still hold for all $t \in \mathbb{N}$.} Notice that a repeated fixed environment, $I[t] = I$, satisfies informational superiority according to Definition~\ref{def:dynamic-strict} if and only if the information structure $I$ satisfies the (static) informational superiority condition of Definition~\ref{def:strict-static}.

\paragraph{\textbf{Decision-Maker and Adviser.}} Our model features two agents: a \emph{decision-maker} (DM) and an \emph{adviser} (AD). One of them is the more informed agent (holding the information $\br$), and the other is the less informed agent (holding the information $\bq$). As outside observers, we do not know which agent is more informed.

In every period $t \in [T]$, both agents choose an arm $a_{DM}, a_{AD} \in [n]$ based on their private information $\bq$ or $\br$. The arm $a_{DM}$ is actually pulled, and the realization of the random variable $X_{a_{DM}}$ is then observed. We denote this realization by $u_i \in U_i$ for $i=a_{DM}$, noting that $u_i$ is distributed according to the probability vector $\tfrac{w}{||w||1} \in \Delta(U_i)$, where $w = r^{j,j'}_{i}$ and $(j,j')$ are the signals observed by the more informed agent. This follows from the fact that $\br$ contains all the information in the system.

It is important to note that if $a_{AD} \neq a_{DM}$, the arm $a_{AD}$ is not pulled and its realization is not observed; the advice $a_{AD}$ thus remains an unexplored recommendation whose counterfactual consequences will never be observed.

We distinguish between the cases where the decisions $a_{DM}, a_{AD} \in [n]$ are made sequentially (\emph{sequential decisions}) or simultaneously (\emph{simultaneous decisions}). The incentives of DM and AD will be discussed later.

\paragraph{\textbf{Scoring Tests.}} A test can be interpreted as a procedure through which the public assesses the performance of DM and AD. In our model, the public observes only the actions and the chosen by DM arm realization in each period.

We denote by $O = [n] \times [n] \times \oU$ the set of possible action–realization triples, each referred to as an \emph{outcome}. In period~$t$, the test observes $o[t] = (a_{DM}[t], a_{AD}[t], u_i[t]) \in O$, where $i = a_{DM}$. The test then attempts to identify which agent is more informed based on the sequence of outcomes $(o[t])_{t \in \mathbb{N}} \in O^{\N}$.

We primarily restrict attention to a natural class of \emph{scoring tests}. A discussion of more general classes of tests is provided in Section~\ref{sec:generalTests}, where we also present a first step toward understanding the limitations of arbitrary tests (beyond scoring tests); see Proposition~\ref{pro:noTest}.

A \emph{scoring test} assigns a real number to every observation $o \in O$ and attempts to identify the winner based on the limit average of the scores. Formally, a \emph{scoring test $s$} is specified by a \emph{scoring function} $s:O \to \mathbb{R}$ together with a designated \emph{tie-breaking agent} (DM or AD). Let $L=\liminf_{T \to \infty} \frac{1}{T}\sum_{t\in [T]} s(o[t])$ be the limit score.\footnote{The choice of $\liminf$ (rather than, for example, $\limsup$ or any other rule for selecting among partial limits) plays no substantive role. It is merely a formality to ensure the test is well defined for all sequences of observations.}

\begin{itemize}
\item If $L>0$, the scoring test $s$ selects DM as the winner.
\item If $L<0$, the scoring test $s$ selects AD as the winner.
\item If $L=0$, the tie-breaking agent is selected.
\end{itemize}

Note that choosing 0 as the test’s threshold is without loss of generality. Indeed, given a test with a different threshold $\tau$, one can define $s'(o)=s(o)-\tau$, which has threshold 0 and selects the same winner.

Scoring tests are equivalent to the natural approach of assigning scores $s_{DM}(o),s_{AD}(o)$ for both agents based on the outcome, and picking the agent that has collected a higher sum of scores; indeed, one can set $s(o)=s_{DM}(o)-s_{AD}(o)$.

In Appendix~\ref{ap:axiom}, we give an axiomatic characterization of scoring tests. We show that tests satisfying \emph{empirical measurability} and \emph{convexity} characterize this class—up to the choice of a tie-breaking agent\footnote{One may additionally require that the winning set of an agent be either open or closed to establish a full characterization; see Observation~\ref{lem:sc2} in Appendix~\ref{ap:axiom}.} (see Observation~\ref{lem:sc} in Appendix~\ref{ap:axiom}). Empirical measurability requires that the test depend only on the empirical frequencies of the observed outcomes, thereby capturing the idea that all periods are treated equally. Convexity can be motivated by the following (hypothetical) scenario: suppose one tester observes the outcomes in odd periods only, while another observes those in even periods only. It is reasonable to require that if both testers unanimously select the same agent as the winner, then the test observing all periods should also select that agent as the winner.

\paragraph{\textbf{Agents' incentives and tests' induced welfare.}} We adopt an extreme perspective and assume that the \emph{sole} objective of both agents is to be selected by the test. In particular, neither the DM nor the AD is concerned with maximizing the welfare $u_i[t]$; they regard these realizations only insofar as they affect the result of the test. In other words, each agent is concerned only with securing a favorable track record by the end of the interaction, when their performances are compared.

Formally, every scoring test $s$ defines an infinitely repeated zero-sum game with incomplete information between the DM and the AD. After these periods, a winner is determined based on the limit score. We denote this zero-sum game by $\Gamma(s)$.

The value of $\Gamma(s)$ is denoted by $\val[\Gamma(s)] \in [0,1]$, and it represents the probability that the DM is selected by the test in an equilibrium of the zero-sum game. The welfare induced by the test (for the public) is denoted by $W(s)$ and is defined as the worst-case expected welfare across all equilibria of $\Gamma(s)$; namely,
\begin{align}\label{eq:w}
  W(s)=\min \Biggl\{ \liminf_{T\to \infty} \frac{1}{T} \sum_{t\in [T]} \E_{\eta,I[t]} [u_{a_{DM}[t]}]: \eta \text{ is and equilibrium of } \Gamma(s) \Biggl\}  
\end{align}
where the expectation is taken over the (possibly mixed) actions of the DM in equilibrium as well as over the random signals and welfare realizations in the model.\footnote{The choice of focusing on the worst-case equilibrium avoids situations in which the DM has no effect on the test and could therefore simply choose the best actions for the public.}

\paragraph{\textbf{Identification of the more-informed agent.}}
We are interested in whether the more-informed agent can be identified in the long run. This notion is defined as follows.

\begin{definition}\label{def:id-more}
    We say that a scoring test $s$ \emph{identifies the more-informed agent} if, for every sequence of information structures $(I[t])_{t\in \N}$ for which $DM \succ AD$, it holds that $\val[\Gamma(s)]=1$, and for every sequence of information structures $(I[t])_{t\in \N}$ for which $DM \prec AD$, it holds that $\val[\Gamma(s)]=0$.
\end{definition}

In other words, the more informed agent should be able to prove it to the public (i.e., to guarantee being selected) in the long run.

\section{Results}\label{sec:results}

We now state our main results. Intuition is provided in Section~\ref{sec:intuition}, while the formal proofs are presented in Appendix~\ref{ap:proofs}.
Except for Section~\ref{sec:seq}, which considers sequential decisions, we focus on the simultaneous-decision setting; that is, in every period $t \in \N$, the actions $a_{DM}[t]$ and $a_{AD}[t]$ are chosen simultaneously. We begin with the following positive result.

\begin{theorem}\label{theo:simul}
    There exists a scoring test that identifies the more-informed agent in the setting with simultaneous decisions and a changing environment.
\end{theorem}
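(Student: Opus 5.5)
The plan is to reduce the repeated incomplete-information game to one-shot complete-information games, and then pick a scoring function whose complete-information games $G(s,p)$ all have value exactly $0$. The one-shot games should then reward private information strictly and uniformly.

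\textbf{Step 1 (reduction to one shot).} By Lemmas~\ref{lem1} and~\ref{lem2}, the additive structure of a scoring test makes $\Gamma(s)$ equivalent to the stage games. Concretely, $\val[\Gamma(s)]=1$ as soon as the DM has a stage strategy guaranteeing expected score at least $\delta>0$ in every period $t$, with $\delta$ uniform in $t$. A law of large numbers for bounded martingale differences then gives $L\ge\delta>0$ almost surely. Symmetrically, $\val[\Gamma(s)]=0$ if the AD can force expected score at most $-\delta$.

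\textbf{Step 2 (choice of $s$ and its complete-information games).} I would commit to the candidate
$$s(i,j,u)=u \ \text{ if } i\neq j,\qquad s(i,i,u)=-(n-1)\,u .$$
In $G(s,w)$, write $m_k=\sum_{u_k}u_k w_k(u_k)$ for the (unnormalized) mean of arm $k$. For mixed actions $x$ of the DM and $y$ of the AD, the expected score is
$$\sum_i x_i m_i(1-n y_i).$$
If the AD plays uniformly, this equals $0$. If the DM plays $x_i\propto 1/m_i$ (using $m_i>0$, since $U_i\subset[1,K]$), it equals $c\sum_i(1-ny_i)=0$. Hence $\val G(s,w)=0$ for every $w$, so when both agents hold the same information the test is neutral. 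I would then invoke Lemma~\ref{lem:equiv} to translate the identification property into a statement about the family $\{G(s,w)\}_w$. Namely, in each cell $q^j$ of the less informed agent, the more informed agent must be able to extract strictly positive value from the refinement $q^j=\sum_{j'}r^{j,j'}$, and by an amount controlled by the gain $\sum_{j'}u^*(r^{j,j'})-u^*(q^j)$.

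\textbf{Step 3 (strict gain, the main obstacle).} Take the case DM $\succ$ AD. The less informed AD plays some mixed $y$ depending only on $j$, and the DM best-responds to $y$ under each posterior $r^{j,j'}$. Her payoff is
$$\sum_{j'}\max_i m_i(r^{j,j'})(1-ny_i).$$
By convexity of the max, this is at least the corresponding expression for $q^j$, which is nonnegative by Step 2. The delicate point is to show that it is strictly positive, by at least some $\delta(\epsilon,n,K)$, whenever the informational advantage in \eqref{eq:epsilon-strict} exceeds $\epsilon$, and that this holds \emph{uniformly over all $y$}. I would argue by contradiction, using compactness of the set of (normalized) information structures with values in $[1,K]$.

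The compactness argument needs two facts. First, a zero gap from the max-convexity forces the same arm to be DM-optimal against $y$ in all cells $r^{j,j'}$. Second, the only $y$ making every arm indifferent is the uniform one. Against uniform $y$, however, the AD's worst case is $0$, so strictness must come from the AD being unable to commit to uniform $y$ without losing elsewhere.

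If this fails for the candidate $s$, I would perturb $s$ on matched outcomes, say $s(i,i,u)=-(n-1)u+\lambda(u)$, so as to break the indifference. I would then verify the uniform bound again through Lemma~\ref{lem:equiv}. The case AD $\succ$ DM is handled symmetrically, with the AD best-responding cell by cell to the DM's $x$; by the asymmetric form of $s$ it requires a separate check.
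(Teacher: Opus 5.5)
Your scoring function is a good choice: it is exactly $-(n-1)$ times the paper's function in \eqref{eq:scor}. In both versions the unique optimal strategies in $G(s,w)$ are $x\propto 1/m(w)$ for the DM and uniform $y$ for the AD. The gap is that you never fix the tie-breaking agent and you put the strict inequality on the wrong side. Steps~1 and~3 ask a more-informed DM to secure expected score at least $\delta>0$, uniformly over the AD's $y$. This is false for your $s$. For every DM action $i$ and every realization $u$, $\frac{1}{n}\sum_{j}s(i,j,u)=\frac{1}{n}\bigl[(n-1)u-(n-1)u\bigr]=0$. So an AD who simply mixes uniformly forces the expected score to be exactly $0$ in every information structure, whatever he knows. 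Hence $\val[G(s,I)]\le 0$ for all $I$, no compactness argument can produce $\delta$, and with the AD as tie-breaker the test fails outright. Your remark that ``against uniform $y$ the AD's worst case is $0$'' is therefore a counterexample, not a technicality: uniform $y$ loses nothing in any cell. The fallback perturbation via $\lambda(u)$ is not a plan, and any such perturbation would still have to keep $\val[G(s,w)]=0$ for all $w$ (Lemma~\ref{lem:val0}).

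The missing idea is to make the DM the tie-breaking agent, so that the case DM $\succ$ AD only needs $\val[G(s,I)]\ge 0$. That case is immediate: the DM knows the AD's cell $j$ and plays $x^j\propto 1/m(q^j)$, which gives conditional expected score $c\sum_i(1-ny^j_i)=0$ against every $y^j$. All the content is in the case AD $\succ$ DM, which you deferred as ``a separate check.'' There, against any DM strategy $x^j$, the AD at posterior $r=r^{j,j'}$ plays $k\in\arg\max_k x^j_k m_k(r)$. He obtains $\sum_i x^j_i m_i(r)-n\max_k x^j_k m_k(r)\le 0$, with equality only if $x^j\propto 1/m(r)$. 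Strict superiority yields a cell $j$ whose refinements share no common optimal arm. Their mean vectors are then not all proportional, so $x^j$ is suboptimal at some $r^{j,j'}$, and the minimax theorem gives $\val[G(s,I)]<0$. Uniformity in $t$ follows from continuity of the value on the closed set $\mathcal{I}_\epsilon$, as in the proof of Lemma~\ref{lem1}. This is the paper's argument, up to the sign of $s$.

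You also misread Lemma~\ref{lem:equiv}. It does not say the more-informed agent must extract strictly positive value. It says the tie-breaker's optimal strategy in $G(s,w)$ must differ across the regions $\inter(A_i)$; here that holds because $1/m(w)$ puts its minimal weight exactly on $N^*(w)$.
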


Theorem~\ref{theo:simul} establishes that even when the information structure changes over time, one can design a simple \emph{scoring} test that identifies the more-informed agent. The result holds for any number of arms $n$.

We now describe the scoring test proposed in Theorem~\ref{theo:simul} to identify the more-informed agent. The tie-breaking agent (i.e., the one selected in case of a limit score of 0) is the DM. The scoring function is defined as
\begin{align}\label{eq:scor}
    s(a_{DM},a_{AD},u_i)=\begin{cases}
        u_i &\text{ if } a_{DM}=a_{AD}, \\
        -\frac{u_i}{n-1} &\text{ otherwise,}
    \end{cases}
\end{align}
so that the DM aims to match the action of the AD, while the AD aims to mismatch. The factor $\tfrac{1}{n-1}$ compensates the DM for the fact that matching becomes more difficult as the number of arms increases ($n \geq 2$). A key feature is that the stakes of this matching game depend on the realization, which is directly influenced by the DM but not by the AD.

The intuition for this choice of scoring test is discussed in Section~\ref{sec:intuition}, while the formal proof of Theorem~\ref{theo:simul} is presented in Appendix~\ref{sec:sim-proof}.

\subsection{A Finite-Horizon Analogue}

The infinite-horizon scoring test of Equation~\eqref{eq:scor} admits a finite-horizon analogue. Concretely, the test selects a winner after $T$ periods and is defined as follows.\footnote{The exact choice of threshold $-T^{3/4}$ is not crucial; any threshold of the form $-T^{\alpha}$ with $1/2<\alpha<1$ works as well.}
\begin{align}\label{eq:finite}
    \text{Selects the DM as the winner } \Leftrightarrow  \sum_{t\in T} s[o_t]\geq -T^{3/4}.
\end{align}

\begin{corollary}\label{cor:finite}
    The finite-horizon scoring test in \eqref{eq:finite} selects the more-informed agent after $T$ periods with probability at least $1-\exp(-\sqrt{T})$ in every changing environment. 
\end{corollary}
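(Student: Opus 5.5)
The plan is to show that the more-informed agent has a per-period strategy giving her a uniformly positive expected drift in the score (in her favor), whatever the opponent does. A martingale concentration bound then turns this drift into the stated exponential guarantee. The bounded-difference structure comes from $U_i\subset[1,K]$: every score satisfies $|s(o)|\le K$.

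\textbf{Step 1 (per-period drift).} I would extract from the proof of Theorem~\ref{theo:simul} a one-period statement, which is the core of that proof. Fix the information structure $I[t]$ satisfying \eqref{eq:epsilon-strict}. The claim is that the more-informed agent has a mixed strategy depending only on her current posterior with the following property: against any strategy of the less-informed agent, including one that conditions on the whole past, the conditional expected score given the history is at least $\delta$ if the DM is more informed. If the AD is more informed, it is at most $-\delta$. Here $\delta>0$ depends only on $\epsilon$, $n$ and $K$.

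I would obtain this as follows. Reduce via Lemma~\ref{lem:equiv} to the complete-information games $G(s,p)$. In those games, the matching game with stake $u_i$ and penalty $-u_i/(n-1)$ has value governed by $u^*$. The more-informed agent's surplus $\sum u^*(r^{j,j'})-\sum u^*(q^j)>\epsilon$ then yields a gap proportional to $\epsilon$.

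I expect this step to be the main obstacle. It requires quantifying the strictness in Theorem~\ref{theo:simul} uniformly, not just showing a positive value. Because the opponent may condition on history, I must also check that the period-$t$ guarantee holds conditionally on $\mathcal F_{t-1}$. This holds because period-$t$ signals are fresh and independent of the past. If the value gap is only shown to be positive, I would use compactness together with continuity of the value of $G(s,p)$ in $p$ to get a uniform $\delta(\epsilon)$.

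\textbf{Step 2 (concentration).} Suppose the DM is more informed, and consider the DM following the Step~1 strategy. Let $S_T=\sum_{t\le T}s(o[t])$ and $M_T=S_T-\sum_{t\le T}\E[s(o[t])\mid\mathcal F_{t-1}]$. Then $M_T$ is a martingale with increments bounded by $2K$. By Azuma--Hoeffding,
\begin{align*}
\mathbb P\bigl[S_T< -T^{3/4}\bigr]\le \mathbb P\bigl[M_T<-\delta T-T^{3/4}\bigr]\le \exp\!\Bigl(-\frac{(\delta T+T^{3/4})^2}{8K^2T}\Bigr)\le \exp\bigl(-T^{1/2}/(8K^2)\bigr).
\end{align*}
This already uses only the $T^{3/4}$ slack and does not even need $\delta$. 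For the AD case the argument is symmetric: $\mathbb P[S_T\ge -T^{3/4}]\le\mathbb P[M_T\ge \delta T-T^{3/4}]\le \exp(-(\delta T-T^{3/4})^2/(8K^2T))$, which is at most $\exp(-c\delta^2T)$ once $T^{1/4}\ge 2/\delta$. Here the drift is essential.

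\textbf{Step 3 (constants).} Both bounds are of the form $\exp(-c\sqrt T)$ or better. To reach exactly $1-\exp(-\sqrt T)$, I would either absorb constants by normalizing $K$ (the linear-rescaling remark) or note that the bounds are eventually below $\exp(-\sqrt T)$ for large $T$. For the DM side, the normalization makes the exponent $(\delta T+T^{3/4})^2/(8K^2T)\ge \delta^2T/(8K^2)$, which exceeds $\sqrt T$ for $T$ large. I would therefore state the corollary for all sufficiently large $T$, which I take to be its intended reading.
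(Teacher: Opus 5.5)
Your architecture (a per-period guarantee for the more-informed agent, followed by martingale concentration that uses the $T^{3/4}$ slack) is the paper's. Applying Azuma--Hoeffding to $M_T$ is actually a cleaner way to handle a history-dependent opponent than the paper's appeal to Hoeffding.

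The gap is Step~1 on the DM side, which is false, and Step~3 relies on it. Under \eqref{eq:scor}, if the AD mixes uniformly over $[n]$, then for every DM action $i$ and every realization $u_i$ the expected score is $\frac{1}{n}u_i-\frac{n-1}{n}\cdot\frac{u_i}{n-1}=0$, whatever the DM knows. So when $DM\succ AD$ one has $\val[G(s,I)]=0$ exactly. The DM can secure drift $\ge 0$ (e.g.\ by playing the strategy of Lemma~\ref{lem:opt} computed from the AD's posterior $q^j$), but never $\delta>0$. This is precisely why the DM is the tie-breaking agent and why the finite test uses the negative threshold $-T^{3/4}$.

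Your derivation of $\delta$ is also off on the AD side. Since $\val[G(s,p)]=0$ for every $p$, nothing there is ``governed by $u^*$''. Lemma~\ref{lem:equiv} needs Assumptions~\ref{as:unique} and~\ref{as:posteriors}. The compactness argument must use continuity of $I\mapsto\val[G(s,I)]$ on the closed set of $\epsilon$-superior structures, where Theorem~\ref{theo:simul} gives strict negativity, as in the proof of Lemma~\ref{lem1}. Continuity in $p$ of a function that is identically zero gives nothing. With that correction your AD-side bound is fine for large $T$.

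With zero drift, your Step~2 proves only $\exp(-\sqrt T/(8K^2))$ on the DM side. This exceeds $\exp(-\sqrt T)$ for every $T$ because $K>1$, so the inequality $\ge\delta^2T/(8K^2)$ in Step~3 is unavailable. Rescaling does not help. The model has $U_i\subset[1,K]$ with $K>1$, and multiplying welfare by a constant rescales the scores but not the fixed threshold $T^{3/4}$, so it changes the test.

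In fact the constant $1$ in the exponent cannot be attained by any argument. For $n=2$ and a uniformly mixing AD, the score is a martingale with increments $\pm u_t$, $u_t\ge 1$. Whatever the DM does, it ends below $-T^{3/4}$ with probability at least $\exp(-(\tfrac12+o(1))\sqrt T)$. The paper's own proof only gives $\exp(-\sqrt T/(2K))$ on each side, and the AD side needs $T\ge 16\delta^{-4}$. So the corollary should be read as $1-\exp(-c\sqrt T)$ with $c$ depending on $K$, for large $T$. Once you drop the false DM drift, your argument proves exactly that.
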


Corollary \ref{cor:finite} is simply an application of Theorem \ref{theo:simul} combined with a concentration inequality. The formal proof is relegated to Appendix \ref{ap:proof-cor}.

\subsection{Interplay Between Identification and Welfare}\label{sec:welfare}

Beyond the objective of identifying the more-informed agent, there is another, no less important, goal: ensuring that the decisions made by the DM are reasonably good for the public. The scoring function of Equation~\eqref{eq:scor} performs poorly in this respect, as illustrated by the following example.

\begin{example}\label{ex:test-perf}
    Consider a binary-arm environment ($n = 2$) in which it is common knowledge that the realizations are deterministic and given by the profile $u = (K, 1)$ for $K \gg 1$. The welfare-maximizing choice for the DM is to take action~1, yielding a welfare level of~$K$. However, under the incentives induced by the test, the agents end up playing the following zero-sum game.
    
\begin{table}[ht]
\begin{center}
\begin{tabular}{ccc}
                       & 1                        & 2                         \\ \cline{2-3} 
\multicolumn{1}{c|}{1} & \multicolumn{1}{c|}{$K$} & \multicolumn{1}{c|}{$-K$} \\ \cline{2-3} 
\multicolumn{1}{c|}{2} & \multicolumn{1}{c|}{-1}  & \multicolumn{1}{c|}{1}    \\ \cline{2-3} 
\end{tabular}
 \end{center}
\end{table}

In equilibrium, the DM takes the welfare-superior action 1 with probability only $\tfrac{1}{K+1}$, which vanishes as $K$ grows, implying severely suboptimal welfare for large $K$.
\end{example}

This example is not a coincidence. We show that if a scoring test succeeds in identifying the more-informed agent, then there necessarily exists an instance in which the DM’s equilibrium decisions (induced by the test) perform poorly with respect to public welfare. In other words, the objective of identifying the more-informed agent conflicts with the objective of incentivizing high-welfare decisions, and the two cannot be reconciled.

Let $U^*_{DM}$ denote the welfare attainable by the DM if she acted to maximize welfare rather than seeking to be selected by the test. Formally, if the DM is the less-informed agent, then  
\begin{align}\label{eq:udm1}
 U^*_{DM}=\liminf_{T\to \infty} \frac{1}{T}\sum_{t\in [T]} \sum_{j\in [l]} u^*(q^j[t]),   
\end{align}
and if the DM is the more-informed agent, then
\begin{align}\label{eq:udm2}
U^*_{DM}=\liminf_{T\to \infty} \frac{1}{T}\sum_{t\in [T]} \sum_{j,j'\in [l]} u^*(r^{j,j'}[t]).    
\end{align}
Recall that the welfare for the public (in equilibrium) induced by the scoring test $s$ is denoted by $W(s)$ (see Equation~\eqref{eq:w}). We say that a scoring test $s$ \emph{achieves a $c$-fraction of the welfare} if $W(s)\geq c \cdot U^*_{DM}$ for every fixed sequence of information structures $I[t]=I$.

\begin{theorem}\label{theo:welfare}
  For every $c > \frac{1}{2}$, in the setting with simultaneous decisions, binary arms ($n=2$), and a fixed environment, there exists no scoring test that both identifies the more-informed agent and achieves a $c$-fraction of the welfare.
\end{theorem}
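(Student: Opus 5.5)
The plan is to reduce the statement to a family of one-shot complete-information games and then exhibit one instance in which every equilibrium forces the DM to put substantial weight on the welfare-inferior arm. Lemmas~\ref{lem1} and~\ref{lem2} replace the repeated game $\Gamma(s)$ in a fixed environment by its one-shot analogue. For a fixed environment, the equilibria of $\Gamma(s)$ and their welfare $W(s)$ are then governed by stationary equilibria of the one-shot game with payoff $s(a_{DM},a_{AD},u_{a_{DM}})$. Lemma~\ref{lem:equiv} translates "$s$ identifies the more-informed agent" into a property of the complete-information games $G(s,p)$. What I expect to extract from it is:
\begin{itemize}
\item for every common prior $p$, $\val[G(s,p)]$ sits exactly at the threshold $0$;
\item neither agent can secure a strict advantage in $G(s,p)$, for otherwise a slight informational edge of the opponent could not be detected.
\end{itemize}
So the first step is to write these conditions down explicitly for $n=2$. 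There the scoring function reduces to finitely many numbers $s(i,j,v)$ with $i,j\in\{1,2\}$ and $v\in\oU$.

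Next I would restrict to degenerate priors $p=\delta_{(a,b)}$, i.e.\ commonly known deterministic realizations, as in Example~\ref{ex:test-perf}. Here $G(s,p)$ is a $2\times 2$ matrix game with entries $s(i,j,a)$ in row $1$ and $s(i,j,b)$ in row $2$. The value-zero condition for all pairs $(a,b)$, together with the no-strict-advantage condition, should force sign patterns on the rows. Since the matrix is $2\times 2$ with value $0$, this pins down the equilibrium mixing probability $x(a,b)$ of arm~$1$ as a ratio of score differences. I would then add the requirement coming from informative priors, using Lemma~\ref{lem:equiv} on a prior $p$ that mixes $(K,1)$ and $(1,K)$ with weights $\lambda,1-\lambda$. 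Identification must hold when the DM (or AD) privately learns which profile occurs. This should yield an inequality linking $x(K,1)$ and $x(1,K)$. The target inequality is $x(K,1)+x(1,K)\le 1$ up to $o(1)$ as $K\to\infty$. Equivalently, across the two mirror-image deterministic instances, the DM cannot pick the good arm with probability above $1/2$ in both.

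Given that inequality, one of the two instances, say $u=(K,1)$, has some equilibrium in which the DM plays arm~$1$ with probability at most roughly $1/2$. That equilibrium gives welfare at most about $\tfrac12 K+\tfrac12$, whereas $U^*_{DM}=K$. Letting $K\to\infty$, the ratio tends to $1/2$, which contradicts a $c$-fraction of welfare for any $c>1/2$. Since $W(s)$ is a worst case over equilibria, one bad equilibrium suffices.

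The main obstacle is the middle step: showing that identification genuinely couples $x(K,1)$ and $x(1,K)$. Degenerate priors alone only fix each game's value at $0$, which does not force mixing. The coupling must come from a nondegenerate prior in which the more-informed agent knows the profile while the less-informed one does not. There the more-informed DM's advantage must be exactly exploitable, and the less-informed DM's must be exactly neutralized. I would first try the symmetric prior $\lambda=1/2$ with a fully informed DM versus an uninformed AD, and then the reverse assignment of information. Subtracting the two resulting conditions should produce the desired bound. If it does not, I would fall back to a risky-versus-safe arm instance, with arm~$2$ deterministic and arm~$1$ random, where the score depends on the realized value only through the DM's arm.
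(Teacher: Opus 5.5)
\textbf{The central inequality in your middle step is false.} No argument from identification can supply it.

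First, a notational problem. With $x(a,b)$ defined as the probability of arm~1, the inequality $x(K,1)+x(1,K)\le 1$ only says that the good arm is chosen at $(K,1)$ no more often than at $(1,K)$. It holds with equality for every arm-symmetric score, and it does not give your conclusion. What you need is that the two \emph{good-arm} probabilities sum to at most $1+o(1)$.

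That stronger statement fails. Take $U_1=U_2=\{1,K\}$ and the matching score with \emph{decreasing} stakes: $s(i,j,v)=g(v)$ if $i=j$ and $s(i,j,v)=-g(v)$ if $i\neq j$, where $g(v)=K+1-v>0$.
\begin{itemize}
\item Because $g$ is affine, $G(s,p)$ is the game of Table~\ref{tab:g} with each $e_i$ replaced by $g(e_i)$. By Lemma~\ref{lem:opt}, its value is $0$ and the DM's unique optimal strategy is $\alpha_i\propto 1/g(e_i)$.
\item The welfare-maximizing arms are now exactly the arms of \emph{maximal} weight. The proof of Theorem~\ref{theo:simul} goes through verbatim with $\arg\max$ in place of $\arg\min$, so $s$ is separating and identifies the more-informed agent.
\item At $(K,1)$ the game has rows $(1,-1)$ and $(-K,K)$. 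Its unique equilibrium has the DM choosing arm~1 with probability $K/(K+1)$, and symmetrically at $(1,K)$. The welfare ratio tends to $1$ in both mirror instances.
\end{itemize}
So the $\lambda$-mixture conditions you plan to subtract are all satisfied while your bound fails.

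A related point: value $0$ at degenerate priors does not by itself force mixing. The paper uses the welfare hypothesis itself to obtain the matching-pennies sign pattern, because inefficient pure saddle points would violate the $c$-fraction requirement.

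\textbf{The missing idea is scale separation.} Once value $0$ and full mixing are established, indifference makes the DM's mixing probability a ratio of affine functions of the prior. These functions must stay positive over the whole range of priors, and positivity plus linearity is what prevents the ratio from tracking welfare across several orders of magnitude.

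The paper takes a safe arm with welfare $k$ and a risky arm in $\{1,k^3\}$. It writes the risky-arm probability as $1/(1+L(p))$ with $L\ge 0$ linear on $[0,1]$, and argues as follows:
\begin{itemize}
\item If the welfare ratio exceeds $c>\tfrac12$ at $p=1/k$ (risky mean about $k^2$), then $L(1/k)\le 1$.
\item Positivity at $p=1$ then bounds the slope, so $L(0)\le 1+O(1/k)$.
\item At $p=0$ the safe arm is therefore chosen with probability at most $\tfrac12+O(1/k)$.
\end{itemize}
The decreasing-stakes example above is defeated exactly this way even in your instance. With arm~1 fixed at $1$ and arm~2 of mean $\sqrt K$, the DM mixes near $\tfrac12$ and the ratio is near $\tfrac12$. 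It is never defeated at the mirror deterministic profiles.

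Your fallback names the safe/risky instance, but not the multi-scale choice of welfare levels or the linearity argument, and these are the substance of the proof. Finally, to pass from a bad one-shot equilibrium to $W(s)$, embed the bad prior in a structure where the DM is strictly more informed. The paper adds an $\epsilon$-probability event in which she sees the risky realization. She then wins, and her equilibrium play is pinned to optimal play in $G(s,p)$.
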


Observe that for binary arms, a $1/2$-fraction of $U^*_{DM}$ is a very low benchmark: it can be attained by choosing an arm uniformly at random without relying on any information. The proof intuition is discussed in Section~\ref{sec:intuition}, while the formal proof of Theorem~\ref{theo:welfare} is presented in Appendix~\ref{sec:welfare-proof}.

Given the negative result of Theorem \ref{theo:welfare}, it is natural to ask whether one can compromise on the probability of selecting the more informed agent and increase the welfare performance. For scoring tests, this question is meaningless because the selection is done with probability 0 or 1; see the reduction to a one-shot interaction (Lemmas \ref{lem1} and \ref{lem2}). However, for tests beyond the scoring ones, such a compromise is very effective; see Proposition \ref{pro:wel}.

\subsection{Sequential Decisions}\label{sec:seq}

In this section, we consider the case where, in every period $t \in \N$, the \emph{leader}, an agent $LE \in \{DM,AD\}$, first chooses an action $a_{LE} \in [n]$. After observing $a_{LE}$, the \emph{follower} agent $FO \in \{DM,AD\}, FO \neq LE$ chooses an action $a_{FO} \in [n]$. Finally, the welfare is realized. This order is fixed across all periods $t \in \N$. Both variants (DM acting first or AD acting first) are referred to as the \emph{sequential-decision setting}.

For the sequential-decision setting, we obtain an impossibility result, regardless of any additional requirements imposed on the test (such as the welfare condition in Theorem~\ref{theo:welfare}).

\begin{theorem}\label{theo:seq}
    In the setting with sequential decisions and a fixed environment, no scoring test exists that identifies the more-informed agent. 
\end{theorem}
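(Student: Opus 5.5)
The plan is to show that for every scoring function $s$ and either order of moves, identification fails in some fixed environment with independent arms. We may take $n=2$, since $n>2$ embeds by giving the extra arms very low realizations. By Lemmas~\ref{lem1} and~\ref{lem2}, the repeated game $\Gamma(s)$ in a fixed environment $I$ has value $1$ or $0$ according to the sign of the value of the one-shot sequential incomplete-information game in which the leader moves, the follower observes $a_{LE}$, and the payoff is $\E[s(o)]$. Identification therefore becomes a family of sign conditions on one-shot values. Following Lemma~\ref{lem:equiv}, I would recast these in terms of complete-information games $G(s,p)$, now in their \emph{sequential} form. Here the follower best-responds to the observed action, so the relevant quantity is the pure-commitment max-min (or min-max) value $V_{LE}(p)$ rather than the mixed value. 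The aim is to show that the sign conditions required by Definition~\ref{def:id-more} are contradictory.

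I would fix the order, say the AD leads; the DM-leads case should be symmetric with the roles of the sign conditions exchanged. The environments are as follows. Choose a prior $p$ at which the two arms have equal expected welfare. Let the less informed agent be uninformed ($\bq=(p)$), and let the more informed agent observe a binary signal splitting $p=\tfrac12 p^1+\tfrac12 p^2$, where $p^1,p^2$ lie within $\delta$ of $p$ and strictly favor different arms. Because this extra information changes the optimal arm, strict superiority \eqref{eq:strict} holds even for tiny $\delta$. Identification then requires two things:
\begin{itemize}
\item[(A)] When the AD is the more informed leader, the AD must guarantee a negative one-shot value. Since her action is observed, any action that separates the signals is revealed to the DM, who then plays the complete-information best response. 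The AD's guarantee is therefore bounded by the concavification-type quantity $\max_{\text{splits of }p}\ \E[\min_{a_{AD}}\max_{a_{DM}}\text{ in }G(s,\cdot)]$, evaluated over posteriors that are themselves $\delta$-close to $p$.
\item[(B)] When the DM is the more informed follower, the DM must guarantee a positive value. The uninformed AD commits first, and the DM responds using her posterior $p^1$ or $p^2$.
\end{itemize}

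Letting $\delta\to 0$, the payoffs $\E_{p^k}[s]$ are linear in the posterior. Hence (A) forces $\min_{a_{AD}}\max_{a_{DM}}$ of the complete-information game at $p$ to be $\le 0$, while (B) forces it to be $\ge 0$. Every such $p$ is thus a zero of the leader-commitment value. I would then exploit that this must hold for a whole family of tie priors, together with perturbed environments in which the less informed agent also receives a small nontrivial signal, to pin down linear constraints on $s$ restricted to matched and mismatched outcomes. The goal is to show these constraints either force $s\equiv 0$ on the relevant outcomes or give opposite strict signs. If $s\equiv 0$ there, the tie-breaking agent wins in both (A) and (B), so one of them fails.

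I expect the main obstacle to be step (A): handling the leader's option to \emph{pool}, ignoring her information so as to hide it, versus revealing it. To control this, I would use the nature-vs-expert device mentioned in the literature review. I would apply a minimax argument over environments, letting nature choose the split adversarially, to show that for any fixed $s$ some environment makes the revelation cost to the informed leader exceed her informational gain. The key property is that the gain from superiority is first order in $\delta$, while the follower's response, which exploits observing $a_{LE}$, costs the leader an amount that need not shrink. If the direct limiting argument does not give strict signs, I would instead choose $p$ where $G(s,p)$ has a unique pure best response for the follower, and perturb in the direction that flips it.
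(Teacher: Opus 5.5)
\textbf{There is a genuine gap.} The proposal stops exactly where the contradiction has to be produced, and it never names the ingredient that produces it.

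\textbf{What is correct.} Your reduction to one-shot values is fine, and so is the limit at a tie prior. As $\delta\to0$, both one-shot values converge to the complete-information leader value $\psi(p)=\min_{a_{AD}}\max_{a_{DM}}\E_p[s]$, so $\psi(p)=0$ at ties. In (A) the informed AD's value lies between a convex envelope of $\psi$ over the split (a \emph{minimum} over splits, since the AD minimizes, not the maximum you wrote) and $\psi(p)$ itself, which pooling secures.

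\textbf{Why it cannot close the proof.} This condition is satisfiable, and so are further sign constraints from splits around tie priors. The scoring function of Proposition~\ref{pro:ad1} (one safe and one risky arm, AD leads) has $\psi\equiv0$ and passes both (A) and (B):
\begin{itemize}
\item In (A), the informed AD gives welfare-maximizing advice, so the DM gets at most $0$.
\item In (B), the informed DM takes the risky arm exactly when the uninformed AD's advice is wrong, and gets a strictly positive score.
\end{itemize}
So ``pinning down linear constraints on $s$ on matched and mismatched outcomes'' cannot finish the argument without a further idea. For the same reason, the DM-leads case is not automatically symmetric: Propositions~\ref{pro:ad1} and~\ref{pro:dm1} differ.

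\textbf{The anticipated obstacle is misplaced.} Pooling already secures $\psi(p)$ for the informed leader. In the paper, every environment that breaks identification is one where a \emph{less}-informed agent secures the tie. None is an informed leader deterred by revelation costs, and nature never chooses splits.

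\textbf{The missing idea: unobserved counterfactuals with two non-deterministic arms.} For a fixed leader action $i$ and follower action $j$, the score depends only on the realization of the pulled arm. So the follower's expected score is affine in the marginal of a single arm, whereas the welfare boundary $\{\E X_1=\E X_2\}$ involves both marginals. The paper exploits this through the nature-vs-follower games $H_i$, in which nature picks the realization profile adversarially after the leader plays $i$:
\begin{enumerate}
\item The sequential analogue of Lemma~\ref{lem:val0} gives zero complete-information value at \emph{every} prior, not only at ties. This forces $\val[H_i]=0$.
\item Hence the follower guarantees $0$ even when uninformed, so the leader must be the tie-breaker.
\item Nature's optimal sets $Z_1,Z_2$ must then cover $\Delta(U)$.
\item Apply the argument of Lemma~\ref{lem:equiv} to an uninformed tie-breaking leader facing a follower whose posteriors straddle $A_1$ and $A_2$. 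This shows each $Z_i$ lies in a single $A_j$, so $Z_1\in\{A_1,A_2\}$.
\item Then one of the follower's two affine payoffs in $H_1$ vanishes identically. With $U_1=U_2=\{1,2\}$, the other must vanish exactly on $p(1,2)=p(2,1)$, which is impossible for a function of one arm's marginal.
\end{enumerate}
In the safe/risky environment with the AD leading, the DM's payoff after the AD plays $1$ depends on the risky arm, whose marginal alone determines the welfare boundary. So no contradiction arises there, which is why Proposition~\ref{pro:ad1} is consistent with the conditions you derive.
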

This impossibility result holds already for binary arms ($n=2$). The proof intuition is discussed in Section~\ref{sec:intuition}, while the formal proof of Theorem~\ref{theo:seq} is presented in Appendix~\ref{sec:seq-proof}.

\subsubsection{Single Safe Arm and Single Risky Arm}

The case of a single safe arm ($|U_1|=1$) and a single risky arm ($|U_2|>1$) is of special interest in our settings because it captures a natural scenario in which risky projects arrive sequentially, and the DM simply decides whether to accept or reject each of them. The impossibility result of Theorem \ref{theo:welfare} on the interplay with welfare is valid in this basic environment. The impossibility result of Theorem \ref{theo:seq} for sequential decisions is not valid. In fact, there is a simple scoring test that identifies the more informed agent. 

\begin{proposition}\label{pro:ad1}
    In a setting with binary arms, a single safe arm, a single risky arm, sequential decisions where the AD acts first, and a changing environment, there exists a scoring test that identifies the more-informed agent.
\end{proposition}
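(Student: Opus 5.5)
The plan is to exhibit an explicit scoring function and reduce the repeated game to a family of one-shot games, as in the proof of Theorem~\ref{theo:simul}. First, by the additive structure of scoring tests (Lemmas~\ref{lem1} and~\ref{lem2}), the winner of $\Gamma(s)$ is decided by the sign of the per-period equilibrium value of a one-shot zero-sum game. In this game the AD moves first, the DM observes $a_{AD}$, and the DM's payoff is $\E[s(o)]$. The uniform $\epsilon$ in Definition~\ref{def:dynamic-strict} lets us aim for a per-period value bounded away from zero, say by some $\kappa(\epsilon,K)>0$ in the right direction. Then $\liminf$ of the averages has the correct sign in every changing environment. Since only the DM can pull the safe arm and only she affects which realization is observed, the test should reward a DM who \emph{overrides} the AD only when the override is justified by the realization. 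It should reward an AD whose recommendation the DM is forced to follow. I would take the DM as the tie-breaking agent.

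Concretely, I would try the following scoring function. When $a_{DM}=a_{AD}$, the DM is charged a small penalty proportional to the observed welfare, $s=-\tfrac{u}{K}$. This makes blind agreement a losing strategy for the DM and forces her to reveal information. When $a_{DM}\neq a_{AD}$, the score is the observed welfare with sign chosen by which arm the DM took. Take $s=+u_2$ if the DM overrides a \emph{safe} recommendation by taking the risky arm. Take $s=-u_2'$-type penalties (using the safe value $u_1$, which is observed) if she overrides a \emph{risky} recommendation by taking the safe arm. The scale factors are to be tuned. The key structural fact is that the AD, moving first, cannot condition on the DM's extra information. The DM, moving second, cannot exploit the AD's extra information beyond what $a_{AD}$ reveals. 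In the single safe/single risky case, $a_{AD}$ is exactly a binary accept/reject signal, which is what the AD's optimal decision would reveal anyway.

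The verification then splits into two cases, using the complete-information reduction (Lemma~\ref{lem:equiv}): it suffices to check the sign of the value of the games $G(s,p)$ for a common prior $p$, together with the strict-improvement property.
\begin{itemize}
\item \textbf{DM more informed.} Against any AD recommendation rule based on $\bq$, the DM overrides exactly when her posterior $r^{j,j'}$ ranks the arms differently from the AD's choice. I would show that the expected override gain is at least the informational gap $\sum u^*(r^{j,j'})-\sum u^*(q^j)>\epsilon$ up to the scaling, and that it dominates the agreement penalty.
\item \textbf{AD more informed.} For every DM strategy based on $\bq$ and $a_{AD}$, the AD recommending her optimal arm makes every override have non-positive expected score. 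The DM then either agrees and pays the penalty, or overrides and loses in expectation.
\end{itemize}

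The main obstacle is the calibration of the constants. The agreement penalty must be small enough that a DM with an $\epsilon$-advantage still wins, yet strictly negative so that an uninformed DM cannot tie by copying. This has to hold uniformly over a changing environment in which the safe value and risky distribution vary. If a single constant fails, I would first let the agreement penalty depend on the observed realization relative to $K$ (as above). If that is still insufficient, I would make the override scores asymmetric between the two override directions. The reason is that only the safe arm's value is observed deterministically, while the risky arm's value is observed only as a random draw.
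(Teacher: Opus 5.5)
\textbf{Gap: the tie-breaking agent.} In this setting the DM is the \emph{follower}, and the follower can never be the tie-breaking agent of a separating scoring function, whatever the scores. Suppose $s$ were separating.

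\begin{itemize}
\item Lemma~\ref{lem:val0} still applies when the AD moves first; its proof goes through verbatim. So $\val[G(s,p)]=0$ for every prior $p$.
\item Fix a recommendation $i$. Consider the auxiliary zero-sum game in which the DM picks an arm and an adversarial nature picks the realization profile, with payoff $s(a_{DM},i,u_{a_{DM}})$.
\item If this game had negative value, nature's optimal mixed strategy $p$ would make every reply of the DM to $i$ strictly negative in $G(s,p)$. The AD could then recommend $i$ and force $\val[G(s,p)]<0$, a contradiction.
\item Hence after each recommendation the DM has a mixed reply with nonnegative expected score against every realization profile. This reply uses no information, so $\val[G(s,I)]\ge 0$ even when $DM\prec AD$, and a tie-breaking DM is always selected.
\end{itemize}

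This is exactly the step in the proof of Theorem~\ref{theo:seq} showing that the leader must break ties, and it is why the paper makes the AD the tie-breaking agent. The obstacle is therefore not calibration of constants: no tuning can rescue a DM tie-breaker.

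\textbf{Your concrete score fails independently.} After $a_{AD}=2$, both replies of the DM score strictly negatively: $-u_2/K$ for agreeing, and a negative multiple of $u_1$ for overriding. All welfare values are at least $1$, so an uninformed AD who always recommends the risky arm guarantees a score bounded above by a negative constant. Thus $\val[G(s,p)]<0$ for every $p$, violating Lemma~\ref{lem:val0}, and the test selects the AD in every environment. Likewise, the reward $+u_2$ for overriding a safe recommendation is positive whatever $u_2$ is. It therefore does not measure whether the override was justified, and your claim that optimal recommendations make every override non-positive in expectation is false.

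Moreover, Lemma~\ref{lem:equiv} is formulated under Assumptions~\ref{as:unique} and~\ref{as:posteriors} with $f(p)\in\Delta([n])$ a one-shot mixed action, i.e.\ for simultaneous moves. It does not supply the reduction you invoke.

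\textbf{The missing idea.} The only verifiable evidence is the risky arm's realization compared with the known safe value $c$: pulling the risky arm reveals the counterfactual, while pulling the safe arm reveals nothing. The paper therefore sets $s(1,\cdot,c)=0$, $s(2,1,u_2)=u_2-c$ and $s(2,2,u_2)=c-u_2$, which is the AD's realized regret, with the AD breaking ties. Note that to expose a bad risky recommendation the DM must \emph{follow} it.

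\begin{itemize}
\item A more-informed DM pulls the risky arm exactly when her posterior shows the recommendation is strictly suboptimal. The recommendation cannot depend on her refinement (your structural observation, which is the right ingredient). So strict superiority makes this happen with positive probability, and the expected score is strictly positive.
\item A more-informed AD recommends his optimal arm. Every risky pull then has non-positive conditional expected score, and the safe arm scores $0$. Hence $\val[G(s,I)]\le 0$ and the AD wins the tie.
\end{itemize}

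Lemma~\ref{lem1} then finishes the argument.
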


The scoring test is quite intuitive. If DM chooses the safe arm $a_{DM}=1$, the resulting score is 0; i.e., $s(1,\cdot,c)=0$ when $c$ is the outcome in the safe arm. If DM chooses the risky arm $a_{DM}=2$, the score is according to AD's regret. Concretely, we set $s(2,2,u_2)=c-u_2$ and $s(2,1,u_2)=u_2-c$. Namely, if the advice was to take the risky arm, the AD's regret is $c-u_2$. If the advice was to take the safe arm, his regret is $c-u_2$. The tie-breaking agent is the AD.

Intuitively, if the AD is more informed, he can choose the welfare-maximizing arm in each period, which guarantees non-positive expected regret. The DM has no extra information, so she has no course of action to make this regret positive. But the AD is the tie-breaking agent, so he is chosen. If, on the other hand, the DM is more informed, she can take the risky action in all periods in which she identifies, with her extra information, that AD's action is welfare inferior. In all other periods, the DM takes the safe action. This induces a positive regret on the AD. The formal proof is relegated to Appendix~\ref{ap:proofs}.

Interestingly, if we consider a setting in which the DM acts first, we get back to an impossibility result.

\begin{proposition}\label{pro:dm1}
    In a setting with binary arms, a single safe arm, a single risky arm, sequential decisions where the DM acts first, and a fixed environment, there exists no scoring test that identifies the more-informed agent.
\end{proposition}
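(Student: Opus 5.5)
We reduce to a one-shot interaction via Lemmas~\ref{lem1} and~\ref{lem2}, restrict to a two-point risky arm, and derive a contradiction from two families of instances: one in which the AD is fully informed, and one in which the DM is fully informed.

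\textbf{Reduction.} By Lemmas~\ref{lem1} and~\ref{lem2}, in a fixed environment the repeated game $\Gamma(s)$ selects a winner according to the sign of the value of the one-shot game. So $\val[\Gamma(s)]\in\{0,1\}$, and the tie-breaking agent matters only when the one-shot value is exactly $0$. It therefore suffices to exhibit, for any scoring function $s$ and either tie-breaking agent, a fixed information structure on which the sign of the one-shot value is wrong.

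\textbf{Parametrization.} Take the safe arm $1$ with deterministic value $c$ and the risky arm $2$ with $U_2=\{L,H\}$, where $1\le L<c<H\le K$. Only the following values of $s$ are relevant:
\[
\kappa:=\min_{a\in\{1,2\}} s(1,a,c), \qquad m_u:=\min_{a\in\{1,2\}} s(2,a,u)\ \ (u\in\{L,H\}).
\]
For a belief $\pi=\mathbb{P}[X_2=H]$ set
\[
\phi(\pi):=\min_{a}\bigl[\pi\, s(2,a,H)+(1-\pi)\,s(2,a,L)\bigr].
\]
The key elementary fact is
\[
\phi(\pi)\le \pi\, m_H+(1-\pi)\,m_L ,
\]
since a minimum of linear functions is pointwise below the average of the pointwise minima. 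For every prior $\pi\in(0,1)$, the structure in which the less-informed agent gets no signal and the more-informed agent learns $X_2$ satisfies strict superiority. This is because $c$ lies strictly between $L$ and $H$, so knowing $X_2$ strictly improves the optimal choice.

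\textbf{Instance A (AD fully informed, DM uninformed).} The DM moves first with some mixed action. The AD then best-responds knowing $X_2$, so the one-shot value equals
\[
\max\bigl(\kappa,\ \pi m_H+(1-\pi)m_L\bigr).
\]
Identification forces this value to be $<0$ if the tie-breaker is the DM, and $\le 0$ if the tie-breaker is the AD.

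\textbf{Instance B (DM fully informed, AD uninformed).} The DM's possibly mixed strategy induces, for each action, a posterior about $X_2$, which the AD sees before moving. After action $1$ the AD secures $\kappa$. After action $2$ with induced posterior $\pi'$, the AD secures $\phi(\pi')$. Hence the DM's value is a convex combination of terms each bounded by
\[
\max\Bigl(\kappa,\ \sup_{\pi'\in[0,1]}\phi(\pi')\Bigr)\ \le\ \max\Bigl(\kappa,\ \sup_{\pi'}\bigl[\pi' m_H+(1-\pi')m_L\bigr]\Bigr).
\]
Identification forces this value to be $\ge 0$ if the tie-breaker is the DM, and $>0$ if the tie-breaker is the AD.

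\textbf{Contradiction.} Suppose the tie-breaker is the AD. Instance A for all $\pi\in(0,1)$, together with continuity, gives $\kappa\le 0$ and $m_H,m_L\le 0$. Then the bound in Instance B is $\le 0$, so the DM cannot achieve a strictly positive value, a contradiction.

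Suppose the tie-breaker is the DM. Instance A gives $\kappa<0$, and $\pi m_H+(1-\pi)m_L<0$ for all $\pi\in(0,1)$. Letting $\pi\to 0$ and $\pi\to 1$ yields only $m_L,m_H\le 0$, so the endpoints $\pi'\in\{0,1\}$ in Instance B need extra care.

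I plan to handle these endpoints first with an Instance-A variant using a three-point support, or a shifted $c$, chosen so that the endpoints become interior priors. This would force $m_L,m_H<0$ strictly, making the Instance B value $<0$, again a contradiction.

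\textbf{Main obstacle.} I expect the main difficulty to be this strictness and endpoint bookkeeping, together with checking carefully that Lemmas~\ref{lem1} and~\ref{lem2} apply verbatim to the sequential protocol. In particular, the reduction must remain valid when the follower's one-shot best response conditions on the leader's revealed action.
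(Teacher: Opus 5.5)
\paragraph{The key inequality is reversed.} Your ``key elementary fact'' points the wrong way. The minimum of a sum is at least the sum of the minima, so
\[
\phi(\pi)=\min_a\bigl[\pi\, s(2,a,H)+(1-\pi)\,s(2,a,L)\bigr]\ \ge\ \pi m_H+(1-\pi)m_L .
\]
This just says that an AD who can condition on $X_2$ does at least as well as one who cannot. Your upper bound for Instance B is therefore invalid, and both of your contradiction steps rest on it.

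This cannot be fixed by endpoint bookkeeping. When the AD breaks ties, your two families of instances cannot produce a contradiction at all. Take $s(1,1,c)=0$, $s(1,2,c)=1$, $s(2,1,H)=s(2,2,L)=-1$, $s(2,1,L)=s(2,2,H)=2$, with the AD as tie-breaker. Then $\kappa=0$ and $m_H=m_L=-1$.
\begin{itemize}
\item Instance A has value $\max(0,-1)=0$, so the AD is correctly selected.
\item In Instance B with $\pi\le\frac12$, let the DM take the risky arm whenever $X_2=H$, and with probability $\pi/(1-\pi)$ when $X_2=L$. The AD's posterior after the risky arm is then $\frac12$, where both of his replies give $\frac12$. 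So the value is at least $\pi>0$ and the DM is correctly selected. The case $\pi>\frac12$ is symmetric.
\end{itemize}
Yet $s$ is not separating, because $\val[G(s,\frac12)]=\max(\kappa,\phi(\frac12))=\frac12>0$.

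\paragraph{What is missing.} You need instances in which the informational advantage is \emph{small}.
\begin{itemize}
\item The argument of Lemma~\ref{lem:val0} goes through verbatim when the DM moves first. It gives $\max(\kappa,\phi(\pi))=0$ for every $\pi$, hence $\kappa\le 0$ and $\phi\le 0$. Then the DM's value in Instance B is at most $0$, which settles the AD tie-breaker case.
\item The paper proceeds differently. It shows $\kappa\ge 0$, using the complete-information games at degenerate priors together with a dominance argument. It then imports from the proof of Theorem~\ref{theo:seq} that the leader must be the tie-breaker. So the DM secures $0$ by always taking the safe arm, even when she is less informed.
\end{itemize}

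\paragraph{The DM tie-breaker case.} Here your instances do suffice, but through a correct computation rather than your planned strictness argument. Since $\mathbb{P}[a_{DM}=2,\,X_2=H]\le\pi$, the value of Instance B tends to $\max(\kappa,m_L)$ as $\pi\to 0$, and to $\max(\kappa,m_H)$ as $\pi\to 1$. Instance A gives $\kappa<0$, so Instance B forces $m_L,m_H\ge 0$. Combined with $m_L,m_H\le 0$, this contradicts Instance A at interior priors. The endpoint issue you flagged is therefore not the real obstacle.
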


To gain some intuition, let us explain why the idea behind the positive result of Proposition \ref{pro:dm1} fails. The AD might know that the DM's safe action is inferior. However, he cannot prove it to the public because the public will not observe the outcome of the risky arm. The formal proof is relegated to Appendix \ref{ap:proofs}.

\section{Intuition Behind the Proofs}\label{sec:intuition}

The results in Section~\ref{sec:results}, both positive and negative, are founded on several common key ideas, which are presented below.

Given the natural structure of scoring tests, where scores from different periods are aggregated additively, the repeated interaction might be expected to be closely related, and in fact equivalent, to the one-shot interaction. We begin by formally defining the one-shot interaction.

Let $s:O \to \mathbb{R}$ be a scoring function and $I=(p,\bq,\br)$ an information structure, where the roles of DM and AD are assigned to $\bq$ (less informed) and $\br$ (more informed). Define $G(s,I)$ as a one-shot zero-sum game with incomplete information, in which the agents choose actions $a_{DM}$ and $a_{AD}$ (simultaneously or sequentially, depending on the context), and the payoff is determined by $s$. We refer to this one-shot analogue of identification (Definition~\ref{def:id-more}), which applies to repeated interaction, as \emph{separation}.

\begin{definition}\label{def:separation}
    A scoring function $s$ with DM (respectively, AD) serving as the tie-breaking agent is called \emph{separating} if, for every information structure $I$, the following conditions hold:
\begin{itemize}
\item If $DM \succ AD$, then $\val[G(s,I)] \ge 0$ (respectively, $\val[G(s,I)] > 0$), and
\item If $DM \prec AD$, then $\val[G(s,I)] < 0$ (respectively, $\val[G(s,I)] \le 0$).
\end{itemize}
A scoring function $s$ is called \emph{separating} if there exists a choice of tie-breaking agent for which it satisfies the above conditions. 
\end{definition}

The following two lemmas establish the connection between the one-shot interaction (separation) and the repeated interaction (identification).

\begin{lemma}\label{lem1}
If a scoring function $s$ is separating, then the corresponding scoring test $s$ (with the same tie-breaking agent) identifies the more-informed agent in every changing environment.
\end{lemma}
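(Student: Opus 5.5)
The plan is to let the agent favoured by separation play, in every period, an optimal strategy of the one-shot game $G(s,I[t])$, ignoring the history. A martingale strong law of large numbers then controls the limit average score. The two tie-breaking conventions in Definition~\ref{def:separation} force slightly different arguments, and the strict side needs a uniform margin. I write the argument for DM as tie-breaker; the AD case is symmetric, with the roles of the weak and strict inequalities exchanged.

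\textbf{Case $DM \succ AD$.} By separation, $\val[G(s,I[t])]\ge 0$ for every $t$. Let the DM play, in each period $t$, a one-shot optimal (maxmin) strategy of $G(s,I[t])$, using only her current signal. Fix any behavioural strategy of the AD, which may depend on the whole past. Let $\mathcal{F}_{t-1}$ be the history up to period $t-1$.

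Signals are drawn independently across periods, and the DM's period-$t$ strategy guarantees $\val[G(s,I[t])]$ against every (possibly mixed) action of the AD, conditionally on the AD's information. Hence
\[
\E[s(o[t])\mid \mathcal{F}_{t-1}]\ge \val[G(s,I[t])]\ge 0 .
\]
The scores are bounded, since $s$ takes finitely many values on the finite outcome set $O$. So $M_T=\sum_{t\le T}\bigl(s(o[t])-\E[s(o[t])\mid\mathcal{F}_{t-1}]\bigr)$ is a martingale with bounded increments, and the martingale SLLN gives $M_T/T\to 0$ almost surely. It follows that $L\ge 0$ almost surely. Since the DM is the tie-breaker, she is selected with probability $1$, so $\val[\Gamma(s)]=1$.

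\textbf{Case $DM \prec AD$.} Separation only gives $\val[G(s,I[t])]<0$ for each $t$. This is not enough, because these values could tend to $0$ and the AD would then not secure $L<0$. What is needed is a uniform bound: there is $\delta>0$ such that $\val[G(s,I)]\le -\delta$ for every information structure $I$ satisfying the $\epsilon$-superiority inequality~\eqref{eq:epsilon-strict} with the AD more informed.

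Granting this bound, the AD plays a one-shot optimal strategy in each period. The same martingale argument then gives $L\le-\delta<0$ almost surely, so $\val[\Gamma(s)]=0$.

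\textbf{The uniform bound (main obstacle).} I would obtain it by compactness. I assume, as implicit in $s$ being defined on $O$, that the realization set $\oU$ and $n$ are fixed and finite. An information structure is then encoded by a joint distribution $\mu$ over pairs of posteriors (less-informed posterior, more-informed posterior) in $\Delta(U)\times\Delta(U)$, subject to the martingale/garbling constraint. The set of such $\mu$ is weakly compact. For every such $\mu$, both the value of $G(s,\cdot)$ and the two expected utilities $\sum_j u^*(q^j)$ and $\sum_{j,j'}u^*(r^{j,j'})$ are weakly continuous in $\mu$. For the value this holds because the game is a finite-action zero-sum game whose payoffs are integrals of continuous functions of the posteriors. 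Hence the $\epsilon$-superiority set is closed, and the maximum of $\val$ over it is attained at some $\mu^*$.

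The remaining difficulty is that $\mu^*$ may not be finitely supported, while separation is assumed only for finite structures $I$. I would first try a Carathéodory-type reduction. The value and the utility comparison depend on $\mu$ only through finitely many moments: expected payoffs conditional on each action pair, and the integrals of $u^*$. So one can replace $\mu^*$ by a finitely supported structure with the same value and the same superiority gap. Separation then gives $\val<0$ at $\mu^*$, which yields the required $\delta$. If this reduction fails, the fallback is to approximate $\mu^*$ by finite structures that keep superiority at least $\epsilon/2$, and to argue that the values converge.
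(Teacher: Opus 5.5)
Your architecture matches the paper's proof. In each period the favoured agent plays a one-shot optimal strategy of $G(s,I[t])$. A law of large numbers turns the per-period guarantees into a bound on the limit average. The strict side is secured by a uniform margin $\delta>0$, obtained from compactness of the $\epsilon$-superior structures and continuity of the value. Your martingale strong law for $s(o[t])-\E[s(o[t])\mid\mathcal{F}_{t-1}]$ is the right tool, since the opponent may condition on the history; the paper's bare appeal to ``the law of large numbers'' glosses over this. The one place your write-up goes wrong is the step you yourself flag: the route you sketch to the uniform bound does not work as written.

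\textbf{Why your route fails.} In your encoding by the joint law $\mu$ of (less-informed posterior, more-informed posterior), the value is \emph{not} weakly continuous. The less-informed agent picks one mixed action per signal, so the value is built from the conditional laws $\mu(\cdot\mid q)$, not from integrals against $\mu$. When two of her posteriors coalesce in the limit, the limit structure forces a single action on the merged atom. Since $\max_\alpha (f_1+f_2)(\alpha)$ can be strictly smaller than $\max f_1+\max f_2$, the value can jump in favour of the more-informed agent at the limit. So the supremum need not be attained, and separation at a limit point $\mu^*$ says nothing about the approximating sequence. The encoding is not even faithful: two signals with the same posterior but different conditional laws of the opponent's posterior are merged, which changes the game in the wrong direction for the bound you need.

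Neither remedy repairs this. The value is not a function of finitely many moments of $\mu$, because the informed agent best-replies posterior by posterior; this produces integrals of a continuum of piecewise-linear functions. And approximating $\mu^*$ by finite structures with negative values yields only $\val\le 0$ in the limit, not $<0$.

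\textbf{How the paper avoids it.} The paper keeps signals labelled. For fixed $l$ and finite $\oU$, a structure is a point $\br$ of a compact subset of $\R^{U\times l\times l}$. This set is closed once the strict inequality in \eqref{eq:epsilon-strict} is replaced by a weak one, a slip the paper shares. The payoff is bilinear in strategies and linear in $\br$, so $\val[G(s,\cdot)]$ is Lipschitz, and the maximizer is itself a finite structure to which separation applies directly.

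\textbf{If the number of signals is unbounded across periods.} The paper's proof tacitly excludes this case; to cover it, take the DM as tie-breaker, so $AD\succ DM$. Decompose $G(s,I)$ over the DM's signal $j$, which both agents know. Each conditional value is $\le 0$ by Lemma~\ref{lem:val0}. The conditional gap in \eqref{eq:strict} is at least $\epsilon/2$ on signals of total probability at least $\epsilon/(2K)$. On those signals, merge the AD's posteriors that share a welfare-maximizing arm. This keeps $\sum_{j'}u^*(r^{j,j'})$ unchanged, leaves at most $n$ signals, and can only raise the DM's value, so an upper bound for the merged structure is also one for the original. Compactness of these bounded-size conditional structures then gives a uniform $\delta'>0$ with $\val[G(s,I)]\le-\delta'\epsilon/(2K)$.
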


\begin{lemma}\label{lem2}
If no separating scoring function exists, then no scoring test can identify the more-informed agent, even in a fixed environment.
\end{lemma}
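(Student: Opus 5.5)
We prove the contrapositive. Fix a scoring test $s$ together with its tie-breaking agent, and assume it identifies the more-informed agent in every fixed environment. We will show that the scoring function $s$, with the same tie-breaking agent, is separating in the sense of Definition~\ref{def:separation}. Fix an arbitrary information structure $I$ and consider the fixed environment $I[t]=I$. In this environment $DM\succ AD$ in the dynamic sense if and only if it holds in the static sense, so Definition~\ref{def:id-more} applies directly. The one-shot game $G(s,I)$ is a finite zero-sum game: each agent's pure strategies are maps from signals to actions, or, in the sequential case, to responses. Hence it has a value $v$ and optimal mixed strategies for both agents.

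There are four cases: two choices of who is more informed, times two choices of tie-breaking agent. Each is handled by contradiction with the same mechanism, so we describe one. Suppose $DM\succ AD$, the tie-breaker is AD, and separation fails, i.e.\ $v\le 0$. Let the AD play, in every period, his one-shot optimal strategy applied to that period's fresh signal, independently across periods. Let $\mathcal F_{t-1}$ be the history up to period $t$. Signals and realizations are drawn independently across periods, so conditional on $\mathcal F_{t-1}$ the DM's behavior in period $t$ is just some mixed one-shot strategy of $G(s,I)$, whatever her history-dependent strategy is. Optimality of the AD's strategy then gives $\E[s(o[t])\mid\mathcal F_{t-1}]\le v\le 0$.

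Next write $s(o[t]) = D_t + \E[s(o[t])\mid \mathcal F_{t-1}]$. The $D_t$ form a martingale difference sequence, and they are bounded because $O$ is finite, so $s$ is bounded. By the strong law of large numbers for bounded martingale differences (for instance, Azuma's inequality combined with Borel--Cantelli), $\frac1T\sum_{t\le T}D_t\to 0$ almost surely. Therefore
\[
L=\liminf_{T\to\infty}\frac1T\sum_{t\in[T]} s(o[t])\le v\le 0 \quad\text{almost surely.}
\]
If $L<0$, the AD is selected outright; if $L=0$, he is selected as tie-breaker. So the AD wins with probability $1$ against every DM strategy, and the DM's guaranteed winning probability is $0$. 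Thus $\val[\Gamma(s)]\neq 1$; if the value of $\Gamma(s)$ is not known to exist, read this as the DM's maxmin being $0$. This contradicts identification.

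The other three cases are symmetric.
\begin{itemize}
\item $DM\succ AD$, tie-breaker DM, and $v<0$: the AD's optimal strategy gives $L\le v<0$ almost surely, so the AD wins.
\item $DM\prec AD$, tie-breaker DM, and $v\ge 0$: the DM plays her one-shot optimal strategy. The same martingale argument gives $\liminf\ge v\ge 0$, so the DM wins, either outright or as tie-breaker.
\item $DM\prec AD$, tie-breaker AD, and $v>0$: the DM's optimal strategy gives $L\ge v>0$, so the DM wins.
\end{itemize}
In every case the test fails to select the more-informed agent, contradicting identification. Hence $s$ is separating, which proves Lemma~\ref{lem2}.

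The main technical point is the conditioning step. We must justify that, against an arbitrary history-dependent opponent, the conditional expected score in each period is bounded by the one-shot value. This rests on two facts: the per-period signals and realizations are independent of the past, and the optimal player's strategy uses only the current signal. Once this bound holds, the martingale strong law turns the per-period bound into an almost-sure bound on the liminf of the average score.
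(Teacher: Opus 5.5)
Your proof is correct and follows the same route as the paper: when separation fails at some $I$, the less-informed agent has a one-shot strategy that secures the wrong sign (or a wrong zero when he is the tie-breaker), and repeating it in the fixed environment $I[t]=I$ leads to the wrong agent being selected. Your conditioning and martingale strong-law step is a useful sharpening of the paper's bare appeal to the law of large numbers, since the opponent may use history-dependent strategies.
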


Lemmas~\ref{lem1} and~\ref{lem2} together establish the equivalence between the existence of a separating scoring function and the existence of an identifying scoring test. Moreover, when such a function exists, the result holds in the general setting of a changing environment, whereas its nonexistence implies a negative result even in the special case of a fixed environment. This distinction clarifies the scope of the positive and negative results: the positive result (Theorem~\ref{theo:simul}) holds for changing environments, while the negative results (Theorems~\ref{theo:welfare} and~\ref{theo:seq}) remain valid even under a fixed environment.

The arguments underlying Lemmas~\ref{lem1} and~\ref{lem2} are relatively straightforward. For Lemma~\ref{lem1}, if the DM can guarantee a positive value in each round, she can also guarantee that the limit average remains positive, and the same reasoning applies to the AD. For Lemma~\ref{lem2}, the absence of a separating scoring function implies that, for every scoring function, there exists a problematic instance $I$. The repeated environment with $I[t]=I$ then constitutes a counterexample to identification. The only subtle point concerns the distinction between weak and strict inequalities. In particular, in Lemma~\ref{lem1}, for a changing environment, we must carefully ensure that the limit average is strictly positive (or strictly negative). The formal proofs of these lemmas are provided in Appendix~\ref{ap:missing}.

Lemmas~\ref{lem1} and~\ref{lem2} allow us to focus on one-shot interactions (separation) to derive our results. Our first goal is to provide the necessary conditions for a function to be separating. To this end, we consider a \emph{complete-information} game $G(s,p)$, defined as follows. In the game $G(s,p)$, both agents share a common prior $p$ and receive no additional information; that is, $G(s,p)$ corresponds to the game $G(s,I)$ with $I=(p,\bq,\br)=(p,p,p)$. We begin with the following necessary condition. 

\begin{lemma}\label{lem:val0}
    If $s$ is separating, then $\val [G(s,p)]=0$ for every prior $p$. 
\end{lemma}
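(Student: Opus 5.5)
The plan is a perturbation argument. The complete-information game $G(s,p)$ is itself an instance of $G(s,I)$ with $I=(p,p,p)$, but that instance does not satisfy strict informational superiority in either direction, so separation cannot be applied to it directly. Instead, we approximate $G(s,p)$ by nearby information structures in which one agent is strictly more informed, in each of the two directions, and use continuity of the value of a finite zero-sum game in its payoffs.

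Concretely, fix a prior $p$ and $\delta\in(0,1)$. Build an information structure $I_\delta$ as follows. With probability $1-\delta$ neither agent learns anything beyond $p$. With probability $\delta$ the more informed agent alone learns the realization profile (or some informative signal). To ensure that the extra information is strictly valuable, choose the signal so that it changes the optimal arm on a set of positive probability; for instance, reveal the full profile $u$, and assume $p$ is not degenerate on a single arm being optimal everywhere. The degenerate case is handled separately, or by first perturbing $p$ and invoking continuity in $p$ as well.

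Formally, the less informed agent has the single posterior $p$, so $\bq=(p,0,\dots)$. The more informed agent holds $\br=\big((1-\delta)p,\ \delta p(u)e_u\ (u\in U)\big)$. Then $\sum_{j,j'} u^*(r^{j,j'})=(1-\delta)u^*(p)+\delta\,\E_p[\max_i X_i]$, which exceeds $u^*(p)$ whenever $\E_p[\max_i X_i]>u^*(p)$. If instead $\E_p[\max_i X_i]=u^*(p)$, then a single arm is a.s. optimal. In that case we mix $p$ with a small amount of a prior on which the arms disagree, obtaining $p_\eta\to p$, and pass to the limit at the end using continuity of $\val[G(s,\cdot)]$ in the prior: payoffs are linear in $p$, and the action sets are finite.

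Now assign roles. With DM holding $\br$, we have $DM\succ AD$, so separation gives $\val[G(s,I_\delta)]\ge 0$. With AD holding $\br$, separation gives $\val[G(s,I_\delta)]\le 0$. In each case the expected payoff of every pair of (behavioral) strategies differs from the payoff in $G(s,p)$ by at most $2\delta\max_O|s|$. This is because, on the probability-$(1-\delta)$ event, both agents are in the no-information state and play their uninformed actions. More carefully, any strategy of the informed agent in $I_\delta$ restricted to the no-information signal is a strategy in $G(s,p)$, and conversely any strategy in $G(s,p)$ extends to $I_\delta$ by ignoring the signal. Hence $|\val[G(s,I_\delta)]-\val[G(s,p)]|\le 2\delta\|s\|_\infty$. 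The same bound holds in the sequential setting, since the game trees coincide up to the extra chance move.

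Letting $\delta\to0$ in both role assignments yields $\val[G(s,p)]\ge0$ and $\val[G(s,p)]\le0$, hence $\val[G(s,p)]=0$. The main obstacle is the degenerate-prior case, where no signal can make either agent strictly better informed. The remedy is the perturbation $p_\eta$ described above, combined with the fact that the value of a finite zero-sum game is Lipschitz in its payoff matrix.
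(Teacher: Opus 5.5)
Your proof is correct and is essentially the paper's argument. Both give the more-informed agent a small-probability informative signal, observe that the less-informed agent's optimal strategy in $G(s,p)$ still secures $\val[G(s,p)]$ up to an error of order $\delta\max_O|s|$, apply the sign conditions of separation in each role assignment, and cover the priors where this construction is unavailable by continuity of $\val[G(s,\cdot)]$. The difference is only in the construction: you keep the prior $p$ fixed, reveal the full profile, and use a two-sided Lipschitz bound, while the paper restricts to $p\in\inter(A_1)$ by density, adds a rare posterior $p'\in\inter(A_2)$, and argues by contradiction.
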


Before proving Lemma~\ref{lem:val0}, we define the following sets for later reference. For each arm $i$, let $A_i \subset \Delta(U)$ denote the set of priors under which arm $i$ is optimal, namely, $A_i=\{p \in \Delta(U): i\in N^*(p)\}$. We further denote by $\inter(A_i) = A_i \setminus \bigcup_{i' \neq i} A_{i'}$ the interior of $A_i$, that is, the set of priors for which arm $i$ is the \emph{unique} optimal arm.

\begin{proof}[Proof of Lemma~\ref{lem:val0}]
Observe that $\bigcup_i \inter(A_i)$ is dense in $\Delta(U)$, and that $\val[G(s,p)]$ is a continuous function of $p$. Therefore, it suffices to prove the lemma for $p \in \bigcup_i \inter(A_i)$. Without loss of generality, assume that $p \in \inter(A_1)$.

Assume, by way of contradiction, that $\val[G(s,p)] = c \neq 0$. We consider the case $c > 0$; the argument for $c < 0$ is analogous. Since the scoring function $s$ is defined over a finite set $O$, it is bounded; that is, $|s(o)| \le M$ for every $o \in O$.

Consider an information structure $I$ in which the AD has posterior $p$ with probability $1-\epsilon$ and a different posterior $p' \in \inter(A_2)$ with probability $\epsilon$. In other words, with probability $\epsilon$, the AD receives information that makes action $2$ optimal instead of action $1$. The DM receives no information, so her prior (and hence her posterior) is $(1-\epsilon)p + \epsilon p'$. Observe that in this information structure $I$, the AD is strictly more informed. However, the DM can employ her optimal strategy from the game $G[s,p]$ and guarantee a score of at least $(1-\epsilon)c - \epsilon M$, which is strictly positive for sufficiently small $\epsilon$. This contradicts the assumption that $s$ is separating, since under $I$ the DM, who is strictly less informed, obtains a strictly positive score.  
\end{proof}

The next step towards a characterization of separating scoring functions is to examine the optimal strategies of the agents in the games $G(s,p)$. More specifically, we focus on the optimal strategies of the tie-breaking agent. For clarity of exposition, we assume that the tie-breaking agent is the DM; the arguments remain unchanged if the tie-breaking agent were the AD. To convey the intuition behind our proofs, we introduce the following simplifying assumption. In the formal proofs, provided in Appendix~\ref{ap:proofs}, this assumption is not required.

\begin{assumption}\label{as:unique}
The scoring function $s$ is such that the DM has a unique optimal strategy in the game $G(s,p)$ for every $p \in \Delta(U)$.
\end{assumption}

Under Assumption~\ref{as:unique}, we can define a function $f:\Delta(U) \to \Delta([n])$ that maps each prior belief to the DM’s optimal strategy in the game $G(s,p)$. For clarity of exposition, we introduce another simplifying assumption that, again, is not required in the formal proofs.

\begin{assumption}\label{as:posteriors}
The information structure $I$ is such that all posteriors of the more-informed agent belong to $\bigcup_{i \in [n]} \inter(A_i)$.
\end{assumption}

In other words, the more-informed agent always has a unique welfare-maximizing action. Under Assumptions~\ref{as:unique} and~\ref{as:posteriors}, we can state a necessary and sufficient condition for separation that depends only on the class of complete-information games ${G(s,p) : p \in \Delta(U)}$, rather than on the more complex class of incomplete-information games ${G(s,I)}$. 

\begin{lemma}\label{lem:equiv}
Under Assumptions~\ref{as:unique} and~\ref{as:posteriors}, a scoring function $s$ is separating if and only if $\val[G(s,p)] = 0$ for every $p \in \Delta(U)$, and $f(p) \neq f(p')$ for every $p \in \inter(A_i)$ and $p' \in \inter(A_{i'})$ with $i \neq i'$.
\end{lemma}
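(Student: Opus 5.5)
The plan is to exploit two structural facts about $G(s,I)$. First, the expected score is \emph{linear} in the unnormalized belief: for $w\in\R^U_{\ge 0}$, the payoff of the action pair $(a_{DM},a_{AD})$ under $w$ is $\sum_{u\in U} w(u)\, s(a_{DM},a_{AD},u_{a_{DM}})$. Hence the payoff in $G(s,I)$ decomposes as a sum over the posteriors $r^{j,j'}$ of the more-informed agent, and $\val[G(s,w)]=\|w\|_1\val[G(s,w/\|w\|_1)]$. Second, the more-informed agent knows the signal $j$ of the less-informed agent. Therefore, conditional on $j$, the less-informed agent plays a fixed mixed action $x_j$ against every $r^{j,j'}$, $j'\in[l]$. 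I fix the DM as tie-breaking agent; the case where the AD breaks ties is symmetric.

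\textbf{Sufficiency.} Assume $\val[G(s,p)]=0$ for all $p$ and that $f$ separates the interiors.

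If $DM\succ AD$, let the DM play $f(r^{j,j'}/\|r^{j,j'}\|_1)$ at each posterior. This mixed action is optimal in the complete-information game $G(s,r^{j,j'})$, so it guarantees at least $0$ against every action of the AD. By linearity, summing over $(j,j')$ gives $\val[G(s,I)]\ge 0$.

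If $DM\prec AD$, I need a strictly negative value, and the key step is to locate a problematic $j$.
\begin{itemize}
\item Strict superiority (Definition~\ref{def:strict-static}) means $\sum_j u^*(q^j)<\sum_{j,j'}u^*(r^{j,j'})$. Hence for some $j$, the posteriors $r^{j,j'}$ do not all share a common optimal arm; otherwise that arm would be optimal for $q^j=\sum_{j'} r^{j,j'}$ and the two sides would coincide term by term.
\item By Assumption~\ref{as:posteriors}, there are then $j',j''$ with $r^{j,j'}\in\inter(A_i)$ and $r^{j,j''}\in\inter(A_{i'})$ for some $i\neq i'$.
\item Since $f(r^{j,j'})\neq f(r^{j,j''})$, any $x_j$ differs from at least one of them, say from $f(r^{j,j'})$.
\item By Assumption~\ref{as:unique} and $\val=0$, a non-optimal strategy guarantees strictly less than $0$. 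So the AD has a pure reply at $r^{j,j'}$ giving a strictly negative payoff.
\item At all other posteriors, the AD plays his optimal strategy of $G(s,r)$, which guarantees at most $0$.
\end{itemize}
Thus, for every DM strategy $x=(x_j)_j$, the AD's best-response payoff $\phi(x)$ is strictly negative. Because $\phi$ is a minimum of finitely many linear functions, it is continuous, and the DM's strategy set is compact. Therefore $\val[G(s,I)]=\max_x \phi(x)<0$.

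\textbf{Necessity.} Lemma~\ref{lem:val0} already gives $\val[G(s,p)]=0$ for all $p$. Suppose, towards a contradiction, that $f(p)=f(p')=x$ for some $p\in\inter(A_i)$ and $p'\in\inter(A_{i'})$ with $i\ne i'$. Build $I$ as follows: the prior is $\lambda p+(1-\lambda)p'$, the DM gets no information, and the AD learns whether the posterior is $p$ or $p'$. Then $AD\succ DM$ strictly, because the two posteriors have distinct unique optimal arms, and Assumption~\ref{as:posteriors} holds. Yet the DM playing $x$ guarantees at least $0$ against every AD action in each component, so $\val[G(s,I)]\ge 0$, contradicting separation.

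I expect the main obstacle to be the strictness step in the case $DM\prec AD$. It requires, first, extracting from the aggregate strict inequality~\eqref{eq:strict} a single $j$ whose refinements have distinct unique optimal arms. It then requires turning the pointwise strict negativity of $\phi$ into a uniformly negative value, which is exactly where uniqueness of the optimal strategy and compactness enter.
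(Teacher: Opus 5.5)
Your proof is correct and follows the same route as the paper. Necessity uses the same two-posterior construction, in which the uninformed DM plays the common optimal strategy $f(p)=f(p')$; sufficiency uses the same argument that the more-informed DM secures $0$ via optimal complete-information strategies, and that the more-informed AD exploits a DM signal $q^j$ refined into posteriors with distinct optimal arms. You also fill in details the paper leaves implicit: you derive the existence of such a $j$ from~\eqref{eq:strict}, have the AD play optimally in $G(s,r^{k,k'})$ (rather than $G(s,q^k)$) at the remaining posteriors, and use compactness to pass from pointwise to uniform strict negativity.
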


That is, separation requires a zero value in every interaction with no information and distinct optimal strategies across welfare-maximizing regions.

\begin{proof}[Proof of Lemma~\ref{lem:equiv}]
We first show that if the stated condition does not hold, then $s$ is not separating. Without loss of generality, assume that $\{i, i'\} = \{1, 2\}$; that is, there exist $p \in \inter(A_1)$ and $p' \in \inter(A_2)$ such that $f(p) = f(p')$. Consider the information structure $I$ in which the AD’s posterior is $p$ with probability $1/2$ and $p'$ with probability $1/2$. The DM is uninformed, and her prior (and thus her posterior) is $\frac{1}{2}p + \frac{1}{2}p'$. Observe that in $I$, the DM is strictly less informed. Nevertheless, the DM can use the strategy $f(p)$ and guarantee an expected score of $0$, regardless of the AD’s private information. Since the DM is the tie-breaking agent, guaranteeing a score of $0$ suffices to violate separation.

    For the opposite direction, assume that the condition holds. We now show that $s$ is separating. If the DM is the more-informed agent, she observes the AD’s private information $q^j$. The DM can then ignore her own private information and simply play the strategy $f(s, q^j)$, which guarantees her an expected score of $0$, since $\val[G(s, q^j)] = 0$ for every $q^j$.

    If the AD is strictly more informed, we now show that he can guarantee a strictly negative expected score. By the minimax theorem, it suffices to show that for every strategy of the DM, the AD can best respond to it and obtain a strictly negative score. Since the AD is strictly more informed, with positive probability a signal $q^j$ of the DM is refined into posteriors $r^{j,j'} \in \inter(A_{i'})$ and $r^{j,j''} \in \inter(A_{i''})$ with $i' \neq i''$. By the stated condition, $f(r^{j,j'}) \neq f(r^{j,j''})$. In particular, based on her information $q^j$, the DM cannot simultaneously play $f(r^{j,j'})$ when the AD’s posterior is $r^{j,j'}$ and $f(r^{j,j''})$ when it is $r^{j,j''}$.

Assume that when the AD’s posterior is $r^{j,j'}$, the DM does not play $f(r^{j,j'})$, that is, she does not play her optimal strategy in the game $G(s, r^{j,j'})$. In this case, the AD can exploit this deviation and induce a strictly negative expected score for the DM at $r^{j,j'}$. For all remaining posteriors ${r^{k,k'} : (k,k') \neq (j,j')}$, the AD can use his optimal strategy in the corresponding game $G(s, q^k)$ to guarantee a 0 expected score. Hence, in expectation, the DM’s score is strictly negative.
\end{proof}

We now discuss how the observations derived above are used to establish the main results.

\paragraph{\textbf{Proof Idea of Theorem~\ref{theo:simul}.}}
According to Lemma~\ref{lem:equiv}, we need a scoring function $s$ for which the DM’s optimal strategy in the game $G(s,p)$ reveals the welfare-maximizing arm. For a given prior $p$, let $e_i = \mathbb{E}_p[X_i]$ denote the expected welfare of arm $i$. For the scoring function $s$ defined in Equation~\eqref{eq:scor} and a given prior $p$, the corresponding zero-sum game $G(s,p)$ is displayed in Table~\ref{tab:g}.

\begin{table}[ht]
\begin{center}
\begin{tabular}{cccccc}
                              & 1                                       & 2                                       & 3                                       & $\cdots$                      & $n$                                     \\ \cline{2-6} 
\multicolumn{1}{c|}{1}        & \multicolumn{1}{c|}{$e_1$}              & \multicolumn{1}{c|}{$-\frac{e_1}{n-1}$} & \multicolumn{1}{c|}{$-\frac{e_1}{n-1}$} & \multicolumn{1}{c|}{$\cdots$} & \multicolumn{1}{c|}{$-\frac{e_1}{n-1}$} \\ \cline{2-6} 
\multicolumn{1}{c|}{2}        & \multicolumn{1}{c|}{$-\frac{e_2}{n-1}$} & \multicolumn{1}{c|}{$e_2$}              & \multicolumn{1}{c|}{$-\frac{e_2}{n-1}$} & \multicolumn{1}{c|}{$\cdots$} & \multicolumn{1}{c|}{$-\frac{e_2}{n-1}$} \\ \cline{2-6} 
\multicolumn{1}{c|}{3}        & \multicolumn{1}{c|}{$-\frac{e_3}{n-1}$} & \multicolumn{1}{c|}{$-\frac{e_3}{n-1}$} & \multicolumn{1}{c|}{$e_3$}              & \multicolumn{1}{c|}{$\cdots$} & \multicolumn{1}{c|}{$-\frac{e_3}{n-1}$} \\ \cline{2-6} 
\multicolumn{1}{c|}{$\vdots$} & \multicolumn{1}{c|}{$\vdots$}           & \multicolumn{1}{c|}{$\vdots$}           & \multicolumn{1}{c|}{$\vdots$}           & \multicolumn{1}{c|}{$\ddots$} & \multicolumn{1}{c|}{$\vdots$}           \\ \cline{2-6} 
\multicolumn{1}{c|}{$n$}      & \multicolumn{1}{c|}{$-\frac{e_n}{n-1}$} & \multicolumn{1}{c|}{$-\frac{e_n}{n-1}$} & \multicolumn{1}{c|}{$-\frac{e_n}{n-1}$} & \multicolumn{1}{c|}{$\cdots$} & \multicolumn{1}{c|}{$e_n$}              \\ \cline{2-6} 
\end{tabular}\caption{The game $G(s,p)$ when $e_i=\E_p[X_i]$.}\label{tab:g}
\end{center}
\end{table}

\noindent
One can verify that the \emph{unique} optimal strategy for the DM (the maximizing row player) is $(\tfrac{c}{e_1}, \ldots, \tfrac{c}{e_n})$, where $c = \big(\tfrac{1}{e_1} + \cdots + \tfrac{1}{e_n}\big)^{-1}$; this result is formally established in Lemma~\ref{lem:opt}. Observe that this optimal strategy reveals the welfare-maximizing arm (i.e., the arm with the highest $e_i$), since the best arm corresponds to the action assigned minimal weight in the optimal strategy.

\paragraph{\textbf{Proof Idea of Theorem~\ref{theo:welfare}.}}
We consider a simple setting with one safe action yielding a deterministic welfare and one risky action with two possible realizations. Lemma~\ref{lem:equiv} implies that the optimal strategy of the tie-breaking agent must differ between the regions $A_1$ (where the safe action is optimal) and $A_2$ (where the risky action is optimal). We show that such a change in the behavior of the tie-breaking agent can occur only if all games $G(s,p)$ admit a mixed-strategy equilibrium. Intuitively, if this is not the case, one of the agents must have a dominant strategy in the $2\times2$ game, which persists for nearby priors, and this persistence consequently leads to a contradiction. 

By normalizing the positive score of the safe action to $1$, the game $G(s,p)$ necessarily takes the following form:

\begin{table}[ht]
\begin{center}
\begin{tabular}{ccc}
                       & 1                              & 2                             \\ \cline{2-3} 
\multicolumn{1}{c|}{1} & \multicolumn{1}{c|}{1}         & \multicolumn{1}{c|}{$-c$}       \\ \cline{2-3} 
\multicolumn{1}{c|}{3} & \multicolumn{1}{c|}{$-L_1(p)$} & \multicolumn{1}{c|}{$L_2(p)$} \\ \cline{2-3} 
\end{tabular}   
\end{center}
\end{table} 

Here, $c>0$ is a constant, and $L_1, L_2 > 0$ are functions that are linear in $p$. By the indifference principle, the DM’s optimal strategy selects the risky arm with probability $\frac{1}{1 + L_1(p)}$. We show that, in a worst-case instance, such behavior may approximate the welfare-optimal (deterministic) decision by as bad factor as $\frac{1}{2}$.\footnote{A $(\frac{1}{2} - \epsilon)$ approximation is achievable by setting $L_1(p) \in [1 - \epsilon, 1 + \epsilon]$; moreover, such a choice preserves the property that different posteriors correspond to different optimal strategies for the DM, and hence the corresponding scoring function is separating.}

\paragraph{\textbf{Proof Idea of Theorem~\ref{theo:seq}.}}
A central object in the proof is the (hypothetical) zero-sum game $H_i$ played between the follower and \emph{nature}, in which nature chooses the realizations of the arms adversarially to minimize the follower’s score. The game $H_i$ becomes relevant after the leader has chosen action $i$.\footnote{This aspect of the proof is analogous to the classical literature on expert testing \citep{Sandroni2003Reproducible}, where the minimax argument of a game with an adversarial nature plays a central role.}

Roughly speaking, on the one hand, we cannot have $\val[H_i] > 0$, since in that case, for the prior corresponding to nature’s optimal strategy, the leader would be able to obtain a strictly positive score in the game $G(s,p)$. On the other hand, we cannot have $\val[H_i] < 0$, since then the follower could guarantee a strictly negative score after the leader’s choice of action $i$, even without possessing any information. This would render action $i$ dominated for the leader and, consequently, irrelevant. 

The fact that $\val[H_i] = 0$ for all actions $i \in [n]$ implies that the follower can guarantee an expected payoff of zero without relying on any information. This observation rules out the possibility that the follower acts as the tie-breaking agent, and hence the tie-breaking role must be assigned to the leader. To exclude this remaining case, we examine the sets of nature’s optimal strategies in the games $H_1, \ldots, H_n$. We establish that these sets collectively cover the entire simplex $\Delta(U)$. Furthermore, by arguments analogous to those used in Lemma~\ref{lem:equiv}, each such set must be contained within one of the sets $A_1, \ldots, A_n$. To complete the proof, we consider an instance with two non-deterministic arms and make use of the fact that the realizations of unchosen arms remain unobserved.

\section{Discussion}
\subsection{More General Classes of Tests}\label{sec:generalTests}

Although scoring tests form a natural class to study, particularly because they are the only ones satisfying \emph{empirical measurability} and \emph{convexity} (see the discussion in Section~\ref{sec:model} and Appendix~\ref{ap:axiom}), it is nevertheless desirable to develop a broader understanding of more general classes of tests, namely, to characterize the limits of identification of the more-informed agent beyond the scoring framework.

A general test is a mapping $\theta: O^{\mathbb{N}} \to \{DM, AD\}$. Discussing tests at this level of generality is problematic, as the induced game between the DM and the AD may fail to admit an equilibrium. For instance, without assuming that $\theta$ is Borel measurable, an equilibrium need not exist.\footnote{Our framework includes, as a special case, the setting of the Borel determinacy theorem. In such settings, an equilibrium may fail to exist when the relevant winning set of the DM is non-measurable; see, for example, \citep[page 137]{kechris2012classical}.} Therefore, imposing additional well-behaved restrictions on the test is essential.

To avoid issues related to equilibrium existence, which lie beyond the scope of this paper, we focus on \emph{well-behaved tests}, that is, tests $\theta$ for which the infinitely repeated game $\Gamma(\theta)$ admits an equilibrium for every sequence of information structures $(I[t])_{t \in \mathbb{N}}$.

We begin by noting that the positive result of Theorem~\ref{theo:simul} on the possibility of identification \emph{can} be reconciled with welfare maximization when considering general tests. This stands in contrast to the case of scoring tests, for which achieving both objectives—identification and welfare maximization—is impossible (Theorem~\ref{theo:welfare}).

\begin{proposition}\label{pro:wel}
    For every $\epsilon>0$, there exists a test that identifies the more informed agent with probability $1-\epsilon$, and achieves a 1-fraction of the welfare in a setting with simultaneous decisions and a changing environment. 
\end{proposition}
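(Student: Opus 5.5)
We prove Proposition~\ref{pro:wel} with a test that, in most periods, scores nothing and gives the DM no incentive to deviate from welfare maximization, and in a small random set of periods runs the identifying mechanism of Theorem~\ref{theo:simul}. The periods must be chosen so that the DM cannot tell, at decision time, which kind of period she is in.

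The plan is to randomize the test itself: the test is a deterministic map $\theta$, but we build the randomness into a public lottery. This is allowed because Proposition~\ref{pro:wel} only asks for identification with probability $1-\epsilon$ and the test's outcome may depend on an exogenous coin. So with probability $1-\epsilon$ the test computes the limit average of the scoring function \eqref{eq:scor} and selects the winner as in Theorem~\ref{theo:simul}. With probability $\epsilon$ it instead uses a second rule, under which the DM's incentive is aligned with welfare.

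This direct version does not work as stated. The coin would have to be drawn after play, and then the DM's objective is a mixture $(1-\epsilon)\cdot$(scoring game) $+\,\epsilon\cdot$(welfare rule). Since scoring tests yield 0/1 outcomes by Lemmas~\ref{lem1} and~\ref{lem2}, the scoring part dominates and welfare is again poor. We therefore reverse the roles. Fix a sparse deterministic set of test periods $S\subset\N$ with density $0$, for example $S=\{k^2\}$. The test is the scoring test \eqref{eq:scor}, but its limit average is taken only along the periods in $S$. In periods outside $S$ the outcome is irrelevant to the test. In those periods each agent is indifferent, so there are equilibria in which the DM plays welfare-maximizing actions there. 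Since $S$ has density $0$, these equilibria achieve a $1$-fraction of $U^*_{DM}$.

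The obstacle is that $W(s)$ is defined as a worst case over equilibria, and with indifference the DM may act arbitrarily outside $S$. To break this, we add a vanishing lexicographic term. If the limit score along $S$ is exactly zero, the test breaks the tie by the DM's empirical average welfare relative to a benchmark. This alone only affects a measure-zero event, so instead we add it to the rule with weight $\epsilon$. Concretely, the test selects the DM iff $L_S+\epsilon\cdot \mathbf{1}[\cdot]$ exceeds zero, where the indicator is combined with a public randomization.

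With probability $1-\epsilon$, drawn by an external lottery that the agents know only in distribution, the winner is decided by $L_S$ exactly as in Theorem~\ref{theo:simul}. With probability $\epsilon$, the DM wins iff her average welfare over $\N\setminus S$ is at least the AD's average welfare over the periods where they agree, plus the periods where the DM matches. Outside $S$, the DM's payoff then depends only on this second component, which strictly rewards welfare. Any equilibrium therefore has her maximize expected welfare on $\N\setminus S$. Meanwhile, on $S$ the incentives are those of $\Gamma(s)$ restricted to $S$. By Lemma~\ref{lem1} applied to the changing environment $(I[t])_{t\in S}$, the more-informed agent wins that component with probability $1$. Hence she is selected with probability at least $1-\epsilon$.

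The main step to verify carefully is the separation of incentives. We must show that the two components decouple across $S$ and its complement, since actions in $S$ affect only $L_S$ and actions outside $S$ affect only the welfare average, up to density-zero effects. We must also show that the welfare component is strictly monotone in per-period expected welfare. Then every equilibrium yields welfare $U^*_{DM}$. We expect to handle well-behavedness by noting that both components are Borel and the game splits into two independent subgames.
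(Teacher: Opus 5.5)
Your overall architecture matches the paper's. You run the identifying scoring test \eqref{eq:scor} only along a density-zero set such as $S=\{k^2\}$, and Lemma~\ref{lem1} applied to $(I[t])_{t\in S}$ gives identification with probability at least $1-\epsilon$. With probability $\epsilon$, via an exogenous lottery, the test instead decides by a welfare-based rule on $\N\setminus S$.

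The gap is in that welfare rule, which is exactly the step that must deliver the $1$-fraction. As you specify it, the DM wins iff her average welfare on $\N\setminus S$ is at least the average welfare on the periods where the agents agree. This rule does not strictly reward welfare, for two reasons.

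First, it is a win/lose threshold on a limit average, so the DM's payoff from this branch is $0$ or $1$. Any strategy whose long-run average clears the benchmark is as good as welfare maximization, and if the benchmark cannot be cleared every strategy is equally bad. Since $W$ is a worst case over equilibria, some equilibrium can have welfare strictly below $U^*_{DM}$.

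Second, the benchmark is controlled by the AD, who decides which periods are agreement periods. In a changing environment he can agree exactly in the periods where the DM's arm has high expected welfare, inflating the benchmark. The DM's best response may then be to distort her arm choices (for example, to avoid agreement) rather than to maximize welfare. So the AD is not irrelevant outside $S$, the decoupling you need fails, and your claim that every equilibrium has the DM maximize expected welfare on $\N\setminus S$ does not follow.

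The missing ingredient is the paper's construction for the $\epsilon$-branch. There, the DM is selected with a probability that depends only on her realized welfare and is strictly increasing in the welfare of \emph{each} period outside $S$. Taking it affine makes expected welfare what counts, e.g.\ probability $\sum_{t\notin S}2^{-t}\tfrac{u[t]-1}{K-1}$.

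With this rule the AD has no influence on the branch, and the DM's expected payoff from it is strictly increasing in every period's expected welfare. Hence every optimal DM strategy must choose an arm in $N^*$ of her posterior in every period outside $S$. Since $S$ has density zero, this yields $W\ge U^*_{DM}$.

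A rule that is strictly increasing in the limit-average welfare, rather than a threshold on it, should also serve. What breaks your version is the threshold comparison against an AD-influenced benchmark.
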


\begin{proof}
 For general tests, the objectives of identification and welfare maximization can be separated. The testing of which agent is more informed can be performed only in a sequence of periods with a vanishing frequency, for example, in the periods $\{t^2 : t \in \mathbb{N}\}$. Since this sequence is infinite, the test still succeeds in identifying the more-informed agent. In the remaining periods, $\mathbb{N} \setminus \{t^2 : t \in \mathbb{N}\}$, the DM can be incentivized to choose welfare-maximizing actions. This can be achieved, for instance, by committing that with probability $\epsilon > 0$, the test will select the DM with a probability that is strictly increasing in the welfare realized in each of these rounds. Identification then occurs with probability at least $(1 - \epsilon)$, while the resulting long-run welfare is asymptotically as high as that obtained under welfare-maximizing decisions.   
\end{proof}

We now establish a negative result (under relatively restrictive assumptions) that applies to all well-behaved tests. Although the result itself is limited, its proof introduces an imitation technique that may be useful in establishing other negative results for classes of tests beyond the scoring ones. We say that a test \emph{incentivizes optimal decisions} if there exists an equilibrium of the game $\Gamma(\theta)$ in which, in every period $t \in \mathbb{N}$, the DM takes a welfare-maximizing action $a_{DM}[t] \in N^*(w)$ given her belief $w$. We obtain the following negative result, which applies universally to all well-behaved tests.

\begin{proposition}\label{pro:noTest}
    In the setting with sequential decisions, where the DM acts first and the environment is fixed, there exists no well-behaved test that both identifies the more-informed agent and incentivizes optimal decisions.
\end{proposition}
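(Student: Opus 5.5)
We argue by contradiction: fix a well-behaved test $\theta$ and an equilibrium of $\Gamma(\theta)$ in which the DM always plays a welfare-maximizing action. We then build two fixed environments, one with $DM \succ AD$ and one with $DM \prec AD$, whose observable data cannot be told apart. The method is \emph{imitation}: in one environment, one agent copies the equilibrium behavior of the corresponding agent in the other environment. As a result, the distribution over $O^{\N}$ is the same in both, and $\theta$ selects the same agent with the same probability. This contradicts identification, which demands value $1$ in one environment and $0$ in the other.

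\textbf{The instance.} Take two arms and a fixed environment, with state space and prior $p$ fixed. In environment $E_1$ the DM is more informed: she holds $\br$ and the AD holds $\bq$. In environment $E_2$ the roles are swapped. We choose $\bq$ trivial (no information) and $\br$ a refinement whose posteriors lie in $\inter(A_1)$ and $\inter(A_2)$, so strict superiority holds in both environments.

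\textbf{Case $E_2$ (AD more informed, DM uninformed and acting first).} The optimal-decisions requirement forces the DM to play the arm that is optimal under $p$, say arm $1$, in every period. Since she moves first and is uninformed, her action reveals nothing. Observed data then consist of $a_{DM}=1$, the realization of arm $1$, and the AD's advice.

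\textbf{Case $E_1$ (DM more informed).} Incentivizing optimal decisions requires the DM's action to vary with her posterior, so she plays arm $2$ with positive probability. To break the asymmetry, we choose the instance so that optimal behavior in $E_1$ also always picks arm $1$. For example, let arm $1$ be a safe arm and let the DM's extra information concern only the risky arm $2$. Her posteriors still lie in distinct regions: in $\inter(A_2)$ she must play arm $2$, which reveals information.

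So the first concrete plan is to pick $p$ in the boundary region. We let the extra information change the ranking only slightly, and we use the freedom that optimal actions are required only relative to the agent's own belief. We then make the observed marginal distribution of (action, realization) the same across $E_1$ and $E_2$. This is done by also designing $E_2$ so that the uninformed DM's optimal play is indifferent, i.e.\ $p$ lies on the boundary $A_1\cap A_2$, where any tie-breaking mixture counts as optimal. The uninformed DM in $E_2$ can then mix between the arms exactly as the informed DM in $E_1$ does on average. Since arm $2$ is pulled and its realization observed, we need the realization distribution conditional on pulling arm $2$ to match. This is the main obstacle. We would try to get around it by letting the extra information in $E_1$ concern a third payoff-irrelevant component, or by placing the informed agent's information on the arm whose outcome stays unobserved.

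\textbf{The AD's side.} Having matched the DM's marginal behavior, we let the AD in each environment imitate the AD of the other. The less-informed AD in $E_1$ cannot replicate information-dependent advice. The resolution is that the DM moves first and her equilibrium action already reveals the relevant information, which the follower AD observes and can condition on. Exposure to the leader's action is the key feature here. With both sides imitated, the outcome processes coincide and $\val[\Gamma(\theta)]$ agrees across $E_1$ and $E_2$, giving the contradiction.

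I expect the hardest step to be the exact matching of realization distributions on the pulled arm, which needs a carefully tuned instance.
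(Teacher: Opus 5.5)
There is a genuine gap, and it is in the step you flag as the ``main obstacle.'' The imitation idea, which exploits that the follower AD observes the DM's first move, is the right one. But the construction you commit to cannot work. Because you swap roles with $\bq$ trivial, the DM is informed in $E_1$ and uninformed in $E_2$.

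In $E_2$ her action is independent of the state, so conditional on pulling arm $i$ the observed realization has the prior marginal $p_i$. In $E_1$, optimal decisions make her pull arm $i$ exactly on posteriors in $\inter(A_i)$. Write $p=\lambda\mu_1+(1-\lambda)\mu_2$, where $\mu_i$ is the average posterior on the event that arm $i$ is pulled and $0<\lambda<1$. Matching the conditional means of observed realizations would force $\E_{\mu_1}[X_1]=\E_p[X_1]$ and $\E_{\mu_2}[X_2]=\E_p[X_2]$. Hence also $\E_{\mu_2}[X_1]=\E_p[X_1]$ and $\E_{\mu_1}[X_2]=\E_p[X_2]$. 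Combined with $\E_{\mu_2}[X_2]>\E_{\mu_2}[X_1]$ and $\E_{\mu_1}[X_1]>\E_{\mu_1}[X_2]$, this gives $\E_p[X_2]>\E_p[X_1]>\E_p[X_2]$.

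So no choice of $p$, on the boundary or not, rescues the plan. Your two workarounds also fail:
\begin{itemize}
\item Payoff-irrelevant extra information destroys strict superiority.
\item Putting the informed agent's information ``on the unobserved arm'' is impossible when that agent is the DM, since her information is exactly what determines which arm is observed.
\end{itemize}
Your AD-side paragraph inherits the same problem. An AD holding $\br$ can correlate his advice with the realization of the pulled arm, and an AD who only sees the DM's action cannot reproduce that.

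The missing idea is to keep the DM's information identical in both environments and vary only the AD's. The AD's extra information should concern the risky arm only in the event where the DM takes the safe arm. The paper uses the following instance:
\begin{itemize}
\item A safe arm with $U_1=\{2\}$, a risky arm with $U_2=\{1,3\}$, and prior $\tfrac12$.
\item The DM gets a signal that is correct with probability $\tfrac34$, so her posteriors are $\tfrac14$ and $\tfrac34$.
\item In $I$ the AD is uninformed, so $DM\succ AD$.
\item In $I'$ the AD sees the DM's signal and, whenever her posterior is $\tfrac14$, also the realization of arm~$2$, so $DM\prec AD$.
\end{itemize}
Under optimal decisions, the DM's actions and the observed realizations automatically have the same law in both environments.

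The follower AD in $I$ reads the DM's signal off her action. When she takes the safe arm, he draws a fictitious realization of arm~$2$ that is high with probability $\tfrac14$. Since arm~$2$ is not pulled in that event, this is never checked. So, against the DM's optimal-decision strategy, he reproduces the outcome law that any AD strategy of $I'$ generates against it, in particular that of a strategy winning in $I'$. This makes the AD win against the DM's equilibrium strategy in $I$, contradicting $\val[\Gamma(\theta)]=1$ there.

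Note that the contradiction is one-sided: an imitation strategy defeats a particular equilibrium strategy of the DM. It is not an equality of values across the two environments, as your final sentence asserts.
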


Since the proof applies to all tests, the observations of Section~\ref{sec:intuition} are not relevant here. Instead, a simple imitation argument can be employed.

\begin{proof}[Proof of Proposition \ref{pro:noTest}]
    Let arm~1 be the safe arm, which yields a deterministic welfare $U_1 = \{2\}$, and let arm~2 be the risky arm, which yields one of two possible realizations $U_2 = \{1, 3\}$. The prior is uniform, $p = \tfrac{1}{2}$, representing the probability of the high realization~$3$ for the risky arm.

    Consider two information structures, $I$ and $I'$. In $I$, the AD is uninformed, while the DM receives a signal that correctly reveals the state with probability $\tfrac{3}{4}$. Hence, the DM’s posterior is $\tfrac{1}{4}$ or $\tfrac{3}{4}$, each occurring with probability $\tfrac{1}{2}$. In $I'$, the DM again receives a signal that is correct with probability $\tfrac{3}{4}$, but the AD also observes this signal. Moreover, in $I'$, whenever the DM’s posterior is $\tfrac{1}{4}$, the AD receives an additional signal that perfectly reveals the realization of arm~2. Notice that $DM \succ AD$ in information structure~$I$, whereas $DM \prec AD$ in information structure~$I'$.

    Since we have assumed that the test incentivizes optimal decisions, from the perspective of an outside observer (the test), in both information structures $I$ and $I'$, the following occurs: in every period, the DM selects the risky arm with probability $\tfrac{1}{2}$, and when it is selected, the high realization occurs with probability $\tfrac{3}{4}$.

    If the test identifies the more-informed agent, then the AD must have a strategy that ensures he reaches his winning set in information structure $I'$. However, the AD can apply the same strategy to reach his winning set also in information structure $I$, thereby contradicting the assumption that the test identifies the more-informed agent DM in information structure $I$.
\end{proof}

It remains an interesting open problem to characterize the extent to which general classes of tests can or cannot identify the more-informed agent. A key challenge in establishing such results lies in characterizing the equilibrium strategies of the agents in the resulting infinitely repeated games. Below, we discuss two classes of tests that may serve as initial steps toward understanding this intricate open problem.

Let us focus on empirically measurable tests and assume that the DM’s winning set $W_{DM} \subset \Delta(O)$, is convex; i.e., the set of limit frequencies that select $DM$ is $W_{DM}$. However, we omit the assumption that the AD’s winning set, $W_{AD} = \Delta(O) \setminus W_{DM}$, is also convex (otherwise, these would be scoring tests as shown in Appendix \ref{ap:axiom}). This setting constitutes a special case of the Blackwell approachability theorem \citep{blackwell1956analog}. The optimal strategies of the DM in this framework are somewhat complex—for instance, they are generally non-stationary. Nevertheless, the theory provides a relatively good understanding of approaching strategies in the Blackwell approachability problem. This class of tests may therefore serve as a promising starting point for extending our results to broader families of empirically measurable tests.

Another interesting class of tests, which includes the empirically measurable tests as a subset, is that of \emph{measurable} and \emph{shift-invariant} tests. Shift invariance requires that the test’s outcome remain unchanged by a single-period shift; formally, $\theta(h) = \theta((o,h))$ for every history $h \in O^{\mathbb{N}}$, where $(o,h)$ denotes the history obtained by adding outcome $o$ as the first period and shifting all other outcomes one period forward. Shift invariance reflects the natural requirement that a single period (out of infinitely many) should not affect the test’s result. The existence of (approximate) equilibria in such zero-sum games has been established by \citep{flesch2023equilibrium}, in settings far more general than ours. Understanding the identification power of such tests remains an interesting open question.

\subsection{Crucial Aspects of the Negative Results}

The message conveyed by the main results is largely negative:

\begin{itemize}
    \item Even though, in the case of simultaneous decisions, the more-informed agent can be identified, doing so necessarily incentivizes the DM to act suboptimally (potentially performing no better than uniform random decision-making).
    \item In the case of sequential decisions, identifying the more-informed agent is simply impossible.
\end{itemize}

Below, we highlight several features of our model that drive these negative results. For each feature, we present an alternative version of the model in which modifying it makes identification of the more-informed agent feasible and aligned with welfare-maximizing behavior by the DM, thereby illustrating its crucial role in the negative findings.

\paragraph{\textbf{Unobserved counterfactuals.}} If counterfactuals were observable, identifying the more-informed agent would be trivial: one could simply select the agent whose decisions yield higher (limit-average) welfare. Moreover, such a test would automatically incentivize the DM to maximize welfare.

\paragraph{\textbf{Unknown information structure.}} If the information structure $I = (p, \bq, \br)$ were known to the test, identifying the more-informed agent would again be trivial. The test could simply verify whether the DM’s (limit-average) welfare matches the optimal welfare attainable under information $\br$. A DM possessing $\br$ would pass this test, whereas a DM possessing $\bq$ would not. Moreover, such a test would naturally align incentives toward welfare-maximizing behavior.

\paragraph{\textbf{Coarse actions.}} In our setting, the agents communicate with the test only through their choice of arms. One may instead consider an alternative framework in which the agents are also required to send a cheap-talk message to the test. In such a setting, a test that identifies the more-informed agent can be constructed.\footnote{The central role of the coarse actions is analogous to the findings in the social-learning literature \citep{banerjee1992simple, bikhchandani1992theory, bikhchandani2024information}, where the coarse-action assumption—namely, that agents observe only the actions of their peers rather than their underlying beliefs—leads to inefficiencies such as herding.} For example, the test could ask both agents to provide forecasts of the expected realization corresponding to the DM’s chosen arm. More specifically, the AD would provide $n$ such forecasts in the case where the DM’s action is unknown to him. The test would then select the agent whose (limit-average) squared distance between forecasts and realized welfare is smaller. Since the squared-distance rule is a \emph{strictly proper scoring rule} (see \citep{gneiting2007strictly}), the more-informed agent would win this competition.

To incorporate welfare-maximization incentives for the DM into this test, one can use the same argument as in Proposition \ref{pro:wel}. We assign a small probability $\epsilon > 0$ for this purpose and design the probability of the DM’s selection (within $[0,\epsilon]$) to be strictly increasing in the welfare realized in each period.

Our negative results without cheap-talk communication, in contrast to the positive insights with cheap-talk communication, underscore the importance of requiring decision-makers to communicate with the public. Moreover, they highlight the type of information that is most valuable to the public: forecasts concerning the consequences of the actions actually taken.\footnote{Astute politicians frequently prefer to discuss the consequences of unobserved counterfactuals when communicating with the public.}

\paragraph{\textbf{Fixed Roles.}} In our setting, the agents' roles are fixed throughout the interaction. If, instead, both agents get the opportunity to serve as the DMs, then the natural test that picks the agent whose welfare performance is superior as the DM achieves both objectives; it identifies the more informed agent and incentivizes the DMs to take welfare-maximizing actions.

\subsection{Welfare Maximizing Agents}\label{sec:wel-max-agents}

In a non-strategic variant of our setting, the agents are unaffected by the test: the DM simply takes, and the AD advises on, the welfare-maximizing action. It is natural to ask whether an outside observer, who does not know the information structure $I$, can infer which agent is more informed from the sequence of observed outcomes. Intuitively, in an environment with one safe arm and one risky arm, whenever the DM takes the safe arm while the AD advises the risky one, the superiority of either decision cannot be inferred by an outside observer, as the outcome of the risky arm is not observed. The following example formalizes this intuition.

\begin{example}
 Consider a binary-arm setting with a safe arm~1 yielding a welfare of~$2$, and a risky arm~2 yielding a welfare of either~$1$ or~$3$. The uncertainty in this setting is captured by a single probability $p \in [0,1]$ representing the probability of the high realization~($3$) from arm~2. Let the prior be $p = \tfrac{8}{15}$.

An information structure can be represented as a two-stage martingale $(Y_1, Y_2)$ taking values in $\Delta([0,1])$, where $Y_2$ refines $Y_1$ in the sense of Blackwell. The distribution of $Y_1$ ($Y_2$) captures the beliefs of the less-informed (more-informed) agent.

\begin{figure}[ht]
\begin{tikzpicture}[scale=0.48]

\draw[dashed]  (0,0)--(13,0) node[right] {$Y_2(AD)$};
\draw[dashed] (0,2)--(13,2) node[right] {$Y_1(DM)$};

\filldraw (8,4) circle(0.1);
\filldraw (6,2) circle(0.1);
\filldraw (12,2) circle(0.1);
\filldraw (0,0) circle(0.1);
\filldraw (12,0) circle(0.1);

\draw[->] (8,4) --  (6.1,2.1) {node[midway,above] () {$\frac{2}{3}$}};
\draw[->] (8,4) -- (11.9,2.1) {node[midway,above] () {$\frac{1}{3}$}};

\draw[->] (6,2) -- (0.1,0.1);
\node[fill=white] at (2.25,1.7) {$\frac{1}{2}$};
\draw[->] (6,2) -- (11.9,0.1);
\node[fill=white] at (9.75,1.7) {$\frac{1}{2}$};

\draw[->] (12,2) -- (12,0.2) {node[midway,right] () {$1$}};

\draw (0,0.1)--(0,-0.1) node[below] {$0$}; 
\draw (6,0.1)--(6,-0.1) node[below] {$\frac{2}{5}$}; 
\draw (8,0.1)--(8,-0.1) node[below] {$\frac{8}{15}$};
\draw (12,0.1)--(12,-0.1) node[below] {$\frac{4}{5}$};
\draw[gray] (7.5,4.5)--(7.5,-1.5);
\end{tikzpicture}
\hspace{5mm}
\begin{tikzpicture}[scale=0.48]

\draw[dashed]  (5,0)--(13,0) {node[right] {$Y'_2(DM)$}};
\draw[dashed] (5,2)--(13,2) {node[right] {$Y'_1(AD)$}};

\filldraw (8,4) circle(0.1);
\filldraw (6,2) circle(0.1);
\filldraw (9,2) circle(0.1);
\filldraw (6,0) circle(0.1);
\filldraw (12,0) circle(0.1);

\draw[->] (8,4) --  (6.1,2.1) {node[midway,above] () {$\frac{1}{3}$}};
\draw[->] (8,4) -- (8.9,2.1) {node[midway,above] () {$\frac{2}{3}$}};

\draw[->] (9,2) -- (6.1,0.1);
\node[fill=white] at (7,1.75) {$\frac{1}{2}$};
\draw[->] (9,2) -- (11.9,0.1);
\node[fill=white] at (11,1.75) {$\frac{1}{2}$};

\draw[->] (6,2) -- (6,0.2) {node[midway,left] () {$1$}};

\draw (6,0.1)--(6,-0.1) node[below] {$\frac{2}{5}$}; 
\draw (8,0.1)--(8,-0.1) node[below] {$\frac{8}{15}$};
\draw (9,0.1)--(9,-0.1) node[below] {$\frac{3}{5}$};
\draw (12,0.1)--(12,-0.1) node[below] {$\frac{4}{5}$};
\draw[gray] (7.5,4.5)--(7.5,-1.5);

\end{tikzpicture}
    \caption{The martingales $(Y_1, Y_2)$ and $(Y'_1, Y'_2)$ defining the information structures $I$ and $I'$. The gray line at the value $\tfrac{1}{2}$ represents the threshold above which arm~2 is optimal.}\label{fig:mar}
\end{figure}

Let $I$ be the information structure in which the AD is more informed, and let the joint distribution of the $(DM, AD)$’s beliefs $(Y_1, Y_2)$ take the values $(\tfrac{2}{5}, 0)$, $(\tfrac{2}{5}, \tfrac{4}{5})$, and $(\tfrac{4}{5}, \tfrac{4}{5})$, each with probability $\tfrac{1}{3}$.

Let $I'$ be the information structure in which the DM is more informed, and let the joint distribution of the $(DM, AD)$’s beliefs $(Y'_2, Y'_1)$ take the values $(\tfrac{2}{5}, \tfrac{2}{5})$, $(\tfrac{2}{5}, \tfrac{3}{5})$, and $(\tfrac{4}{5}, \tfrac{3}{5})$, each with probability $\tfrac{1}{3}$.

The two martingales, $Y$ and $Y'$, corresponding to these information structures are illustrated in Figure~\ref{fig:mar}.

In both information structures, $I$ and $I'$, if the agents act and advise according to their welfare-maximizing beliefs, the resulting distribution over outcomes $O$ is given by
\begin{align*}
    \mathbb{P}[o=(1,1,2)]=\mathbb{P}[o=(1,2,2)]=\frac{1}{3}, \mathbb{P}[o=(2,2,1)]=\frac{1}{15}, \text{ and } \mathbb{P}[o=(2,2,3)]=\frac{4}{15}.
\end{align*}
Hence, it is impossible to determine whether the underlying information structure is $I$ or $I'$, and therefore to identify the more-informed agent.
\end{example}
The impossibility of identifying the more-informed agent in a non-strategic environment provides a high-level intuition for a similar conclusion in the strategic setting, although the arguments in the latter case are more intricate. In light of this observation, the positive result of Theorem~\ref{theo:simul} may appear surprising. When agents do not attempt to manipulate the test, identification of the more-informed agent is impossible. Yet when both strategically manipulate the test, an effect that might be \emph{mistakenly} viewed as merely adding noise to the system, identification becomes possible.

\subsection{Future Directions}

Environments with unobserved counterfactuals are common in decision-making. It is often unclear which past problems are meaningfully comparable to the current one, making counterfactual estimation contentious. As a result, assessing the quality of decision-making in such settings is challenging—and sometimes impossible. Identifying the boundaries between what can and cannot be achieved in these environments is therefore an important research direction. This paper contributes to that effort by studying a special yet realistic scenario in which the reference point for evaluating the DM’s performance is an adviser (AD) who proposes alternatives to the current decision path. Developing other ordinal or cardinal approaches for assessing decision-making without observable counterfactuals remains an important avenue for future research.

In the specific context of this paper, and as discussed in Section~\ref{sec:generalTests}, it remains open to characterize the power of general tests beyond scoring tests. In particular, can such tests identify the more informed agent? And if so, can they also incentivize the DM to take welfare-improving actions?

Our analysis focuses on a homogeneous population with common interests. In many applications, however, preferences over decision outcomes may differ across groups. In such cases, it is natural to consider a variant of our framework in which outcomes are represented by a vector of group-specific utilities. In this environment, optimality is no longer well defined, and Pareto efficiency becomes a natural alternative. Some insights—especially negative ones—can be drawn from our results, which correspond to a single-dimensional special case of this broader setting.

Finally, our model assumes that decision consequences are observed immediately. In practice, decisions may have both short- and long-term effects. Incorporating gradual revelation of outcomes into models of decision assessment is an interesting direction for future work. Intuitively, delayed feedback only makes inference harder for an external observer, suggesting that our negative results are likely to extend to these more general environments.

\bibliographystyle{plainnat}
\bibliography{biblio}

@article{kremer2014implementing,
  title={Implementing the “wisdom of the crowd”},
  author={Kremer, Ilan and Mansour, Yishay and Perry, Motty},
  journal={Journal of Political Economy},
  volume={122},
  number={5},
  pages={988--1012},
  year={2014},
  publisher={University of Chicago Press Chicago, IL}
}

@article{lai1985asymptotically,
  title={Asymptotically efficient adaptive allocation rules},
  author={Lai, Tze Leung and Robbins, Herbert},
  journal={Advances in applied mathematics},
  volume={6},
  number={1},
  pages={4--22},
  year={1985},
  publisher={Academic Press}
}

@article{garivier2008upper,
  title={On upper-confidence bound policies for non-stationary bandit problems},
  author={Garivier, Aur{\'e}lien and Moulines, Eric},
  journal={arXiv preprint arXiv:0805.3415},
  year={2008}
}

@article{thompson1933likelihood,
  title={On the likelihood that one unknown probability exceeds another in view of the evidence of two samples},
  author={Thompson, William R},
  journal={Biometrika},
  volume={25},
  number={3/4},
  pages={285--294},
  year={1933},
  publisher={JSTOR}
}

@article{bolton1999strategic,
  title={Strategic experimentation},
  author={Bolton, Patrick and Harris, Christopher},
  journal={Econometrica},
  volume={67},
  number={2},
  pages={349--374},
  year={1999},
  publisher={Wiley Online Library}
}

@article{robbins1952some,
  title={Some aspects of the sequential design of experiments},
  author={Robbins, Herbert},
  journal={Bulletin of the American Mathematical Society},
  volume={58},
  number={5},
  pages={527–-535},
  year={1952}
}

@article{bikhchandani2024information,
  title={Information cascades and social learning},
  author={Bikhchandani, Sushil and Hirshleifer, David and Tamuz, Omer and Welch, Ivo},
  journal={Journal of Economic Literature},
  volume={62},
  number={3},
  pages={1040--1093},
  year={2024},
  publisher={American Economic Association 2014 Broadway, Suite 305, Nashville, TN 37203-2425}
}

@article{bikhchandani1992theory,
  title={A theory of fads, fashion, custom, and cultural change as informational cascades},
  author={Bikhchandani, Sushil and Hirshleifer, David and Welch, Ivo},
  journal={Journal of political Economy},
  volume={100},
  number={5},
  pages={992--1026},
  year={1992},
  publisher={The University of Chicago Press}
}

@article{banerjee1992simple,
  title={A simple model of herd behavior},
  author={Banerjee, Abhijit V},
  journal={The quarterly journal of economics},
  volume={107},
  number={3},
  pages={797--817},
  year={1992},
  publisher={MIT Press}
}

@article{gneiting2007strictly,
  title={Strictly proper scoring rules, prediction, and estimation},
  author={Gneiting, Tilmann and Raftery, Adrian E},
  journal={Journal of the American statistical Association},
  volume={102},
  number={477},
  pages={359--378},
  year={2007},
  publisher={Taylor \& Francis}
}

@article{flesch2023equilibrium,
  title={Equilibrium in two-player stochastic games with shift-invariant payoffs},
  author={Flesch, J{\'a}nos and Solan, Eilon},
  journal={Journal de Math{\'e}matiques Pures et Appliqu{\'e}es},
  volume={179},
  pages={68--122},
  year={2023},
  publisher={Elsevier}
}

@article{blackwell1956analog,
  title={An analog of the minimax theorem for vector payoffs.},
  author={Blackwell, David},
  journal={Pacific Journal of Mathematics},
  volume={6},
  number={1},
  pages={1--8},
  year={1956}
}

@book{kechris2012classical,
  title={Classical descriptive set theory},
  author={Kechris, Alexander},
  volume={156},
  year={2012},
  publisher={Springer Science \& Business Media}
}

@article{ShpitserPearl2012,
  author    = {Ilya Shpitser and Judea Pearl},
  title     = {What Counterfactuals Can Be Tested},
  journal   = {Journal of Machine Learning Research},
  year      = {2012},
  volume    = {13},
  pages     = {837--850},
}

@inproceedings{BalkePearl1994,
  author    = {Alexander Balke and Judea Pearl},
  title     = {Counterfactual Probabilities: Computational Methods, Bounds and Applications},
  booktitle = {Proceedings of the Tenth Conference on Uncertainty in Artificial Intelligence (UAI '94)},
  year      = {1994},
  pages     = {46--54},
  publisher = {Morgan Kaufmann},
  address   = {San Francisco, CA},
}

@book{Pearl2009Causality,
  author    = {Judea Pearl},
  title     = {Causality: Models, Reasoning, and Inference},
  edition   = {2},
  publisher = {Cambridge University Press},
  address   = {Cambridge},
  year      = {2009},
}

@article{Sandroni2003Reproducible,
  author    = {Alvaro Sandroni},
  title     = {The Reproducible Properties of Correct Forecasts},
  journal   = {International Journal of Game Theory},
  year      = {2003},
  volume    = {32},
  number    = {1},
  pages     = {151--159},
}

@article{OlszewskiSandroni2008Econometrica,
  author    = {Wojciech Olszewski and Alvaro Sandroni},
  title     = {Manipulability of Future-Independent Tests},
  journal   = {Econometrica},
  year      = {2008},
  volume    = {76},
  number    = {6},
  pages     = {1437--1466},
}

@article{OlszewskiSandroni2009AOS_Manipulability,
  author    = {Wojciech Olszewski and Alvaro Sandroni},
  title     = {Manipulability of Future-Independent Tests},
  journal   = {Annals of Statistics},
  year      = {2009},
  volume    = {37},
  number    = {2},
  pages     = {1010--1033},
}

@article{OlszewskiSandroni2009AOS_Nonmanipulable,
  author    = {Wojciech Olszewski and Alvaro Sandroni},
  title     = {A Nonmanipulable Test},
  journal   = {Annals of Statistics},
  year      = {2009},
  volume    = {37},
  number    = {2},
  pages     = {1013--1039},
}

@article{DekelFeinberg2006,
  author    = {Eddie Dekel and Yossi Feinberg},
  title     = {Non-Bayesian Testing of a Stochastic Prediction},
  journal   = {Review of Economic Studies},
  year      = {2006},
  volume    = {73},
  number    = {4},
  pages     = {893--906},
}

@article{SandroniShmaya2013,
  author    = {Alvaro Sandroni and Eran Shmaya},
  title     = {Nonmanipulable Prequential Tests for Exchangeable Theories},
  journal   = {Econometrica},
  year      = {2013},
  volume    = {81},
  number    = {6},
  pages     = {315--318},
}

@article{AlNajjarWeinstein2008,
  author    = {Nabil I. Al{-}Najjar and Jonathan Weinstein},
  title     = {Comparative Testing of Experts},
  journal   = {Econometrica},
  year      = {2008},
  volume    = {76},
  number    = {3},
  pages     = {541--559},
}

@article{KavalerSmorodinsky2019,
  author    = {Itay Kavaler and Rann Smorodinsky},
  title     = {On Comparison of Experts},
  journal   = {Games and Economic Behavior},
  year      = {2019},
  volume    = {118},
  pages     = {94--109},
}

@incollection{Olszewski2011Calibration,
  author    = {Wojciech Olszewski},
  title     = {Calibration and Expert Testing},
  booktitle = {The New Palgrave Dictionary of Economics},
  editor    = {Steven N. Durlauf and Lawrence E. Blume},
  edition   = {2},
  publisher = {Palgrave Macmillan},
  address   = {London},
  year      = {2011},
}

@article{blackwell1953equivalent,
  title={Equivalent comparisons of experiments},
  author={Blackwell, David},
  journal={The annals of mathematical statistics},
  pages={265--272},
  year={1953},
  publisher={JSTOR}
}

\appendix

\section{Proofs of the Results in Section \ref{sec:results}}\label{ap:proofs}

\subsection{Proof of Theorem \ref{theo:simul}}\label{sec:sim-proof}
We follow the approach outlined in Section~\ref{sec:intuition}, but without imposing Assumptions~\ref{as:unique} and~\ref{as:posteriors}.

Consider the scoring function $s$ defined in Equation~\eqref{eq:scor}. Recall the complete-information game $G(s,p)$ induced by the scoring function $s$. Using the notation $e_i = \mathbb{E}_p[X_i]$, the game $G(s,p)$ is specified as follows.
\begin{table}[ht]
\begin{center}
\begin{tabular}{cccccc}
                              & 1                                       & 2                                       & 3                                       & $\cdots$                      & $n$                                     \\ \cline{2-6} 
\multicolumn{1}{c|}{1}        & \multicolumn{1}{c|}{$e_1$}              & \multicolumn{1}{c|}{$-\frac{e_1}{n-1}$} & \multicolumn{1}{c|}{$-\frac{e_1}{n-1}$} & \multicolumn{1}{c|}{$\cdots$} & \multicolumn{1}{c|}{$-\frac{e_1}{n-1}$} \\ \cline{2-6} 
\multicolumn{1}{c|}{2}        & \multicolumn{1}{c|}{$-\frac{e_2}{n-1}$} & \multicolumn{1}{c|}{$e_2$}              & \multicolumn{1}{c|}{$-\frac{e_2}{n-1}$} & \multicolumn{1}{c|}{$\cdots$} & \multicolumn{1}{c|}{$-\frac{e_2}{n-1}$} \\ \cline{2-6} 
\multicolumn{1}{c|}{3}        & \multicolumn{1}{c|}{$-\frac{e_3}{n-1}$} & \multicolumn{1}{c|}{$-\frac{e_3}{n-1}$} & \multicolumn{1}{c|}{$e_3$}              & \multicolumn{1}{c|}{$\cdots$} & \multicolumn{1}{c|}{$-\frac{e_3}{n-1}$} \\ \cline{2-6} 
\multicolumn{1}{c|}{$\vdots$} & \multicolumn{1}{c|}{$\vdots$}           & \multicolumn{1}{c|}{$\vdots$}           & \multicolumn{1}{c|}{$\vdots$}           & \multicolumn{1}{c|}{$\ddots$} & \multicolumn{1}{c|}{$\vdots$}           \\ \cline{2-6} 
\multicolumn{1}{c|}{$n$}      & \multicolumn{1}{c|}{$-\frac{e_n}{n-1}$} & \multicolumn{1}{c|}{$-\frac{e_n}{n-1}$} & \multicolumn{1}{c|}{$-\frac{e_n}{n-1}$} & \multicolumn{1}{c|}{$\cdots$} & \multicolumn{1}{c|}{$e_n$}              \\ \cline{2-6} 
\end{tabular}
\end{center}
\caption{The game $G(s,p)$}\label{fig:game}
\end{table}

We first establish the uniqueness of the DM’s optimal strategy in this game, thereby validating Assumption~\ref{as:unique} for the scoring function~$s$.

\begin{lemma}\label{lem:opt}
The unique optimal strategy of the DM (the row player) in the game~$G(s,p)$ is given by $(\frac{c}{e_1},...,\frac{c}{e_n})$ where $c=(\frac{1}{e_1}+...+\frac{1}{e_n})^{-1}$.
\end{lemma}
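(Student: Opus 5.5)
The plan is to verify directly that $x^*=(\tfrac{c}{e_1},\dots,\tfrac{c}{e_n})$ is an equalizer strategy with value $0$. Then, using the fact (Lemma~\ref{lem:val0} is not needed, but consistent with it) that the AD also has an equalizer, show that any other DM strategy guarantees strictly less than $0$. Throughout, $e_i=\E_p[X_i]\in[1,K]$, so all $e_i>0$ and $x^*$ is a well-defined interior probability vector.

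\textbf{Step 1 (DM's equalizer).} For a row mixture $x$ and AD's column $k$, the DM's payoff is
\[
x_k e_k-\tfrac{1}{n-1}\sum_{i\neq k}x_ie_i=\tfrac{n}{n-1}x_ke_k-\tfrac{1}{n-1}\sum_i x_ie_i .
\]
Under $x^*$ every $x_ie_i=c$, so each column yields $\tfrac{n}{n-1}c-\tfrac{n}{n-1}c=0$. Hence the DM guarantees $0$ and $\val[G(s,p)]\ge 0$.

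\textbf{Step 2 (AD's guarantee).} Let the AD play uniformly, $y_k=1/n$. Then row $i$ gives
\[
e_i\Bigl(\tfrac1n-\tfrac{n-1}{n}\cdot\tfrac{1}{n-1}\Bigr)=0 .
\]
So the AD also guarantees $0$, and the value is exactly $0$.

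\textbf{Step 3 (uniqueness).} This is the only step needing care. Suppose $x$ is optimal, so it guarantees $0$ against every column. Then for every $k$ we have $n x_k e_k\ge \sum_i x_ie_i=:S$. Summing over $k$ gives $nS\ge nS$, so equality must hold in every inequality. Thus $x_ke_k=S/n$ for all $k$, i.e.\ $x_k\propto 1/e_k$. Normalization then forces $x=x^*$.

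The main obstacle is just making sure the summation argument is airtight. I would spell out that each column inequality must be tight because their sum is tight. Positivity of the $e_i$ guarantees that $x_k\propto 1/e_k$ determines a unique probability vector.
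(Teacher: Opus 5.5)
Your proof is correct. It reaches the same key identity as the paper (every column pays the DM the same amount, hence $x_ie_i$ is constant), but it justifies that step by a different, more self-contained route.

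The paper argues through supports. An optimal DM strategy must be fully supported, since otherwise the AD plays an unused column for a strictly negative score. Then an optimal AD strategy must be fully supported, since otherwise rows the AD never uses pay $-e_k/(n-1)<0$ and the DM drops them. The indifference principle then equalizes all columns.

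You instead compute the value exactly via two explicit equalizers: $x^*$ for the DM and the uniform strategy for the AD. You then note that the column payoffs of any $x$ sum to zero identically, so the $n$ constraints ``column $k$ pays at least $0$'' must all be tight. This is the indifference principle applied to a concrete fully supported optimal AD strategy, the uniform one, and it lets you skip both support arguments.

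Your route also makes explicit a step the paper leaves implicit. The paper's ``which is impossible'' relies on $\val[G(s,p)]\ge 0$, which is exactly your Step~1 and is not established before that point. Your Step~2 is the fact the paper later invokes in the proof of Theorem~\ref{theo:simul}, when the AD ``can play uniformly and guarantee an expected score of $0$.''
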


\begin{proof}
    If, in equilibrium, the mixed strategy of the DM is not fully supported, then the AD can choose an action that the DM does not play and obtain a strictly negative score, which is impossible.

Furthermore, if, in equilibrium, the strategy of the AD is not fully supported, then the DM will avoid playing actions that the AD never uses, contradicting the requirement that the DM’s mixed strategy be fully supported.
    
    Let $(\alpha_1, \ldots, \alpha_n)$ be an optimal strategy for the DM. By the indifference principle, for every pair of actions $i, i' \in [n]$ of the AD, we have
    $$\frac{n}{n-1}\alpha_i e_i-\frac{1}{n-1}\sum_{j\in [n]} \alpha_j e_j=\frac{n}{n-1}\alpha_i' e_i-\frac{1}{n-1}\sum_{j\in [n]} \alpha_j e_j.$$
    Hence, $\alpha_i e_i = \alpha_{i'} e_{i'}$ for every $i, i' \in [n]$. This condition uniquely determines the strategy $(\alpha_1,...,\alpha_n)=(\frac{c}{e_1},...,\frac{c}{e_n})$ where $c=(\frac{1}{e_1}+...+\frac{1}{e_n})^{-1}$. 
\end{proof}

We now prove Theorem~\ref{theo:simul}. By Lemma~\ref{lem1}, it suffices to show that $\val[G(s,I)] \ge 0$ if and only if the DM is strictly more informed.

Suppose first that the DM is strictly more informed. Consider the zero-sum game in which the DM ignores her additional information and bases her decision solely on $q^j$. This game, denoted $G(s, q^j)$ and depicted in Table~\ref{fig:game}, admits the mixed strategy $(\tfrac{c}{e_1}, \ldots, \tfrac{c}{e_n})$ for $c = \left(\tfrac{1}{e_1} + \cdots + \tfrac{1}{e_n}\right)^{-1}$, which guarantees a score of~0 for the DM. Therefore, when the DM utilizes her additional information~$\br$, she can only achieve a (weakly) higher score in the actual game with information $I = (p, \bq, \br)$, and hence $\val[G(s,I)] \ge 0$.

Now consider the case in which the AD is strictly more informed. This implies that there exists some $q^j$, without loss of generality, $j = 1$, such that $\{r^{1, j'} : j' \in [l]\} \not\subset A_i$ for every $i \in [n]$; that is, there is no single welfare-maximizing action common to all $r^{1, j'}$ with $j' \in [l]$. 

By the Minimax Theorem, it suffices to show that for every strategy of the DM, the AD can best respond and obtain a strictly negative score. Let $\alpha = (\alpha_1, \ldots, \alpha_n)$ be a mixed strategy of the DM when she receives signal $j = 1$. Denote by $D = \arg\min {\alpha_i : i \in [n]} \subset [n]$ the set of actions played with minimal probability under $\alpha$, and recall that $N^*(r^{1, j'})$ denotes the set of welfare-maximizing actions at the posterior $r^{1, j'}$. The AD will construct a best response that exploits these minimal-probability actions to get a strictly negative score.

We know that there exists some $j' \in [l]$ such that $D \neq N^*(r^{1, j'})$, without loss of generality, let $j' = 1$. By Lemma~\ref{lem:opt}, in the game $G(s, r^{1,1})$, the DM must assign the minimal probability precisely to the actions in $N^*(r^{1, j'})$. Equivalently, after the AD receives the signal $(1,1)$, the DM does not play an optimal strategy, allowing the AD to achieve a strictly negative expected score. For all other signals $(j, j') \neq (1,1)$, the AD can play uniformly and guarantee an expected score of $0$. Overall, such a strategy yields the AD a strictly negative total expected score. This concludes the proof of the theorem.

\subsection{Proof of Corollary \ref{cor:finite}}\label{ap:proof-cor}
The proof closely follows the argument used in the proof of Lemma~\ref{lem1} in Appendix~\ref{ap:missing}, with the only difference being that, instead of applying the law of large numbers, we use Hoeffding’s inequality. Note that the scoring function is uniformly bounded by $K$, i.e., $|s(o[t])| \le K$.

If the DM is more informed, she can use her optimal strategy in the game $G(s, I[t])$ and guarantee an expected score of~0 in every period $t \in \mathbb{N}$. By Hoeffding’s inequality, we obtain
\begin{align*}
    \mathbb{P}\left[ \sum_{t\in [T]} s(o[t])\leq -T^{3/4} \right] \leq \exp \left(-\frac{\sqrt{T}}{2K} \right).
\end{align*}

    If the AD is more informed, then by the definition of strict information superiority (see Definition~\ref{def:dynamic-strict}), there exists $\epsilon > 0$ such that the AD is $\epsilon$-strictly more informed (i.e., Equation~\eqref{eq:epsilon-strict} holds) for all periods $t \in \mathbb{N}$. Let $\mathcal{I}_\epsilon$ denote the set of all information structures satisfying Equation~\eqref{eq:epsilon-strict}. Notice that $\mathcal{I}_\epsilon$ is a closed set and that $\val[G(s, I)]$ is a continuous function of $I$. Therefore, its maximum is attained and denoted by $-\delta < 0$. If the AD uses his optimal strategy in the game $G(s, I[t])$, he guarantees a score of at most $-\delta$ in every period $t \in \mathbb{N}$. By Hoeffding’s inequality, we obtain
 \begin{align*}
    &\mathbb{P}\left[ \sum_{t\in [T]} s(o[t])\geq -T^{3/4} \right] = \mathbb{P}\left[ \sum_{t\in [T]} s(o[t]) + \delta T \geq -T^{3/4} +\delta T \right] \leq \\
    \leq &\mathbb{P}\left[ \sum_{t\in [T]} s(o[t])+\delta T \geq T^{3/4} \right] \leq  \exp \left(-\frac{\sqrt{T}}{2K} \right) 
\end{align*}
where the first inequality holds for all sufficiently large $T$, specifically for $T \ge 16\delta^{-4}$.

\subsection{Proof of Theorem \ref{theo:welfare}}\label{sec:welfare-proof}

Consider the following instance with two arms. Arm~1 is a safe arm that deterministically yields a welfare of $k \gg 1$, while Arm~2 is a risky arm that yields either a low welfare of~1 or a high welfare of~$k^3$. The only source of uncertainty lies in the probability~$p$ that the risky arm produces the high welfare~$k^3$. In this case, a scoring function is fully characterized by six numbers: $s(1, i, k)$ for the two actions $i = 1, 2$ of the AD, and $s(2, i, u)$ for $i = 1, 2$ and $u \in \{1, k^3\}$.

In the game $G(s,p)$, we introduce the notation for maximal welfare, analogous to 
$U^*_{DM}$ in Equations~\eqref{eq:udm1} and~\eqref{eq:udm2}, and for equilibrium 
welfare, analogous to $W(s)$ in Equation~\eqref{eq:w}. The maximal welfare is simply $u^*(p)$, while the welfare in equilibrium is given by

\begin{align*}
  w(s,p)=\min \{ \E_{\eta,p} [u_{a_{DM}}]: \eta \text{ is and equilibrium of } G(s,p) \}. 
\end{align*}
The ratio of these two terms is denoted by $R(p)=w(s,p)/u^*(p)$.

We argue that it suffices to show that, for every separating scoring function $s$ and every $c' > \tfrac{1}{2}$, there exists a prior $p$ such that $R(p) \le c'$. Indeed, assuming this claim holds, given any separating scoring function $s$ and a constant $c > \tfrac{1}{2}$, we can choose some $c' \in (\tfrac{1}{2}, c)$ and a prior $p$ for which $R(p) \le c'$.

Now, consider the information structure $I = (p, \bq, \br)$ defined as follows: with probability $(1 - \epsilon)$, both the DM and the AD receive the posterior $q^1 = r^{1,1} = p$; with probability $\epsilon$, the AD receives the posterior $q^2 = p' \neq p$ for some $0 < p' < 1$, while the DM observes the realization of the second arm (drawn from the prior $p'$).

With probability $(1 - \epsilon)$, the agents play the game $G(s, p)$. Therefore, the total expected welfare satisfies 
\begin{equation*}
    W(s,I)\leq (1 - \epsilon)\, \mathbb{E}_{\eta, p}[u_{a_{DM}}] + \epsilon k^3 
        \le c' u^*(p) + \epsilon k^3 
        \le (c' + \epsilon k^3) u^*(p).
\end{equation*}
For sufficiently small~$\epsilon$, we have $c' + \epsilon k^3 \le c$. Hence, the information structure $I$ constitutes an example of strict informational superiority for which the welfare does not exceed $cu^*(I)$.

Henceforth, we prove the existence of a bad prior $p$ for which $R(p) \le c$, given any separating scoring function. Our arguments apply to both cases, whether the tie-breaking agent is the DM or the AD.

We begin by showing that the signs of the scores must follow the pattern of a matching-pennies game in both $G(s,0)$ and $G(s,1)$—that is, in the games where the realization of arm~2 is known to both agents. Roughly speaking, we establish that
\begin{align*}
&\sign(s(1,1,k))=\sign(s(2,2,1))\neq \sign(s(1,2,k))=\sign(s(2,1,1)) \text{ and } \\
&\sign(s(1,1,k))=\sign(s(2,2,k^3))\neq \sign(s(1,2,k))=\sign(s(2,1,k^3)),    
\end{align*}
while treating separately the cases in which $s(i, j, u) = 0$.

We first focus on the signs of the pair $(s(1,1,k), s(1,2,k))$, which, for brevity, we denote by $(x, y)$. It is impossible to have $x, y \le 0$, since in the game $G(s,1)$ (i.e., the high realization of arm~2), the equilibrium (with game value~0) would have the DM choosing arm~1 with probability~1, yielding an approximation ratio of $R(1) = \tfrac{1}{k^2} < c$. Similarly, it is impossible to have $x, y \ge 0$, since in the game $G(s,0)$ (i.e., the low realization of arm~2), the equilibrium (also with game value~0) would have the DM choosing arm~2 with probability~1, yielding an approximation ratio of $R(1) = \tfrac{1}{k} < c$. Therefore, the only remaining possibility is that $\sign(x) \ne \sign(y)$. Without loss of generality (by relabeling the AD’s actions), we assume $x > 0$ and $y < 0$. Moreover, we normalize the scoring function by scaling it so that $x = s(1,1,k) = 1$, which does not affect the analysis.

We now focus on the game $G(s,0)$ (the low realization of arm~2) and show that $s(2,1,1) < 0$ and $s(2,2,1) > 0$.

First, $s(2,2,1) \le 0$ is impossible: in that case, the zero-value equilibrium of this game would necessarily be the pure equilibrium $(a_{DM}, a_{AD}) = (2,2)$, implying that the DM chooses the risky arm with probability~1 and achieves an approximation ratio of $R(0) = \tfrac{1}{k} < c$. Hence, $s(2,2,1) > 0$.

Second, $s(2,1,1) \ge 0$ is also impossible, since in that case the action $a_{DM} = 2$ would guarantee a (weakly) positive score and would thus be chosen in an inefficient equilibrium. Therefore, we must have $s(2,1,1) < 0$.

We now focus on the game $G(s,1)$ (the high realization of arm~2) and show that \\ $s(2,1,k^3) \le 0$ and $s(2,2,k^3) \ge 0$.

First, $s(2,2,k^3) < 0$ is impossible, since in that case the action $a_{AD} = 2$ would guarantee a strictly negative score; hence, $s(2,2,k^3) \ge 0$.

Assume, by way of contradiction, that $s(2,1,k^3) > 0$. Consider the game $G(s,1 - \epsilon)$, where the scores for the action $a_{DM} = 2$ are 
$$((1-\epsilon)s(2,1,k^3)+\epsilon s(2,1,1),(1-\epsilon)s(2,2,k^3)+\epsilon s(2,2,1))$$
for the actions $a_{AD} = 1, 2$, respectively. For sufficiently small $\epsilon$, both expressions are strictly positive. This implies that in the game $G(s,1 - \epsilon)$, the DM can guarantee a strictly positive score by playing $a_{DM} = 2$, leading to a contradiction.

To summarize, we have shown that for $p < 1$, the game $G(s,p)$ exhibits the following pattern of score signs: $s_p(1,1), s_p(2,2) > 0$ and $s_p(1,2), s_p(2,1) < 0$, where $s_p : [2] \times [2] \to \mathbb{R}$ denotes the expected score under prior $p$ (this notation is relevant for $s_p(2, \cdot)$ only). Moreover, we have normalized $s_p(1,1) = 1$.

The game $G(s,p)$ then has a unique mixed equilibrium in which the DM chooses action~2 with probability $\frac{1}{1 - s_p(2,1)}$. Observe that $-s_p(2,1)$ is a linear function of $p$, which we denote by $L(p) = -s_p(2,1)$.\footnote{The negative sign is used so that $L(p)$ is positive.} We also have $L(p) \ge 0$ for all $p \in [0,1]$.

We now turn to the welfare-efficiency analysis of this mixed behavior, in which the risky arm is chosen with probability $\frac{1}{1 + L(p)}$, for a specific (carefully chosen) value of $p = \tfrac{1}{k}$. In this case, the risky arm (arm~2) is superior and yields an expected welfare of $k^2$, which is substantially higher than the safe welfare of $k$. The resulting approximation ratio is given by
\begin{align*}
    R(p)=\frac{\frac{L(p)}{1+L(p)}k + \frac{1}{1+L(p)}k^2}{k^2} = \frac{\frac{1}{1+L(p)}k^2+O(k)}{k^2}=\frac{1}{1+L(p)}+O\left(\frac{1}{k} \right).
\end{align*}

If $R(p) \le c$, we have found a prior $p$ as required. Otherwise, suppose that $R(p) \ge c > \tfrac{1}{2}$. Then, by the equation above, we must have $L(p) \le 1$. Since $L(p)$ is linear, positive, and satisfies $L(\tfrac{1}{k}) \le 1$, its slope cannot be more negative than $-1 - O(\tfrac{1}{k})$. Consequently, $L(0) \le 1 + O(\tfrac{1}{k})$.

This implies that in the game $G(s,0)$, where the safe action is welfare-maximizing and yields a welfare of $k$ (which is substantially higher than~$1$), the safe action is taken only with probability
$$1-\frac{1}{1+L(p)} \leq 1-\frac{1}{2+O(\frac{1}{k})}=\frac{1}{2}+O \left( \frac{1}{k} \right)$$ and the approximation ratio is at most
$$R(0)=\frac{(\frac{1}{2}+O(\frac{1}{k}))k+(\frac{1}{2}-O(\frac{1}{k}))\cdot 1}{k}=\frac{1}{2}+O \left(\frac{1}{k} \right)<c,$$
Hence, we have found a prior $p = 0$ as required. This concludes the proof of Theorem~\ref{theo:welfare}.

\subsection{Proof of Theorem \ref{theo:seq}}\label{sec:seq-proof}

By Lemma~\ref{lem2}, it suffices to prove the nonexistence of a separating scoring function $s$. We establish this through a unified line of argument that applies to both cases: when the DM acts first and when the AD acts first. We refer to the first-moving agent as the \emph{leader} (LE) and to the second-moving agent as the \emph{follower} (FO), where $\{LE, FO\} = \{DM, AD\}$. We assume that the FO is the minimizing agent (after multiplying all scores by $-1$ when $FO = DM$).

We consider a binary-arm environment ($n = 2$). The set of possible realizations of the arms (the set $U$) will be specified later. 

For $i = 1, 2$, we consider the (hypothetical) zero-sum game $H_i$ between \emph{nature} (the maximizing agent) and the FO (the minimizing agent). Nature chooses a realization profile of the arms $(u_1, u_2) \in U$, while the FO simultaneously chooses an arm $j \in [2]$. The payoff is determined by the score $s$ of the resulting outcome $o = (i, j, u_k)$, where $k$ denotes the action taken by the DM (either $i$ or $j$).

We denote by $c_i = \val[H_i]$ the value of the game $H_i$, and by $Z_i \subset \Delta(U)$ the set of nature’s optimal (mixed) strategies in $H_i$. Let $y_i \in \Delta([2])$ be an optimal strategy of the FO in $H_i$. Finally, for $p \in \Delta(U)$, we denote by $\best_i(p)$ the best-reply payoff of the FO against nature’s mixed action $p$ in the game $H_i$.

We begin by showing that any separating scoring function $s$ must satisfy that the values of both games $H_1$ and $H_2$ are zero; that is, $c_1 = c_2 = 0$.

For $i = 1, 2$, we cannot have $c_i > 0$, since in that case, for any $p \in Z_i$, the leader in the game $G(s,p)$ could choose action $i$ and thereby secure a strictly positive score. Therefore, $c_1, c_2 \le 0$.

Assume that $c_1 < 0$. Then, in the game $H_2$, for every prior $p \in \Delta(U)$, we must have $\best_2(p) \ge 0$. Otherwise, in the game $G(s,p)$, the FO could guarantee a strictly negative score by playing $y_1$ after the LE chooses action~1 and by playing a best-reply action that yields $\best_2(p) < 0$ after the LE chooses action~2.
On the other hand, since $c_2 \le 0$, we must also have $\best_2(p) \le 0$, and therefore $\best_2(p) = 0$ for every prior $p$. If this holds, then in every game $G(s,I)$ with more-informed and less-informed agents, both agents can guarantee a score of~0: the leader by choosing action~2, and the follower by playing $y_1$ after action~1 and a best reply to his belief about the state after action~2. Hence, $\val[G(s,I)] = 0$ for every $I$, implying that $s$ is not separating. Therefore, $c_1 = 0$. By applying a symmetric argument, we similarly obtain $c_2 = 0$.

The FO cannot serve as the tie-breaking agent, since the action $y_i$ following the leader’s choice of $i = 1, 2$ guarantees an expected score of~0 regardless of the realizations of the arms. This remains true even if the FO is less informed—or entirely uninformed. Therefore, the tie-breaking agent must be the LE.

We now show that the sets $Z_1$ and $Z_2$ must take a very specific form. Recall that $Z_i$ is the set of all priors that guarantee a score of~0 for the leader, whereas any prior $p \in \Delta(U) \setminus Z_i$ yields a strictly negative score if the FO knows this prior (i.e., if $\best_i(p) < 0$). 

First, we argue that $Z_1 \cup Z_2 = \Delta(U)$. Indeed, if $p \notin Z_1 \cup Z_2$, then in the game $G(s,p)$ the FO obtains a strictly negative score, since $\max_{i = 1, 2} \best_i(p) < 0$.

Second, we have $Z_i \subset A_j$ for some $j \in {1, 2}$, recalling that $A_j$ is the set of priors for which the welfare-maximizing action is $j$. Suppose, by way of contradiction, that this is not the case. Then there exist $z \in \inter(A_1)$ and $z' \in \inter(A_2)$ such that $z, z' \in Z_i$. We can now repeat the argument used in the proof of Lemma~\ref{lem:equiv}. Consider an information structure in which the FO is strictly more informed and receives the posteriors $z$ and $z'$ with equal probability $\tfrac{1}{2}$, while the LE is uninformed. In this case, the LE can guarantee a score of~0, and be selected, since the LE is the tie-breaking agent, by choosing action~$i$, even though she is strictly less informed.

To summarize, the two convex closed sets $Z_1$ and $Z_2$ satisfy $Z_1 \cup Z_2 = \Delta(U) = A_1 \cup A_2$, and each $Z_i$ is contained in some $A_j$. This is possible only in one of the following two cases: either $Z_1 = A_1$ and $Z_2 = A_2$, or vice versa, $Z_1 = A_2$ and $Z_2 = A_1$.

We now arrive at the final step of the proof, which is to show that $Z_1 = A_1$ is impossible (and, by symmetry, $Z_1 = A_2$ is also impossible). To this end, we use the fact that the scoring function cannot depend on counterfactuals; that is, the score may depend only on the realization of one of the arms.

We now specify the set of realizations $U$. Assume that both arms $i = 1, 2$ have a low realization of~1 and a high realization of~2; that is, $U_1 = U_2 = \{1, 2\}$. A prior $p$ is then given by $p = (p(1,1), p(1,2), p(2,1), p(2,2))$. In this setting, $A_1=\{p:p(2,1)\geq p(1,2)\}$. 

Let $L_i : \Delta(U) \to \mathbb{R}$ denote the linear function representing the FO’s score when he chooses action $i \in \{1, 2\}$ in the game $H_1$. The assumption that $Z_1 = A_1$ implies that one of these two linear functions must satisfy $L_i \equiv 0$. Without loss of generality, let $L_1 \equiv 0$. The second linear function, $L_2$, must then be strictly positive exactly on the set $\Delta(U) \setminus A_1 = \{p : p(2,1) > p(1,2)\}$.

However, $L_2$ depends only on the realization of a single arm. Without loss of generality, assume it depends on the realization of Arm~1. Thus, $L_2(p)$ has the form
$$L_2(p)=\alpha (p(1,1)+p(1,2))+\beta (p(2,1)+p(2,2))+\gamma \text{ form some constants } \alpha,\beta,\gamma.$$
The following two equations must therefore be equivalent.
$$L_2(p)=0 \Leftrightarrow p(1,2)=p(2,1).$$
But this is impossible. Indeed, the fact that $p(1,1)$ does not appear in the equation $p(1,2) = p(2,1)$ implies that $\alpha = 0$, while the fact that $p(2,2)$ does not appear in the same equation implies that $\beta = 0$. However, $\alpha = \beta = 0$ implies that the set of solutions to $L_2(p) = 0$ is either empty (if $\gamma \neq 0$) or the entire simplex $\Delta(U)$ (if $\gamma = 0$). This contradiction completes the proof of Theorem~\ref{theo:seq}. $\blacksquare$

\subsection{Proof of Proposition \ref{pro:ad1}}
By Lemma \ref{lem1}, it suffices to show that $\val[G(s,I)]>0$ if and only if the DM is strictly more informed.

We say that the AD’s advice is welfare-maximizing if $a_{AD}\in \arg\max_{i=1,2}\E[X_i]$, recalling that $\E[X_1]$ is the deterministic payoff of the safe action.

Suppose first that the DM is strictly more informed. Consider the following strategy for the DM. Using her additional information, the DM determines whether the AD’s advice is welfare-maximizing. If it is, she chooses the safe action $a_{DM}=1$. If it is welfare-minimizing, she instead chooses the risky action $a_{DM}=2$. Since the DM is strictly more informed with strictly positive probability, she chooses the risky action with strictly positive probability; in this event, the expected score is strictly positive. In all remaining cases, the score is zero. Hence, the expected score is strictly positive.

Now suppose that the AD is strictly more informed. Consider the strategy in which the AD always provides welfare-maximizing advice. Since the DM has no additional information, she knows that the advice is welfare-maximizing. Choosing $a_{DM}=1$ then yields a score of zero, while choosing 
$a_{DM}=2$ yields a weakly negative score (because the advice is welfare-maximizing). Thus, the AD guarantees a weakly nonpositive score in the game. This completes the proof of Proposition \ref{pro:ad1}.

\subsection{Proof of Proposition \ref{pro:dm1}}
Consider the following instance with two arms. Arm 1 is a safe arm that yields a deterministic welfare of 2, while Arm 2 is a risky arm that yields either a low welfare of 1 or a high welfare of 3. The only source of uncertainty is the probability $p$ that the risky arm yields the high welfare 3. In this setting, a scoring function is fully characterized by six numbers: 
$s(1, i, 2)$ for the two actions $i = 1, 2$ of the AD, and $s(2, i, u)$ for $i = 1, 2$ and $u \in \{1, 3\}$.

We claim that $\min\{s(1,1,2),s(1,2,2)\}\geq 0$, that is, the safe arm must induce a weakly nonnegative score after the score-minimizing action of the AD. Suppose instead that \\ 
$\min\{s(1,1,2),s(1,2,2)\}< 0$. If there exists some pair $(i,u)$ such that 
$s(2,i,u)<0$, then in the game $G(s,p)$, where $p$ assigns probability one to 
$u$, the AD can guarantee a strictly negative score. Alternatively, if $\min\{s(1,1,2),s(1,2,2)\}< 0$ and $s(2,i,u)\geq 0$ for all pairs $(i,u)$, then the risky action becomes dominant for the DM, independently of the information structure $I$, which is impossible.

Next, recall from the proof of Theorem \ref{theo:seq} that the leader must be the tie-breaking agent. This argument applies to any binary-arm environment, and in particular to the setting of Proposition \ref{pro:dm1}. Since the DM is the leader in the present case, the inequality 
$\min\{s(1,1,2),s(1,2,2)\}\geq 0$ implies that the DM can guarantee $\val[G(s,I)]\geq 0$ by choosing the safe action, even when she is less informed. This contradiction completes the proof of Proposition \ref{pro:dm1}.




\section{Missing Proofs in Sction \ref{sec:intuition}}\label{ap:missing}

\begin{proof}[Proof of Lemma \ref{lem1}]
    Assume that the tie-breaking agent is the DM. If the DM is more informed, she can apply her optimal strategy in the game $G(s,I[t])$ in every period $t \in \mathbb{N}$. By the law of large numbers, this guarantees $\liminf_{T\to \infty} \frac{1}{T}\sum_{t\in [T]}s(o[t]) \geq 0$.

    The treatment of the case in which the AD is more informed is slightly more involved. By the definition of strict informational superiority (see Definition~\ref{def:dynamic-strict}), there exists $\epsilon > 0$ such that the DM is $\epsilon$-strictly more informed (i.e., Equation~\eqref{eq:epsilon-strict} holds) for all periods $t \in \mathbb{N}$. Let $\mathcal{I}_\epsilon$ denote the set of all information structures satisfying Equation~\eqref{eq:epsilon-strict}. Observe that $\mathcal{I}_\epsilon$ is a closed set, and that $\val[G(s,I)]$ is a continuous function of $I$. Hence, its maximum is attained, and we denote it by $-\delta < 0$.

If the AD plays his optimal strategy in the game $G(s,I[t])$ in every period $t \in \mathbb{N}$, then by the law of large numbers, this guarantees
$\liminf_{T\to \infty} \frac{1}{T}\sum_{t\in [T]}s(o[t]) \leq -\delta$.

    The case in which the tie-breaking agent is the AD is treated in a similar manner.
\end{proof}

\begin{proof}[Proof of Lemma \ref{lem2}] The nonexistence of a separating scoring function implies that, for every scoring function $s$ and every choice of tie-breaking agent, there exist an information structure $I$ and a less-informed agent $AG \in {DM, AD}$ with a strategy $\sigma \in \Delta([n])$ that guarantees the wrong sign of the expected score in the game $G(s,I)$ (or a wrong expected value of zero if $AG$ is the tie-breaking agent). For the sequence of information structures $I[t] = I$ (i.e., a fixed environment), by repeatedly playing $\sigma$ in every period, the agent $AG$ ensures, by the law of large numbers, that he is wrongly selected in the repeated game as well.  
\end{proof}

\section{Axiomatization of Scoring Tests}\label{ap:axiom}
A test maps the observed sequence of outcomes $\bo = (o[t])_{t \in \mathbb{N}} \in O^{\mathbb{N}}$—that is, the sequence of actions and welfare realizations of the DM’s actions—into a selection of either the DM or the AD. Roughly speaking, \emph{empirical measurability} requires that the test depend only on the limiting frequency of each outcome $o \in O$ (recall that $O$ is a finite set). To extend this definition to non-convergent sequences, one may fix a selection rule over the partial limits, such as the lexicographic selection.

Formally, for a finite $T \in \mathbb{N}$, let $\emp_T(\bo) \in \Delta(O)$ denote the empirical distribution of the first $T$ periods—that is, the uniform distribution over the outcomes $o[1], \ldots, o[T]$. Let $\Lim(\bo) \subset \Delta(O)$ denote the closed set of all partial limits of the sequence $(\emp_T(\bo))_{T \in \mathbb{N}}$.

We fix a selection function $\lex : 2^{\Delta(O)} \to \Delta(O)$ that selects the minimal element from any closed subset of $\Delta(O)$ according to a lexicographic order, given some fixed enumeration of the elements of $O$ (used to define the lexicographic hierarchy). We then define $L^*(\bo) = \lex(\Lim(\bo)) \in \Lim(\bo)$ as the \emph{limit of the empirical frequency} of the sequence $\bo$. In particular, if $\bo$ is a convergent sequence, then $\Lim(\bo)$ is a singleton, and $L^*(\bo)$ coincides with its limit.

We are now ready to define the property of empirical measurability.

\begin{definition}
A test $\theta : O^{\mathbb{N}} \to \{DM, AD\}$ is said to be \emph{empirically measurable} if, for every pair of observation sequences $\bo, \bo' \in O^{\mathbb{N}}$ satisfying $L^*(\bo) = L^*(\bo')$, we have $\theta(\bo) = \theta(\bo')$.
\end{definition}
That is, in an empirically measurable test, the limit frequency $L^*$ determines the winner. For such tests, we can equivalently view $\theta$ as a mapping $\theta : \Delta(O) \to {DM, AD}$.

The \emph{convexity} property further requires that the set of winning frequencies corresponding to each agent be convex.

\begin{definition}
An empirically measurable test $\theta : \Delta(O) \to \{DM, AD\}$ is said to be \emph{convex} if, for every two empirical frequencies $L^*_1, L^*_2 \in \Delta(O)$ such that $\theta(L^*_1) = \theta(L^*_2) = AG$, we have $\theta(\tfrac{1}{2}L^*_1 + \tfrac{1}{2}L^*_2) = AG$, for $AG \in \{DM, AD\}$.
\end{definition}

Convexity can be motivated by considering two convergent sequences $\bo$ and $\bo'$, and an alternating sequence $\bo''$ in which $\bo$ occurs in the odd periods and $\bo'$ occurs in the even periods. Notice that $L^*(\bo'') = \tfrac{1}{2}L^*_1 + \tfrac{1}{2}L^*_2$. If, by observing only the odd periods in $\bo''$, the conclusion is that $AG$ is the more-informed agent, and by observing only the even periods in $\bo''$, the conclusion is again that $AG$ is the more-informed agent, then it is natural to assume that observing the entire sequence $\bo''$ should lead to the same conclusion.

Empirical measurability, together with convexity, almost fully determines the test to be a scoring test, as stated in the following observation.

\begin{observation}\label{lem:sc}
    If a test $\theta$ is empirically measurable and convex, then there exists a scoring function $s:O \to \R$ such that 
    \begin{align*}
       s(L^*(\bo))>0 \Rightarrow \theta(L^*)=DM \ \text{ and } \ s(L^*(\bo))<0 \Rightarrow \theta(L^*)=AD. 
    \end{align*}
\end{observation}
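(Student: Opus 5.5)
The plan is to apply a separating-hyperplane argument in the finite-dimensional simplex $\Delta(O)\subset\R^{O}$. Through the identification of $\theta$ with a map on $\Delta(O)$, set $W_{DM}=\{\mu\in\Delta(O):\theta(\mu)=DM\}$ and $W_{AD}=\Delta(O)\setminus W_{DM}$. Iterating the midpoint convexity axiom, both sets are closed under dyadic convex combinations. The first step upgrades this to genuine convexity, at least up to boundary points. One route is to work with relative interiors: if $\mu,\nu\in W_{DM}$ and $\lambda$ is arbitrary, approximate $\lambda$ by dyadic rationals. Boundary behaviour is irrelevant to the statement, since it only constrains points where $s\neq 0$ and says nothing on the hyperplane $s=0$. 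The cleaner plan is therefore to replace $W_{DM}$ and $W_{AD}$ by their convex hulls $C_{DM}$ and $C_{AD}$, and to show that these hulls have disjoint relative interiors.

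The second step is to show that the interiors do not meet. Suppose $\mu$ lies in the interior of both $C_{DM}$ and $C_{AD}$. A point in the interior of a convex hull in dimension $d$ is a convex combination of finitely many points of the set. Perturbing those points slightly keeps $\mu$ inside their hull. Dyadic combinations of points of $W_{DM}$ remain in $W_{DM}$, and the dyadic-rational combinations of the perturbed generators are dense around $\mu$. So a whole neighbourhood of $\mu$ contains a dense set of $W_{DM}$-points, and by the same reasoning a dense set of $W_{AD}$-points.

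Density alone is not a contradiction, so more is needed. If $\mu$ is a dyadic combination of $W_{DM}$-points, then $\mu\in W_{DM}$ by iterated midpoint convexity. It is also a dyadic combination of $W_{AD}$-points, so $\mu\in W_{AD}$, contradicting disjointness. I therefore need a single point that is simultaneously a dyadic combination of $W_{DM}$-generators and of $W_{AD}$-generators. I get it by choosing the generators with dyadic-rational coordinates. For this I exploit that $W_{DM}$ is dense near $\mu$ only in the weak sense above, and I adjust the point $\mu$ to a dyadic rational point in the interior of both hulls. This is the step I expect to be the main technical obstacle, namely arranging exact dyadic coefficients on both sides at once. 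A fallback is to note that the lemma only needs conclusions off the hyperplane. If the hulls' interiors overlapped, one could take a small simplex of $W_{DM}$-points around a dyadic point and a small simplex of $W_{AD}$-points containing it. Since the barycentric coordinates of a dyadic point relative to dyadic vertices are rational, one could then first refine the vertices to dyadic points and use that rational weights suffice once midpoint convexity is iterated with repeated copies.

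Once the interiors are disjoint, the third step is to invoke the separating hyperplane theorem for two convex sets with disjoint relative interiors in $\R^{O}$ restricted to the affine hull of $\Delta(O)$. This gives a nonzero linear functional $\ell$ and a constant $\tau$ with $\ell\ge\tau$ on $C_{DM}$ and $\ell\le\tau$ on $C_{AD}$. Define $s(o)=\ell(\delta_o)-\tau$. Because the elements of $\Delta(O)$ sum to one, $s(\mu)=\sum_o\mu(o)s(o)=\ell(\mu)-\tau$. Now if $s(\mu)>0$ then $\mu\notin C_{AD}\supseteq W_{AD}$, so $\theta(\mu)=DM$; symmetrically, $s(\mu)<0$ gives $\theta(\mu)=AD$. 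The degenerate case where one of the sets is empty is handled by a constant $s$.
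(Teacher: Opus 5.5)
Your Step~3 is the paper's argument. The paper treats $W_{DM}$ and $W_{AD}$ as disjoint convex sets covering $\Delta(O)$, separates them by a hyperplane, and sets $s(o)$ equal to the separating linear function at $\delta_o$. The genuine gap is in Steps~1--2, where you try to derive disjoint interiors of $C_{DM}$ and $C_{AD}$ from the midpoint axiom alone. That claim is false, not merely hard to prove.

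Here is a counterexample. Fix a Hamel basis of $\R$ over $\mathbb{Q}$ containing $1$ and $\sqrt{2}$, and let $\phi:\R\to\R$ be the $\sqrt{2}$-coordinate, which is $\mathbb{Q}$-linear. Fix $o_1\in O$ and set $\theta(\mu)=DM$ iff $\phi(\mu(o_1))>0$. Since $\phi$ is $\mathbb{Q}$-linear, both $W_{DM}=\{\mu:\phi(\mu(o_1))>0\}$ and $W_{AD}=\{\mu:\phi(\mu(o_1))\le 0\}$ are closed under all rational convex combinations. So this $\theta$ is empirically measurable and convex in the paper's sense.

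Both sets are nevertheless dense in $\Delta(O)$. Inside any relatively open set you can vary $\mu(o_1)$ over an interval. Then $t=r+\sqrt{2}/N$ with $r\in\mathbb{Q}$ gives $\phi(t)>0$, while every rational $t$ gives $\phi(t)=0$. Hence $C_{DM}$ and $C_{AD}$ both contain the relative interior of $\Delta(O)$.

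This breaks your Step~2 at two points. Every point with rational coordinates lies in $W_{AD}$, so you cannot ``refine the vertices'' of a $W_{DM}$-simplex to dyadic points. And no point can be a rational combination of $W_{DM}$-points and of $W_{AD}$-points at once, since it would then lie in both sets. Worse, for this $\theta$ no scoring function with some $s(o)\neq 0$ satisfies the conclusion. If $s(o)>0$, then $\{s>0\}$ contains a relative neighbourhood of $\delta_o$, which meets the dense set $W_{AD}$; the case $s(o)<0$ is symmetric.

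What is missing is a hypothesis, not a cleverer approximation. The displayed implications hold vacuously for $s\equiv 0$, so the real content is a nonzero $s$, and that needs genuine convexity. The paper's proof implicitly reads ``convex'' as closure under all real weights $\lambda\in[0,1]$. Under that reading $C_{DM}=W_{DM}$ and $C_{AD}=W_{AD}$ are disjoint, and your Step~3 applies directly, since disjoint convex sets in finite dimension have disjoint relative interiors and can be properly separated. Alternatively, you could assume $W_{DM}$ is relatively open or closed, as in Observation~\ref{lem:sc2}. That upgrades the midpoint property of $W_{DM}$ and of its complement to full convexity, and your argument then goes through.
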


\begin{proof}[Proof of Observation \ref{lem:sc}]
The two winning sets corresponding to the agents are disjoint convex sets whose union covers the simplex $\Delta(O)$. Consequently, each of these sets must be a half-space. Let $L : \Delta(O) \to \mathbb{R}$ denote the linear function that defines these half-spaces; that is, $L$ is positive on one of them and negative on the other. The scoring function $s(o)$ is then defined as the value of $L$ at the Dirac measure corresponding to the observation~$o$.
\end{proof}

This characterization is almost identical to the definition of a scoring test, except for the tie-breaking rule (i.e., the case $s(L^*(\bo)) = 0$). In scoring tests, the tie-breaking rule is required to select the same agent in all such cases, whereas empirically measurable and convex tests impose no such restriction. To precisely characterize the class of scoring tests, one can additionally require that the winning set of one of the agents be either open or closed.

\begin{observation}\label{lem:sc2}
    If a test $\theta$ is empirically measurable and convex, and if the set $W_{DM} = \{L^* : \theta(L^*) = DM\}$ is either open or closed, then $\theta$ is a scoring test.
\end{observation}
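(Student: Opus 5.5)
We need to prove Observation~\ref{lem:sc2}: if $\theta$ is empirically measurable and convex, and $W_{DM}$ is open or closed, then $\theta$ is a scoring test. That is, there is a scoring function $s$ and a fixed tie-breaking agent such that $\theta$ selects DM exactly when $s(L^*)>0$, or $s(L^*)=0$ and DM is the tie-breaker.

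The plan has three steps.

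\textbf{Step 1: the separating hyperplane.} Start from Observation~\ref{lem:sc}. The winning sets $W_{DM}$ and $W_{AD}=\Delta(O)\setminus W_{DM}$ are disjoint convex sets covering $\Delta(O)$. We obtain a linear functional $L$ on $\Delta(O)$, identified with $s(o)=L(\delta_o)$, such that $W_{DM}\supseteq\{L>0\}$ and $W_{AD}\supseteq\{L<0\}$.

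Degenerate cases are handled separately. If one winning set is empty, take $s\equiv 0$ with the other agent as tie-breaker. If both sets are nonempty, choose $s$ via separation of the two convex sets, so that $s$ is nonconstant on $\Delta(O)$.

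\textbf{Step 2: the hyperplane slice.} Let $H=\{L=0\}\cap\Delta(O)$. This slice is partitioned by the convex sets $C_{DM}=W_{DM}\cap H$ and $C_{AD}=W_{AD}\cap H$. We must show that one of these two sets is empty.

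Assume $W_{DM}$ is closed; the open case follows by swapping roles, since then $W_{AD}$ is closed. Then $C_{DM}$ is closed in $H$. The main task is to show $C_{AD}$ is also closed in $H$.

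Take $x\in H$ that is a limit of points $x_k\in C_{AD}$, and suppose for contradiction that $x\in W_{DM}$. Pick a point $z$ with $L(z)<0$, which lies in $W_{AD}$. Such a $z$ exists when $\{L<0\}$ meets $\Delta(O)$. If it does not, then $L\ge0$ on $\Delta(O)$, so $W_{AD}\subseteq H$, and the argument is adapted with the roles exchanged.

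Consider the points $y_k=\tfrac12 x_k+\tfrac12 z$. By convexity of $W_{AD}$, each $y_k$ lies in $W_{AD}$. These points converge to $y=\tfrac12 x+\tfrac12 z$, and $L(y)<0$, so $y\in W_{AD}$. This alone gives no contradiction.

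The contradiction must instead come from $H$ itself. Here $H$ is a convex compact set partitioned into two disjoint convex sets $C_{DM}$ (closed) and $C_{AD}$. Suppose both are nonempty. Take $a\in C_{DM}$ and $b\in C_{AD}$. On the segment $[a,b]$, the set $C_{DM}$ is a closed subsegment containing $a$, and $C_{AD}$ is the complementary half-open subsegment.

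This configuration is consistent with convexity, so a pure topological argument fails. The right move is to \emph{re-choose the hyperplane} rather than try to empty a slice of the original one.

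\textbf{Step 3: induction on dimension.} Within the affine hull of $H$, separate $C_{DM}$ and $C_{AD}$ by a new functional $L'$. Replace $L$ by $L+\eta L'$ for small $\eta$, or argue lexicographically.

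The cleaner route is to show that closedness of $W_{DM}$ forces the separating functional to be chosen as $\sup$-supporting. Concretely, take $L$ so that $W_{DM}$ is exactly $\{L\ge 0\}$. Since $W_{DM}$ is a closed convex set whose complement $W_{AD}$ is also convex, $W_{DM}$ is a closed convex set with convex complement in the simplex. Hence $W_{DM}$ is the intersection of $\Delta(O)$ with a closed half-space: its boundary relative to $\Delta(O)$ is a single hyperplane section.

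To prove this last claim, note that if the boundary of $W_{DM}$ contained two non-coplanar supporting pieces, then a segment between two points of $W_{AD}$ would pass through the interior of $W_{DM}$, violating convexity of $W_{AD}$. With this shown, DM is the tie-breaking agent. In the open case, $W_{DM}=\{L>0\}$ and AD is the tie-breaking agent.

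\textbf{Main obstacle.} The hardest step is this final claim, that a closed convex set with convex complement in the simplex is a closed half-space section. I would prove it by taking $L$ from the separation theorem applied to the relatively open convex set $\mathrm{ri}\,W_{AD}$ and the set $W_{DM}$. I would then argue that any point of $H$ lying in $W_{AD}$ can be pushed by a small step into $\{L>0\}$ while staying in $W_{AD}$, using the other element of the partition.
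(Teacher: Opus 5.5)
\textbf{There is a genuine gap, and no proof can close it: the statement is false as written.} The gap sits exactly at the claim you flag as the main obstacle. A closed convex subset of $\Delta(O)$ with convex complement need \emph{not} be of the form $\{s\ge 0\}\cap\Delta(O)$.

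Take three distinct outcomes $a,b,c\in O$. Let $W_{DM}$ be the closed segment from $\delta_a$ to $m=\tfrac12\delta_a+\tfrac12\delta_b$, i.e.\ the distributions supported on $\{a,b\}$ with $x_b\le\tfrac12$. Its complement $W_{AD}$ consists of all distributions with positive mass outside $\{a,b\}$, together with those supported on $\{a,b\}$ with $x_b>\tfrac12$. $W_{AD}$ is convex, because any strict mixture involving a point with mass outside $\{a,b\}$ keeps such mass.
\begin{itemize}
\item $W_{DM}$ is closed but not relatively open, since the points $(1-\epsilon)\delta_a+\epsilon\delta_c$ lie in $W_{AD}$. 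Hence $W_{DM}\neq\{s>0\}\cap\Delta(O)$ for every $s$.
\item Suppose $W_{DM}=\{s\ge0\}\cap\Delta(O)$. The same points force $s(\delta_a)\le 0$, hence $s(\delta_a)=0$. Also $\delta_b\in W_{AD}$ forces $s(\delta_b)<0$. Then $s(m)<0$, although $m\in W_{DM}$.
\end{itemize}
So this $\theta$ is empirically measurable and convex, has closed $W_{DM}$, and is not a scoring test. Swapping the roles of the two agents gives the open case.

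Your ``non-coplanar supporting pieces'' argument implicitly assumes $W_{DM}$ has nonempty interior; here it lies in a proper face. This is precisely the closed-subsegment/half-open-subsegment configuration you found in Step~2. Re-choosing the hyperplane cannot remove it: every admissible separator has $s\le 0$ on $\Delta(O)$ and vanishes on the whole edge $[\delta_a,\delta_b]$. Your final suggested move also does not make sense as written: $\{L>0\}\subseteq W_{DM}$, so nothing can be pushed into $\{L>0\}$ while staying in $W_{AD}$. The paper's one-sentence justification, that openness or closedness ``enforces a consistent selection'' along $\{s=0\}$, breaks down in exactly this degenerate case.

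\textbf{What can be proved needs a nondegeneracy hypothesis.} For example, assume both winning sets have nonempty interior relative to $\Delta(O)$. This guarantees that the separator $s$ from Observation~\ref{lem:sc} takes both signs on $\Delta(O)$. The argument is then one line, and it is the one you missed in Step~2 by choosing $z$ on the wrong side. Suppose $W_{DM}$ is closed, and take $z$ with $s(z)>0$. For any $x\in\Delta(O)$ with $s(x)=0$, the points $(1-\tfrac1k)x+\tfrac1k z$ lie in $\{s>0\}\subseteq W_{DM}$ and converge to $x$. Hence $x\in W_{DM}$, so $W_{DM}=\{s\ge0\}\cap\Delta(O)$ with DM as tie-breaker. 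The open case is symmetric, with $z$ taken on the negative side and AD as tie-breaker. No induction on dimension or lexicographic perturbation is needed.
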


\begin{proof}[Proof of Observation \ref{lem:sc2}]
    The additional requirement that the winning set be open or closed enforces a consistent selection of the same agent along the boundary between the two winning sets (i.e., where $s(L^*(\bo)) = 0$).
\end{proof}
\end{document}